\documentclass[
10pt,
aps,
prx,
twocolumn,
superscriptaddress,
floatfix,
nofootinbib
]{revtex4-2}
\usepackage{pretex}

\usepackage[english]{babel}
\usepackage{silence}
\usepackage{bbm}
\usepackage{algorithm}
\def\be{\begin{eqnarray}}
\def\ee{\end{eqnarray}}
\DeclareMathOperator{\Tr}{Tr} 
\DeclareMathOperator{\EX}{\mathbb{E}}

\newcommand{\bb}[1]{\mathbb{#1}}

\usepackage[capitalise]{cleveref}
\crefformat{equation}{Eq.~(#2#1#3)}
\crefformat{section}{Sec.~#2#1#3}
\Crefformat{equation}{Equation~(#2#1#3)}
\crefformat{figure}{Fig.~#2#1#3}
\crefrangeformat{equation}{Eqs.~#3(#1)#4--#5(#2)#6}
\Crefformat{section}{Section~#2#1#3}

\begin{document}

\title{A unified loss formalism for improving variational quantum algorithms}
\title{Quantum Tilted Loss: A Unified Loss Family for Variational Quantum Algorithms}
\title{Quantum Tilted Loss in Variational Optimization: Theory and Applications}

\author{Yixian Qiu}
\email{yixian\_qiu@u.nus.edu}
\affiliation{Centre for Quantum Technologies, National University of Singapore, 3 Science Drive 2, Singapore 117543}
\author{Josep Lumbreras}
\email{josep.lz@ntu.edu.sg}
\affiliation{Centre for Quantum Technologies, National University of Singapore, 3 Science Drive 2, Singapore 117543}
\affiliation{Nanyang Quantum Hub, School of Physical and Mathematical Sciences, Nanyang Technological University, Singapore}
\author{Xiufan Li}
\email{lixiufan@u.nus.edu}
\affiliation{Centre for Quantum Technologies, National University of Singapore, 3 Science Drive 2, Singapore 117543}
\author{Patrick Rebentrost}
\email{cqtfpr@nus.edu.sg}
\affiliation{Centre for Quantum Technologies, National University of Singapore, 3 Science Drive 2, Singapore 117543}
\affiliation{Department of Computer Science, National University of Singapore, 13 Computing Drive, Singapore 117417}

\date{\today}

\begin{abstract}
Variational quantum algorithms (VQAs) are leading strategies for using near-term quantum devices, with a well-studied bottleneck being their trainability. Standard expectation-value objectives with expressive circuits frequently encounter barren plateaus in the optimization landscape during training. To address this challenge, we introduce the Quantum Tilted Loss (QTL), an operator-level generalization of classical exponential tilting designed to systematically reshape the optimization landscape. By tuning a single continuous parameter, QTL can amplify gradient signals in structured settings while preserving the problem's true global minima. We provide a theoretical foundation that unifies standard expectation minimization with popular tunable heuristics, such as Conditional Value-at-Risk (CVaR) and Gibbs formulations. Deploying this framework requires balancing the geometric benefits of a sharpened landscape against the statistical cost of estimating nonlinear gradients from finite quantum measurements. We formalize this trainability-estimability trade-off, demonstrating how aggressive tilting fundamentally shifts the optimization bottleneck from landscape flatness to sample complexity. Thus, the operational bottleneck shifts from vanishing gradients to measurement sampling variance. Finally, we exhibit through numerical simulations that ascending tilt schedules can outperform fixed-tilt training in finite-shot regimes.
\end{abstract}

\maketitle
\tableofcontents

\section{Introduction}

Variational quantum algorithms (VQAs) are a leading approach for near-term quantum computation. By combining shallow parameterized circuits with a classical outer-loop optimizer, VQAs operate within the coherence and gate-fidelity constraints of current hardware without the overhead of full fault-tolerant compilation~\cite{Cerezo_2021review}. Foundational algorithms such as the Variational Quantum Eigensolver (VQE)~\cite{Peruzzo_2014} and the Quantum Approximate Optimization Algorithm (QAOA)~\cite{farhi2014} have demonstrated significant potential across quantum chemistry, condensed matter physics, and combinatorial optimization. Despite this promise, practical performance is fundamentally bottlenecked by trainability. Standard expectation-value objectives typically exhibit exponentially vanishing gradients as system size or circuit depth increases, leading to the barren plateau phenomenon and rendering gradient-based optimization ineffective at scale~\cite{McClean2018,Wang2021,Marrero2021,Holmes_2022,Larocca_2025}.

This challenge has motivated substantial effort not only in ansätz, optimizer design and initialization~\cite{Kandala_2017,Romero2018,Tang2021,Stokes_2020,Grant2019}, but also in the formulation of alternative objective functions. The standard linear expectation-value objective, while physically intuitive, actively constructs optimization landscapes dominated by barren plateaus and narrow gorges when applied to highly expressive circuits~\cite{McClean2018,Sim2019,Arrasmith2021effectofbarren,Larocca_2025}. Consequently, loss design provides an avenue for improving variational optimization. Previous efforts have introduced various heuristic modifications, including local cost functions~\cite{Cerezo2021}, the CVaR objective~\cite{Barkoutsos2020}, and Gibbs-type objectives for ansätz search~\cite{gibbs_objective}, and filtering variational objectives~\cite{Amaro2022}. Further explorations have utilized Rényi divergences~\cite{Kieferova2021}, Hilbert-Schmidt overlaps~\cite{Khatri2019}, and Quantum Wasserstein Distances~\cite{De_Palma_2021}. While effective within specific domains, these diverse approaches highlight the need for a cohesive underlying theory. 

Such a theory naturally emerges by examining the geometry of optimization. While specialized ansätz designs attempt to restrict the parameter search space, modifying the loss function directly re-engineers the underlying topology of the landscape itself. Alternative quantum objectives, such as CVaR and Gibbs-based costs, typically achieve this reshaping by introducing tunable, non-linear parameters to emphasize specific spectral outcomes. To systematically analyze this mechanism, classical risk theory provides a rigorous foundation. Originally developed to model and hedge against extreme tail risks, the statistical concept of exponential tilting reweights probability distributions by an exponential factor, normalized via a log-partition function~\cite{esscher1932probability, Bental2007, AhmadiJavid2011}. In classical machine learning, this principle is discussed as tilted empirical risk minimization~\cite{li2020tilted,li2023tilted}, using a continuous hyperparameter to smoothly interpolate between average-loss minimization and the amplification of tail outcomes. While classical applications primarily aim to achieve statistical robustness, the underlying mathematical operation fundamentally alters the objective's geometry. Consequently, classical exponential tilting provides a mathematical framework we can repurpose: it offers a language to understand how tunable, non-linear transformations dynamically reshape quantum optimization landscapes.

In this work, we introduce the Quantum Tilted Loss (QTL) as an analogue of classical entropic risk. Rather than proposing QTL as a universal remedy for barren plateaus, we use this framework to study how nonlinear parameterized transformations reshape the variational landscape, and what is the statistical cost they introduce. Across different tasks, tuning a single continuous risk parameter $\gamma$ allows the objective to shift continuously between standard expectation-value minimization, variance suppression, and ground-state projection. Specifically, applying a negative $\gamma$ emphasizes low-energy outcomes for ground-state minimization, while a positive $\gamma$ emphasizes high-energy outcomes.

However, fixing a flat optimization landscape is not free. When we use the tilt parameter to make the gradients larger and easier for the optimizer to follow, we also increase the statistical noise in our measurements. This means we need many more measurement shots to get a reliable gradient estimate. In practice, aggressive tilting does not remove the barren plateau problem by itself. 
It shifts the bottleneck from trainability to estimability: from a flat landscape to a larger sampling overhead.

\begin{figure*}[t]
    \centering
    \includegraphics[trim={0cm 10pt 0cm 10pt}, clip, width=0.7\linewidth]{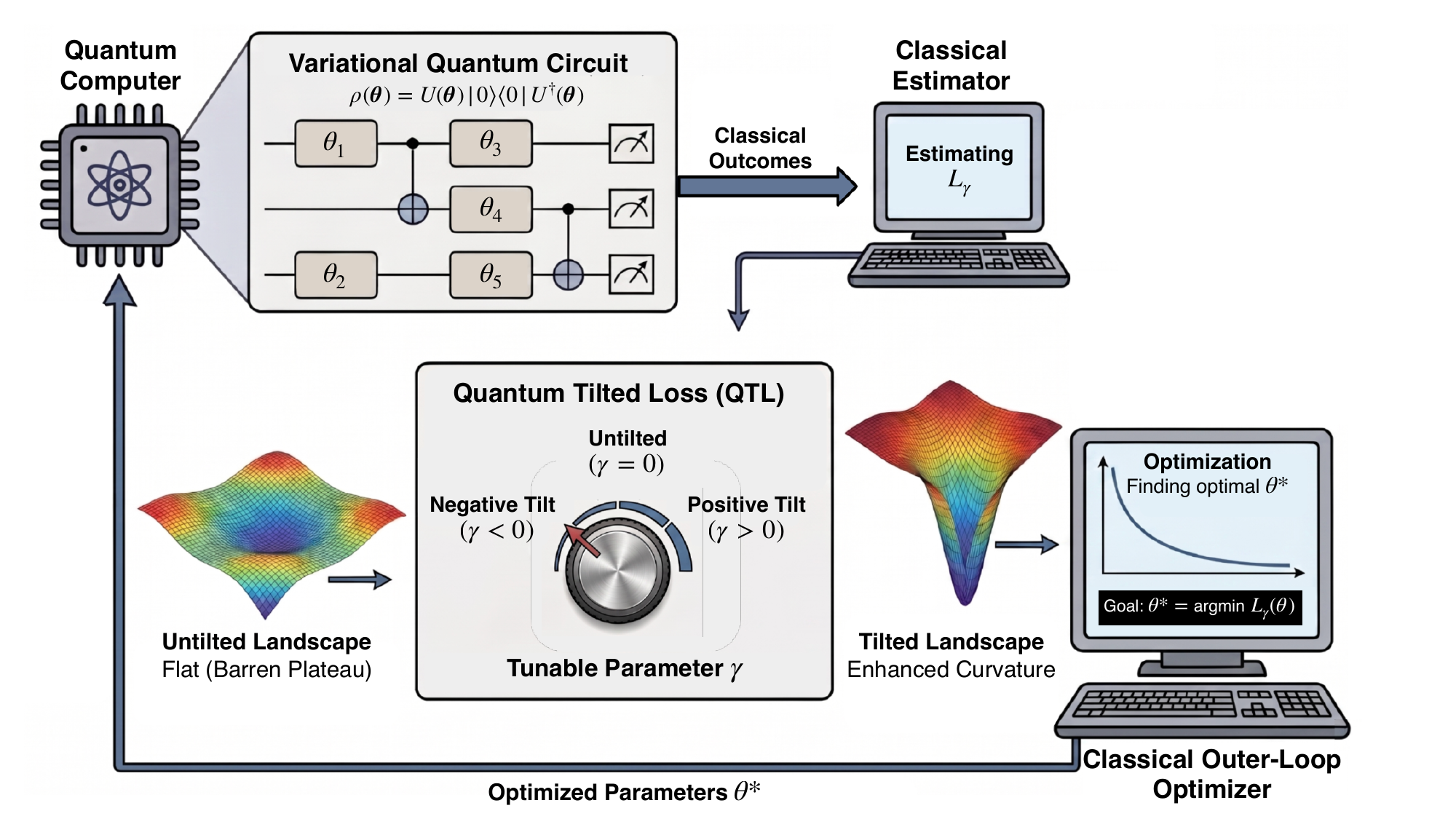}
    \caption{Schematic of the Quantum Tilted Loss (QTL) optimization loop. A parameterized quantum state is prepared via a Variational Quantum Circuit (VQC) as $|\psi(\bm{\theta})\rangle = U(\bm{\theta})|0\rangle$, corresponding to the density matrix $\rho(\bm{\theta}) = U(\bm{\theta})|0\rangle\langle0|U^\dagger(\bm{\theta})$. Classical measurement outcomes are fed into a loss calculator to evaluate the QTL, defined as $\mathcal{L}_\gamma(O,\rho(\bm{\theta})) = \frac{1}{\gamma}\log \text{Tr} \left( e^{\gamma O} \rho(\bm{\theta}) \right)$. By tuning the continuous risk parameter $\gamma$, the nonlinear exponential reweighting actively reshapes the optimization landscape. While the untilted expectation-value objective ($\gamma \to 0$) suffers from flat regions characteristic of barren plateaus, a finite tilt enhances local curvature and amplifies gradient signals. For energy minimization tasks, a negative tilt ($\gamma < 0$) is used to emphasize low-energy outcomes. A classical outer-loop optimizer subsequently updates the parameters to find the optimal $\bm{\theta}^*$ and updates the VQC to continue the optimization.}
    \label{fig:qtl_schematic}
\end{figure*}

We analyze the theoretical properties, optimization geometry, and operational limitations of the QTL framework, yielding the following main contributions:
\begin{itemize}
   \item \textbf{A theoretical framework for tunable quantum losses.}
   We introduce QTL as an outcome-level entropic tilting of the spectrum of a measured observable. 
   The objective recovers the standard expectation value at zero tilt and approaches extremal spectral outcomes at large tilt. 
   We prove that QTL is faithful, so that it preserves the global minimizers of the target Hamiltonian. 
   We also show that QTL connects existing tunable objectives: it admits a Gibbs/free-energy interpretation and gives a smooth entropic counterpart to lower-tail CVaR~\cite{Barkoutsos2020}, with an explicit correction term. 
   This turns loss reshaping into a unified object that can be analyzed theoretically.

   \item \textbf{Trainability--estimability trade-off in landscape reshaping.} 
   We study nonlinear objectives under finite-shot quantum measurements. 
   Stronger tilting can amplify gradient signals and help the optimizer escape local plateaus. 
   At the same time, estimating the resulting nonlinear gradients becomes harder, with a shot overhead that grows exponentially with the tilt.

   \item \textbf{Structured benchmark and finite-shot behavior.}
   We introduce a global-projector benchmark with observable
   $O_G=\mathbb{I}-|0\rangle\langle 0|^{\otimes n}$ and a tensor-product ansatz. 
   The model is analytically tractable and isolates the small-overlap mechanism behind landscape reshaping. 
   We use it to show how QTL preserves the global minimum while modifying the gradient and Hessian structure. 
   QAOA experiments for MaxCut then illustrate the same finite-shot trade-off: modest tilts can improve the performance, whereas overly large fixed tilts degrade performance. 
\end{itemize}
\noindent
Through this analysis, supported by numerical experiments, we provide a mathematically grounded approach to loss design, clarifying exactly when, why, and at what statistical cost tilting can improve variational quantum optimization.

\section{Quantum Tilted Loss}
\subsection{Notations and preliminaries}\label{sec:pre}

Here we assume a finite-dimensional Hilbert space. Let us denote the set of positive semidefinite operators acting on this space by
$\mathcal{P}$ and the subset of density operators (with unit trace) by $\mathcal{S}$. For a density operator $\sigma \in \mathcal{S}$ and $\rho \in \mathcal{P}$. The Umegaki quantum relative entropy between the quantum states $\sigma$ and $\rho$ is denoted by $D(\sigma \| \rho)$. We use $\|\cdot\|_\infty$ to denote the spectral norm (operator norm) of an operator.

\subsection{Definition and properties}\label{sec:def_pro}

To construct a unified, parameterized objective for variational quantum optimization, we draw upon the classical entropic risk measure \cite{Fllmer2016}. In classical probability and decision theory, the entropic risk of a loss random variable $X \sim P$ at risk sensitivity $\gamma \in \mathbb{R} \setminus \{0\}$ is defined as $\rho_\gamma(X) := \frac{1}{\gamma}\log \mathbb{E}_P[e^{\gamma X}]$~\cite{AhmadiJavid2011}. This mathematical structure translates directly to statistical machine learning: evaluating this functional over an empirical training distribution yields Tilted Empirical Risk Minimization (TERM)~\cite{li2020tilted, li2023tilted}. Across both domains, the parameter $\gamma$ smoothly interpolates the objective between the standard expected value ($\gamma \to 0$) and the extremal tail outcomes ($|\gamma| \to \infty$). 

Crucially, this risk measure admits a variational representation via Kullback-Leibler divergence regularization, where the optimal exponentially reweighted distribution is known as the classical Esscher transform (we provide a comprehensive review of these classical representations in Appendix~\ref{app:classical_risk}). The operator-algebraic analogue of this process—the quantum Esscher transform~\cite{Qiu2024Esscher}—skews a density matrix toward target spectral subspaces using an exponentiated observable. Motivated by the partition function that normalizes this quantum tilting process, we define the following natural generalization. All proofs for this subsection are provided in Appendix~\ref{app:propery-QTL}.

\begin{definition}[Quantum $\gamma$-Tilted Loss (QTL)]\label{def:QTL}
For all $\gamma \in \mathbb{R}$, given a quantum state $\rho$ and an Hermitian observable $O$, the $\gamma$-quantum tilted loss is defined as: 
\begin{align}\label{eq:qtl_def}
    \mathcal{L}_\gamma(O, \rho) := \begin{cases}
\frac{1}{\gamma} \log \mathrm{Tr}(e^{\gamma O} \rho) \quad\text{if } \gamma \neq 0, \\
\lim_{\gamma \to 0} \frac{1}{\gamma} \log \mathrm{Tr}(e^{\gamma O} \rho) = \mathrm{Tr}(O \rho).
\end{cases}
\end{align}

\end{definition}

In the below lemma we check that the above expression for $\mathcal{L}_\gamma(O, \rho)$ is well defined.

\begin{lemma}[Convergence of QTL]\label{Lemma:convergence} For any fixed observable $O$ and quantum state $\rho$, the QTL smoothly converges to the standard expectation value in the zero-tilt limit as
\be
\lim _{\gamma \rightarrow 0} \mathcal{L}_\gamma(O,\rho) := \mathcal{L}_0(O,\rho)= \tr \left( O \rho \right).
\ee
\end{lemma}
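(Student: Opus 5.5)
The plan is to reduce the statement to a one-variable calculus fact about the scalar function $f(\gamma) := \mathrm{Tr}(e^{\gamma O}\rho)$, and then recognize the limit as a derivative at zero. Let $O = \sum_i \lambda_i P_i$ be the spectral decomposition, with real eigenvalues $\lambda_i$ and orthogonal projectors $P_i$. Set $p_i := \mathrm{Tr}(P_i \rho) \ge 0$; since $\mathrm{Tr}\rho = 1$, these satisfy $\sum_i p_i = 1$. Then
\begin{align}
f(\gamma) = \sum_i p_i e^{\gamma \lambda_i}.
\end{align}
This is a finite sum of exponentials. It is therefore smooth (indeed real-analytic) in $\gamma$, strictly positive for every real $\gamma$, and satisfies $f(0) = 1$. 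In particular $g(\gamma) := \log f(\gamma)$ is well defined and smooth on all of $\mathbb{R}$, and $g(0) = 0$.

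For $\gamma \neq 0$ we then have $\mathcal{L}_\gamma(O,\rho) = \frac{g(\gamma) - g(0)}{\gamma}$. This is a difference quotient, so its limit as $\gamma \to 0$ exists and equals $g'(0)$. By the chain rule, $g'(0) = f'(0)/f(0) = f'(0)$. Differentiating the finite sum term by term gives
\begin{align}
f'(0) = \sum_i p_i \lambda_i = \mathrm{Tr}(O\rho).
\end{align}
Alternatively, one can differentiate $\mathrm{Tr}(e^{\gamma O}\rho)$ directly, using $\frac{d}{d\gamma} e^{\gamma O} = O e^{\gamma O}$ and linearity of the trace. This avoids the spectral decomposition, though the decomposition makes positivity of $f$ transparent. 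The conclusion is $\lim_{\gamma\to 0}\mathcal{L}_\gamma(O,\rho) = \mathrm{Tr}(O\rho) = \mathcal{L}_0(O,\rho)$. Thus the $\gamma = 0$ branch of Definition~\ref{def:QTL} is consistent with the $\gamma\neq 0$ branch, and $\gamma \mapsto \mathcal{L}_\gamma$ is continuous at $0$.

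For the word ``smoothly'' in the statement, I would note that $\gamma \mapsto g(\gamma)/\gamma$ extends to a smooth function at $0$. This follows because $g$ is smooth with $g(0) = 0$, so $g(\gamma)/\gamma = \int_0^1 g'(t\gamma)\,dt$, which is smooth in $\gamma$. One could also record the expansion $\mathcal{L}_\gamma = \mathrm{Tr}(O\rho) + \frac{\gamma}{2}\mathrm{Var}_\rho(O) + O(\gamma^2)$, obtained by Taylor-expanding $\log f$, though this is not needed here.

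The only point requiring care is ensuring that $\log f$ is defined, that is, $f(\gamma) > 0$. This holds because each $p_i \ge 0$ and the $p_i$ sum to one, so at least one term is strictly positive and every exponential is positive. No genuine obstacle remains; the rest is finite-dimensional calculus.
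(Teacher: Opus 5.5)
Your proof is correct and is essentially the paper's argument: the paper applies L'H\^opital's rule to $\frac{1}{\gamma}\log\mathrm{Tr}(e^{\gamma O}\rho)$ to get $\lim_{\gamma\to 0}\mathrm{Tr}(Oe^{\gamma O}\rho)/\mathrm{Tr}(e^{\gamma O}\rho)=\mathrm{Tr}(O\rho)$, which computes the same derivative of $\log\mathrm{Tr}(e^{\gamma O}\rho)$ at $\gamma=0$ that you write directly as a difference quotient. Your version is slightly more careful, since it checks that $\mathrm{Tr}(e^{\gamma O}\rho)>0$ so the logarithm is defined, and it justifies ``smoothly'' via $g(\gamma)/\gamma=\int_0^1 g'(t\gamma)\,dt$; the paper leaves both points implicit.
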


The QTL is the quantum analogue of tilted empirical risk \cite{li2023tilted}: $\gamma$ sets the emphasis on the tail of the outcome spectrum of Hermitian observable $O$, with $\gamma\rightarrow 0$ giving the expected value. Crucially, we adopt the following sign convention for minimization: a negative $\gamma$ ($\gamma < 0$) emphasizes low-energy outcomes, while a positive $\gamma$ ($\gamma > 0$) emphasizes high-energy outcomes, mirroring the classical LogSumExp continuum between mean and extremal losses.

Some elementary properties of the QTL are collected in the lemmas below.

\begin{lemma}[Basic Properties of QTL]\label{lem:prop}
The QTL satisfies the following:
\begin{enumerate}
    \item Non-negativity: $\mathcal{L}_\gamma(O,\rho) \ge 0$ if the eigenvalues of $O$ are non-negative, i.e., $O\succeq 0$.
    \item Additivity: $ \mathcal{L}_\gamma(O,\rho)=\mathcal{L}_\gamma(O_A,\rho_A)+\mathcal{L}_\gamma(O_B,\rho_B)$ if $\rho=\rho_A\otimes\rho_B$ and $O=O_A\otimes\mathbb{I}_B+\mathbb{I}_A\otimes O_B$.
    \item Shift: $\mathcal{L}_\gamma(O+c\mathbb{I},\rho)=\mathcal{L}_\gamma(O,\rho)+c$ for any constant $c\in\mathbb R$.
    \item Monotonicity in $\gamma$: $\mathcal{L}_\gamma(O,\rho)$ is non-decreasing on $\mathbb{R}$.
\end{enumerate}
\end{lemma}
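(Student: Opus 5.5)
The unifying step is to rewrite the loss as a classical entropic risk. Take the spectral decomposition $O=\sum_i \lambda_i \Pi_i$ with orthogonal projectors $\Pi_i$. Then $\mathrm{Tr}(e^{\gamma O}\rho)=\sum_i p_i e^{\gamma\lambda_i}$, where $p_i=\mathrm{Tr}(\Pi_i\rho)$ is the Born distribution. Hence $\mathcal{L}_\gamma(O,\rho)=\frac1\gamma\log\mathbb{E}_p[e^{\gamma X}]$ for the random variable $X$ taking value $\lambda_i$ with probability $p_i$. At $\gamma=0$ the loss equals $\mathbb{E}_p[X]$, which is consistent by Lemma~\ref{Lemma:convergence}. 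Each item then reduces to a statement about this scalar function, and the case $\gamma=0$ is always handled through the limit or directly via $\mathrm{Tr}(O\rho)$.

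\emph{Non-negativity.} If $O\succeq0$, then every $\lambda_i\ge0$. For $\gamma>0$ this gives $e^{\gamma\lambda_i}\ge1$, so the log is $\ge0$ and dividing by $\gamma>0$ keeps the sign. For $\gamma<0$ we have $e^{\gamma\lambda_i}\le1$, so the log is $\le0$, and dividing by the negative $\gamma$ makes the result $\ge0$. At $\gamma=0$, $\mathrm{Tr}(O\rho)\ge0$ directly.

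\emph{Additivity.} Since $O_A\otimes\mathbb{I}$ and $\mathbb{I}\otimes O_B$ commute, $e^{\gamma O}=e^{\gamma O_A}\otimes e^{\gamma O_B}$. The trace against $\rho_A\otimes\rho_B$ therefore factorizes into $\mathrm{Tr}(e^{\gamma O_A}\rho_A)\,\mathrm{Tr}(e^{\gamma O_B}\rho_B)$, and the log turns the product into a sum. At $\gamma=0$ the identity follows from linearity together with $\mathrm{Tr}\rho_A=\mathrm{Tr}\rho_B=1$.

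\emph{Shift.} We have $e^{\gamma(O+c\mathbb{I})}=e^{\gamma c}e^{\gamma O}$, so the loss picks up an additive $\frac1\gamma\cdot\gamma c=c$. At $\gamma=0$ the identity is immediate because $\mathrm{Tr}\rho=1$.

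\emph{Monotonicity.} This is the only step with real content. Let $K(\gamma)=\log\sum_i p_ie^{\gamma\lambda_i}$ be the cumulant generating function. It is smooth and convex, since $K''(\gamma)$ is the variance of $X$ under the tilted distribution $q_i\propto p_ie^{\gamma\lambda_i}$, and it satisfies $K(0)=0$. The loss is $K(\gamma)/\gamma$, the slope of the secant of $K$ from $0$ to $\gamma$. For a convex function vanishing at $0$, this secant slope is non-decreasing in $\gamma$ on $\mathbb{R}\setminus\{0\}$, including across the sign change. Its limit at $0$ equals $K'(0)=\mathbb{E}_p[X]$, which matches the definition at $\gamma=0$ and makes the function continuous there.

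An alternative is to differentiate directly: $\frac{d}{d\gamma}\frac{K}{\gamma}=\frac{\gamma K'(\gamma)-K(\gamma)}{\gamma^2}$. The numerator vanishes at $\gamma=0$ and has derivative $\gamma K''(\gamma)$, which has the same sign as $\gamma$. So the numerator is $\ge0$ on both sides of $0$, and the derivative is non-negative everywhere.

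A second alternative is Jensen or Lyapunov: $(\mathbb{E}e^{\gamma X})^{1/\gamma}$ is a power mean of $e^{X}$ and is monotone in the exponent. I would present the secant argument and note that convexity of $K$ follows from Cauchy–Schwarz, or from the tilted-variance formula.
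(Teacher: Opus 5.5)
Your proposal is correct. Items 1--3 match the paper's proof essentially line for line. The only cosmetic difference is in non-negativity, where the paper uses the operator inequalities $e^{\gamma O}\succeq\mathbb{I}$ (for $\gamma>0$) and $e^{\gamma O}\preceq\mathbb{I}$ (for $\gamma<0$) rather than arguing eigenvalue by eigenvalue.

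Item 4 is where the routes differ. The paper also passes to the Born distribution and differentiates $K(\gamma)/\gamma$. However, it does not argue via convexity. It evaluates the numerator in closed form, $\gamma K'(\gamma)-K(\gamma)=D(q^{(\gamma)}\|p)$ with $q^{(\gamma)}_i\propto p_i e^{\gamma\lambda_i}$ the Esscher-tilted distribution. This gives $\frac{d}{d\gamma}\mathcal{L}_\gamma=D(q^{(\gamma)}\|p)/\gamma^2\ge 0$ directly from non-negativity of the KL divergence.

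You instead rely on convexity of the cumulant generating function: either secant-slope (three-chord) monotonicity, or $N(0)=0$ together with $N'=\gamma K''$. These are the same inequality in different dress. The quantity $\gamma K'(\gamma)-K(\gamma)=K(0)-K(\gamma)-K'(\gamma)(0-\gamma)$ is the Bregman divergence of $K$ between $0$ and $\gamma$. Its non-negativity is exactly the tangent-line characterization of convexity, which the paper recognizes as a KL divergence.

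Each route has its own advantage. The paper's identity gives an exact, interpretable formula for the rate of change, tying into the KL-regularized and Esscher picture used elsewhere in the paper. Your secant argument needs nothing beyond convexity of $K$ and $K(0)=0$, and it handles the point $\gamma=0$ explicitly: $K'(0)$ sits between the left and right secant slopes. The paper's derivative formula is valid only for $\gamma\neq 0$, so the paper leaves that step implicit, tacitly relying on continuity from Lemma~\ref{Lemma:convergence}.

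Your power-mean remark is also valid, since $\mathcal{L}_\gamma=\log\bigl(\mathbb{E}[(e^{X})^{\gamma}]\bigr)^{1/\gamma}$.
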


\begin{lemma}[$\gamma$-Limits of QTL]\label{lem:limit}
Let $O = \sum_i \lambda_i \Pi_i$ be the spectral decomposition of an observable $O$, where $\Pi_i$ are the corresponding eigenspace projectors. Given a quantum state $\rho$, let $p_i = \mathrm{Tr}(\rho \Pi_i)$. Define $m_\rho$ and $M_\rho$ as the minimum and maximum eigenvalues of $O$ that have a strictly positive  weight on $\rho$ as $m_\rho = \min \{ \lambda_i \mid p_i > 0 \}, M_\rho = \max \{ \lambda_i \mid p_i > 0 \}$. The QTL exhibits the following limiting behavior:
\begin{align}
    \lim_{\gamma \to -\infty} \mathcal{L}_\gamma(O,\rho) &= m_\rho \quad (\text{Minimum})\\\nonumber
    \lim_{\gamma \to 0} \mathcal{L}_\gamma(O,\rho) &= \tr(O\rho) \quad (\text{Expectation Value}) \\\nonumber
    \lim_{\gamma \to +\infty} \mathcal{L}_\gamma(O,\rho) &= M_\rho \quad (\text{Maximum}) 
\end{align}
\end{lemma}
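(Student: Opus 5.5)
The plan is to reduce everything to a scalar statement about a finite discrete distribution. Using the spectral decomposition $O=\sum_i \lambda_i \Pi_i$, functional calculus gives $e^{\gamma O}=\sum_i e^{\gamma\lambda_i}\Pi_i$. Hence
\begin{align}
\mathrm{Tr}(e^{\gamma O}\rho)=\sum_{i:\,p_i>0} p_i e^{\gamma \lambda_i},
\end{align}
with $p_i=\mathrm{Tr}(\rho\Pi_i)\ge 0$ and $\sum_i p_i=1$. The QTL is therefore exactly the classical entropic risk $\frac{1}{\gamma}\log \mathbb{E}_P[e^{\gamma X}]$ of the outcome random variable $X$, taking value $\lambda_i$ with probability $p_i$. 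The terms with $p_i=0$ drop out, which is why only $m_\rho$ and $M_\rho$, and not the extreme eigenvalues of $O$, appear. The middle limit is Lemma~\ref{Lemma:convergence}, which we invoke directly.

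For $\gamma\to+\infty$, factor out the dominant exponential:
\begin{align}
\mathcal{L}_\gamma(O,\rho)=M_\rho+\frac{1}{\gamma}\log\Big(\sum_{i:\,p_i>0} p_i e^{\gamma(\lambda_i-M_\rho)}\Big).
\end{align}
For $\gamma>0$, the sum inside the logarithm satisfies two bounds. It is at most $\sum_i p_i=1$, since every exponent $\gamma(\lambda_i-M_\rho)$ is $\le 0$. It is at least $p_{i^*}>0$, where $i^*$ is the index attaining $M_\rho$. The logarithm thus lies in $[\log p_{i^*},0]$, a bounded interval independent of $\gamma$, so dividing by $\gamma\to\infty$ sends the correction to zero. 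This gives the squeeze $M_\rho+\frac{1}{\gamma}\log p_{i^*}\le \mathcal{L}_\gamma\le M_\rho$.

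For $\gamma\to-\infty$, factor out $e^{\gamma m_\rho}$ instead. Since $\gamma<0$ and $\lambda_i\ge m_\rho$, the exponents $\gamma(\lambda_i-m_\rho)$ are $\le0$, so the sum again lies in $[p_{j^*},1]$, where $j^*$ attains $m_\rho$. Dividing a bounded quantity by $\gamma\to-\infty$ again vanishes. The main subtlety is the sign flip: multiplying by $1/\gamma<0$ reverses the inequalities, giving $m_\rho\le\mathcal{L}_\gamma\le m_\rho+\frac{1}{\gamma}\log p_{j^*}$. One could alternatively derive this case from the $+\infty$ case via the identity $\mathcal{L}_{\gamma}(O,\rho)=-\mathcal{L}_{-\gamma}(-O,\rho)$, noting that the maximum supported eigenvalue of $-O$ is $-m_\rho$.

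The only real obstacle is bookkeeping. One must restrict to indices with $p_i>0$, so that the lower bound on the sum is strictly positive and the logarithm stays finite. One must also handle the degenerate case where $\rho$ is supported on a single eigenspace, where the statement is trivial since $\mathcal{L}_\gamma$ is then constant in $\gamma$. Monotonicity from Lemma~\ref{lem:prop} guarantees that both limits exist, but the explicit squeeze makes that unnecessary.
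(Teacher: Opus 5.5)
Your proposal is correct and follows essentially the same route as the paper. You reduce $\mathrm{Tr}(e^{\gamma O}\rho)$ to the Born-weighted sum $\sum_{i:\,p_i>0}p_i e^{\gamma\lambda_i}$, squeeze it between $p_{j^*}e^{\gamma m_\rho}$ and $e^{\gamma m_\rho}$ (respectively $p_{i^*}e^{\gamma M_\rho}$ and $e^{\gamma M_\rho}$), and take the $\gamma\to 0$ case from Lemma~\ref{Lemma:convergence}. Your sign bookkeeping for $\gamma<0$, giving $m_\rho\le\mathcal{L}_\gamma\le m_\rho+\frac{1}{\gamma}\log p_{j^*}$, is the correct orientation; the paper's displayed chain for $\gamma\to-\infty$ has both inequalities reversed, though its squeeze conclusion is unaffected.
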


\begin{lemma}[Faithfulness of QTL]\label{lem:faithfulness}
Let $O$ be an observable with minimum eigenvalue $o_{\min}$ and corresponding eigenspace projector $\Pi_{\min}$. For any non-zero tilt parameter $\gamma \in \mathbb{R} \setminus \{0\}$, the global minimum of the Quantum Tilted Loss over the set of all density operators $\mathcal{S}$ is exactly the ground state energy of $O$:
\begin{align}
    \min_{\rho \in \mathcal{S}} \mathcal{L}_\gamma(O,\rho) = o_{\min}.
\end{align}
Furthermore, a state $\rho^\star \in \mathcal{S}$ achieves this minimum if and only if it is entirely supported on the ground state eigenspace, $\tr(\rho^\star \Pi_{\min}) = 1$.
\end{lemma}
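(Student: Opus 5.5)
The plan is to reduce everything to the classical distribution of outcomes of $O$ on $\rho$. Write the spectral decomposition $O=\sum_i \lambda_i \Pi_i$ and set $p_i=\mathrm{Tr}(\rho\Pi_i)$, so that $p_i\ge 0$ and $\sum_i p_i=1$. Then
\begin{align}
\mathrm{Tr}(e^{\gamma O}\rho)=\sum_i p_i e^{\gamma\lambda_i}, \qquad \mathcal{L}_\gamma(O,\rho)=\frac{1}{\gamma}\log\sum_i p_i e^{\gamma\lambda_i}.
\end{align}
Since $\lambda_i\ge o_{\min}$ for all $i$, it follows that $e^{\gamma\lambda_i}\ge e^{\gamma o_{\min}}$ when $\gamma>0$, and the reverse inequality holds when $\gamma<0$. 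Take $\gamma>0$ first. Then $\sum_i p_i e^{\gamma\lambda_i}\ge e^{\gamma o_{\min}}$. Because $\log$ is increasing and we divide by the positive number $\gamma$, this gives $\mathcal{L}_\gamma\ge o_{\min}$. For $\gamma<0$ we have $\sum_i p_i e^{\gamma\lambda_i}\le e^{\gamma o_{\min}}$, and dividing $\log$ of this by the negative number $\gamma$ flips the inequality again, so again $\mathcal{L}_\gamma\ge o_{\min}$. The bound is attained by any state supported in the range of $\Pi_{\min}$, for example $\rho=\Pi_{\min}/\mathrm{Tr}\,\Pi_{\min}$. For such a state $p_{\min}=1$, the sum equals $e^{\gamma o_{\min}}$, and the loss equals $o_{\min}$. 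Hence the minimum over $\mathcal{S}$ exists and equals $o_{\min}$.

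For the ``if and only if'' statement, the ``if'' direction is exactly the computation just done, since $\mathrm{Tr}(\rho^\star\Pi_{\min})=1$ forces $p_i=0$ for every other eigenvalue. For ``only if'', suppose $\mathcal{L}_\gamma(O,\rho^\star)=o_{\min}$. This is equivalent to $\sum_i p_i e^{\gamma\lambda_i}=e^{\gamma o_{\min}}$, which rewrites as
\begin{align}
\sum_i p_i\left(e^{\gamma\lambda_i}-e^{\gamma o_{\min}}\right)=0.
\end{align}
Every term in this sum has the same sign: nonnegative for $\gamma>0$ and nonpositive for $\gamma<0$. Moreover, $e^{\gamma\lambda_i}-e^{\gamma o_{\min}}\neq 0$ whenever $\lambda_i>o_{\min}$, because $\gamma\neq 0$ and the exponential is strictly monotone. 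So every $p_i$ with $\lambda_i\neq o_{\min}$ must vanish, and therefore $\mathrm{Tr}(\rho^\star\Pi_{\min})=1-\sum_{\lambda_i\ne o_{\min}}p_i=1$.

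There is no real obstacle here. The only point needing care is the sign bookkeeping when $\gamma<0$: both the exponential inequality and the division by $\gamma$ reverse, and the two reversals cancel. One could also shortcut the minimum value using the monotonicity and $\gamma$-limits in \cref{lem:prop} and \cref{lem:limit}, since these give $\mathcal{L}_\gamma(O,\rho)\ge m_\rho\ge o_{\min}$. However, the strict-equality characterization still needs the direct argument above, so I would present the direct spectral proof.
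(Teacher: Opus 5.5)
Your proof is correct and takes essentially the same route as the paper. The paper reduces the case $\gamma>0$ (resp.\ $\gamma<0$) to minimizing (resp.\ maximizing) the linear functional $\mathrm{Tr}(e^{\gamma O}\rho)$, whose extremal value $e^{\gamma o_{\min}}$ is the relevant extreme eigenvalue of $e^{\gamma O}$; your Born-probability version makes this same spectral argument explicit, and your same-sign argument spells out the equality characterization that the paper only asserts.
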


 This faithfulness property is fundamentally why the QTL serves as a viable objective for Variational Quantum Algorithms. When replacing the standard expectation-value objective $\tr(O\rho(\bm{\theta}))$ with the QTL $\mathcal{L}_\gamma(O,\rho(\bm{\theta}))$, Lemma~\ref{lem:faithfulness} guarantees that the target of the optimization remains completely invariant. As long as the parameterized ansätz $\rho(\bm{\theta})$ is sufficiently expressive to represent the ground state of $O$, minimizing the QTL recovers the exact same global minimum as standard expectation-value minimization. We formally analyze how the parameter $\gamma$ modifies the surrounding optimization landscape during training in Section~\ref{sec:trainability}.

The QTL also possesses a Gibbs variational property, which provides an alternative representation in terms of the Umegaki quantum relative entropy.

\begin{lemma}[Gibbs variational bound]
\label{lem:gibbs-variational}
Fix a real number $\gamma \neq 0$. Let $\rho$ be a full-rank density operator. Then the quantum tilted loss has the following variational representation:

For $\gamma>0$, it is given by a supremum:
\begin{align}
\mathcal{L}_\gamma 
\ge \sup_{\sigma} \Big\{ \tr[\sigma O] - \frac{1}{\gamma}D(\sigma\Vert \rho) \Big\}.
\end{align}

For $\gamma<0$, it is given by an infimum:
\begin{align}
\mathcal{L}_\gamma  \le \inf_{\sigma}\Big\{\tr[\sigma O] - \frac{1}{\gamma} D(\sigma\Vert \rho) \Big\},
\end{align}
where the optimization is over all density operators $\sigma$, and $D(\sigma\Vert\rho):=\tr[\sigma(\log\sigma-\log\rho)]$ is the Umegaki quantum relative entropy. Moreover, equality holds for both cases iff $[\rho,O]=0$, in which case the unique optimizer is $\sigma^\star\propto e^{\log\rho+\gamma O}$.
\end{lemma}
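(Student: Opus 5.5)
The plan is to reduce the statement to the quantum Gibbs variational principle combined with the Golden--Thompson inequality. Since $\rho$ is full rank, $\log\rho$ is a well-defined Hermitian operator. Set $H_\gamma := \log\rho + \gamma O$, $Z_\gamma := \tr e^{H_\gamma}$, and $\sigma^\star := e^{H_\gamma}/Z_\gamma$. For any density operator $\sigma$, I will first verify the identity
\begin{align}
\gamma\tr[\sigma O] - D(\sigma\Vert\rho) = \tr[\sigma H_\gamma] - \tr[\sigma\log\sigma] = \log Z_\gamma - D(\sigma\Vert\sigma^\star).
\end{align}
The second equality follows by substituting $\log\sigma^\star = H_\gamma - \log Z_\gamma\,\mathbb{I}$ into the definition of $D(\sigma\Vert\sigma^\star)$. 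Klein's inequality gives $D(\sigma\Vert\sigma^\star)\ge 0$, with equality iff $\sigma=\sigma^\star$. Therefore
\begin{align}
\sup_\sigma\{\gamma\tr[\sigma O] - D(\sigma\Vert\rho)\} = \log \tr e^{\log\rho+\gamma O},
\end{align}
and the supremum is attained uniquely at $\sigma^\star \propto e^{\log\rho+\gamma O}$.

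The second step compares $\log\tr e^{\log\rho+\gamma O}$ with the QTL. The Golden--Thompson inequality gives $\tr e^{\log\rho+\gamma O}\le \tr(e^{\log\rho}e^{\gamma O}) = \tr(\rho e^{\gamma O})$. Since $\log$ is monotone, this yields $\log Z_\gamma \le \gamma\,\mathcal{L}_\gamma(O,\rho)$. I then split on the sign of $\gamma$.

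For $\gamma>0$, dividing by $\gamma$ preserves inequalities, and $\tfrac1\gamma\sup_\sigma\{\cdots\} = \sup_\sigma\{\tr[\sigma O]-\tfrac1\gamma D(\sigma\Vert\rho)\}$. This gives the claimed lower bound on $\mathcal{L}_\gamma$.

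For $\gamma<0$, dividing by $\gamma$ flips the inequality and also turns the supremum into an infimum: $\tfrac1\gamma\sup_\sigma f(\sigma) = \inf_\sigma \tfrac1\gamma f(\sigma)$. Hence
\begin{align}
\inf_\sigma\{\tr[\sigma O]-\tfrac1\gamma D(\sigma\Vert\rho)\} = \tfrac1\gamma\log Z_\gamma \ge \mathcal{L}_\gamma(O,\rho),
\end{align}
which is the stated upper bound. In both cases the optimizer is the same $\sigma^\star$, because multiplying the objective by $1/\gamma$ does not change the optimizing argument.

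The remaining step, and the one I expect to be the main obstacle, is the equality case. Equality in the variational bound holds exactly when Golden--Thompson is tight, i.e.\ when $\tr e^{A+B}=\tr(e^Ae^B)$ with $A=\log\rho$ and $B=\gamma O$. If $[\rho,O]=0$, then $\log\rho$ and $O$ commute, so $e^{A+B}=e^Ae^B$ and equality is immediate. For the converse, I will invoke the known characterization of equality in Golden--Thompson for Hermitian matrices: equality holds iff $[A,B]=0$. This can be proved via the Lie--Trotter product formula together with strict log-majorization, or via strict convexity arguments as in Petz's treatment; I would cite it rather than reprove it. Since $\gamma\neq0$, $[\log\rho,\gamma O]=0$ is equivalent to $[\log\rho,O]=0$. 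Because $\rho$ is a function of $\log\rho$ and conversely (full rank), this is in turn equivalent to $[\rho,O]=0$. Finally, uniqueness of the optimizer $\sigma^\star$ follows from the strict positivity of $D(\sigma\Vert\sigma^\star)$ for $\sigma\neq\sigma^\star$, already established in the first step.
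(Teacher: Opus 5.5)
Your proposal is correct and follows the same route as the paper. The Gibbs (Donsker--Varadhan) variational principle gives $\sup_\sigma\{\gamma\tr[\sigma O]-D(\sigma\Vert\rho)\}=\log\tr e^{\log\rho+\gamma O}$, Golden--Thompson bounds this by $\gamma\mathcal{L}_\gamma$, and dividing by $\gamma$ (turning the supremum into an infimum when $\gamma<0$) yields both bounds. Your version is more complete than the paper's in two places: you derive the variational identity from $D(\sigma\Vert\sigma^\star)\ge 0$, and you justify the equality case via the known fact that Golden--Thompson is tight iff $[\log\rho,O]=0$, which for full-rank $\rho$ is equivalent to $[\rho,O]=0$; the paper's proof asserts the equality clause without any argument.
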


This bound makes explicit a tunable trade-off: optimizing $\mathcal{L}_\gamma$ can be interpreted as optimizing the expected observable value under an auxiliary state $\sigma$ while penalizing its
deviation from the reference state $\rho$ via relative entropy, with penalty strength $1/\gamma$.
(For $\gamma<0$, an analogous infimum form holds with the same integrand.)

\begin{remark}
By the quantum Gibbs variational principle, a natural fully commutative comparator is
\begin{align}
\mathcal{L}^c_\gamma (O,\rho) :=\tfrac{1}{\gamma}\log\tr[e^{\log \rho + \gamma O }].
\end{align}
As shown in Lemma~\ref{lem:gibbs-variational}, this comparator corresponds to a state-level tilting directly regularized by the quantum relative entropy. The unique optimizer of the variational problem in Lemma \ref{lem:gibbs-variational} does not have a simple closed form in the general non-commutative case. However, the optimizer simplifies to the quantum Esscher transformed state \cite{Qiu2024Esscher} (for full rank $\rho$) which is
\be
\sigma^\star = \frac{e^{\log\rho + \gamma O}}{\tr[e^{\log\rho + \gamma O}]}.
\ee
\end{remark}

By contrast, the QTL in Definition~\ref{def:QTL} is entirely determined by the classical outcome distribution of measuring $O$ on $\rho$. Indeed, writing the spectral decomposition $O=\sum_i o_i\Pi_i$ and defining the Born probabilities $p_i=\Tr(\rho\Pi_i)$ gives $\mathcal{L}_\gamma(O,\rho)=\frac{1}{\gamma}\log\Big(\sum_i p_i e^{\gamma o_i}\Big)$. Thus, its variational form is a \emph{classical KL-regularized} reweighting over outcomes: for $\gamma>0$,
\begin{align}
\mathcal{L}_\gamma(O,\rho) = \sup_{q\in\Delta}\left\{\sum_i q_i o_i-\frac{1}{\gamma}D_{\mathrm{KL}}(q\|p)\right\},
\end{align}
which is the analogue of the \emph{classical TERM \cite{li2020tilted, li2023tilted}} (see~\cref{eq:term_var} in Appendix) with the reference distribution replaced by $p$. Hence, Golden--Thompson implies that $\mathcal{L}^c_\gamma(O,\rho)\le \mathcal{L}_\gamma(O,\rho)$ for $\gamma>0$, with equality whenever $[O,\rho]=0$. In this sense, Definition~\ref{def:QTL} can be viewed as a ``measurement-first'' (outcome-level) entropic tilting, whereas $\mathcal{L}^c_\gamma$ performs a ``state-level'' tilting.  For $\gamma<0$, the same statements hold with the supremum replaced by an infimum, and the Golden--Thompson comparison reverses accordingly.

\subsection{Risk and thermodynamic interpretations}
\textbf{Risk interpretation: Connection to machine learning and economic risk.} In classical risk theory, $\mathcal{L}_\gamma$ is the entropic risk measure and a canonical convex-risk construction. QTL is its measurement-outcome analogue under the Born distribution \cite{Fllmer2016}. 
Let $O=\sum_i o_i \Pi_i$ be the spectral decomposition of a Hermitian observable, and define the
measurement-induced distribution $p_i:=\tr(\rho\,\Pi_i)$ over the eigenvalues $\{o_i\}$.
Then we can express the QTL as
\be\label{eq:qtl_classical_entropic}
\mathcal{L}_\gamma(O,\rho)
=\frac{1}{\gamma}\log \tr\!\big(\rho e^{\gamma O}\big)
=\frac{1}{\gamma}\log\sum_i p_i e^{\gamma o_i},
\ee
Under the standard sign convention that interprets $X$ as a loss, \cref{eq:qtl_classical_entropic} is exactly the entropic (exponential) risk functional. This formulation naturally admits a classical certainty-equivalent interpretation from the exponential utility—specifically, representing Constant Absolute Risk Aversion (CARA). We provide a detailed review of this economic framework in Appendix~\cref{app:classical_risk}.
This yields a risk-sensitive generalization of the mean as $\gamma\to 0$ recovers $\mathbb{E}[X]$, and
as we increase the magnitude of $\gamma$ it emphasizes specific tail outcomes: negative $\gamma$ isolates the lower tail (low-energy), and positive $\gamma$ isolates the upper tail.

\textbf{Physical interpretation: partition functions and free energy.}
When $O$ is a Hamiltonian $H$ and $\rho$ is a quantum state, define
\be
Z_\gamma:=\tr\!\left(e^{\gamma H}\rho\right).
\ee
This quantity is a tilted moment-generating function of the energy
under the Born distribution. Derivatives of $Z_\gamma$ at
$\gamma=0$ generate raw energy moments, while derivatives of
$K(\gamma):=\log Z_\gamma=\gamma L_\gamma$ generate energy
cumulants~\cite{Kubo1962}. This log-generating-function viewpoint
is also standard in large-deviation and statistical-mechanical
formulations, where free energies are naturally interpreted as
scaled cumulant-generating functions~\cite{Touchette2009}. Related
quantum statistical-function formalisms provide a more recent
operator-level perspective~\cite{Emori2026}. As a concrete example, consider the $n$-qubit Gibbs state at inverse temperature $\beta$,
\be
\rho_{\beta,H}=\frac{e^{-\beta H}}{Z_\beta},\qquad Z_\beta:=\tr\!\left(e^{-\beta H}\right).
\ee
For any $\gamma$,
\begin{align}
\mathcal{L}_\gamma(H,\rho_{\beta,H})
&=\frac{1}{\gamma}\log\tr\!\left(e^{\gamma H}\frac{e^{-\beta H}}{Z_\beta}\right)\\ \nonumber
&=\frac{1}{\gamma}\Big(\log Z_{\beta-\gamma}-\log Z_\beta\Big),
\label{eq:qtl_gibbs_relation}
\end{align}
where $Z_{\beta-\gamma}=\tr(e^{-(\beta-\gamma)H})$.
In the special case $\gamma=\beta$,
\be
\mathcal{L}_\beta(H,\rho_{\beta,H})
=\frac{1}{\beta}\big(\log d-\log Z_\beta\big),
\ee
with $d=2^n$ the Hilbert-space dimension.
Defining the Helmholtz free energy as
$F(\beta):=-(1/\beta)\log Z_\beta$, we obtain
\be
\mathcal{L}_\beta(H,\rho_{\beta,H})=F(\beta)+\frac{1}{\beta}\log d.
\ee
Thus, in the matched Gibbs-state setting and for fixed $\beta$ and $d$, minimizing $\mathcal{L}_\beta(H,\rho_{\beta,H})$ is equivalent to minimizing the Helmholtz free energy up to an additive constant.

\section{Relations to other loss functions}\label{sec:relations_losses}

\subsection{Relation to financial risk measure CVaR}
In this section, we compare the Quantum Tilted Loss (QTL) with alternative risk-sensitive objective functions, focusing in particular on the Conditional Value-at-Risk (CVaR) employed in~\cite{Barkoutsos2020}. In that work, CVaR was introduced as a replacement for the standard mean-energy objective in VQE and QAOA, with the specific goal of emphasizing the lowest-energy outcomes of the distribution. Unlike the traditional “expected shortfall’’ definition from finance, which concentrates on the upper tail of losses, the variant used in~\cite{Barkoutsos2020} averages only over the lowest $\alpha$-fraction of measured energies, thereby reshaping the variational landscape to amplify favourable (low-energy) samples. In contrast, the QTL modifies the objective through exponential reweighting: for negative tilt parameters, it smoothly amplifies low-energy events, while for positive tilt it emphasises high-energy fluctuations. Below, we clarify these relationships and discuss how QTL and CVaR are related.

In order to introduce the notion of $\mathrm{CVaR}$, we first define its continuous formulation and subsequently describe how both $\mathrm{CVaR}$ and the QTL can be estimated empirically from measurement samples. Let $\rho(\bm{\theta})$ be a parametrized ansätz state, where $\bm{\theta} \in \mathbb{R}^d$ is a vector of trainable parameters, and let $O$ denote the Hamiltonian of interest. Measuring the observable $O$ on the parameterized state $\rho(\bm{\theta})$ yields a classical random variable $E(\bm{\theta})$, whose underlying probability distribution $P_{\bm{\theta}}$ is given by Born's rule. Adopting general integral notation to align with standard risk literature (understood with respect to the discrete spectral measure), its expectation is
\begin{align}
\mathbb{E}_{P_{\bm{\theta}}}\!\left[E(\bm{\theta})\right]
    &= \int_{-\infty}^{\infty} e\, dP_{\bm{\theta}}(e) = \Tr\!\big( O\,\rho(\bm{\theta}) \big),
\end{align}
which recovers the usual mean energy of the ansätz.  
The cumulative distribution function associated with $P_{\bm{\theta}}$ can be expressed as
\begin{align}
F_{\bm{\theta}}(e)
    \;=\;\Pr\!\left( E(\bm{\theta}) \le e \right)
    \;=\;\int_{-\infty}^{e} dP_{\bm{\theta}}(x) .
\end{align}

The (lower-tail) Conditional Value-at-Risk at confidence level $\alpha \in (0,1)$ represents the conditional expectation $\mathbb{E}[...]$ over the strictly lowest $\alpha$-fraction of outcomes. For continuous distributions, this is defined directly using the generalized inverse CDF $q_\alpha(\bm{\theta}) = F_{\bm{\theta}}^{-1}(\alpha)$ (commonly referred to in risk theory as the Value-at-Risk, $\mathrm{VaR}_{\alpha}(\bm{\theta})$) as the conditional expectation below this quantile. However, for discrete distributions with probability mass at the quantile---which is the exact setting for finite-spectrum Hamiltonians---the definition must partially include the atom at $q_\alpha(\bm{\theta})$ to ensure exactly a probability mass of $\alpha$ is averaged. Formally, the rigorous $\mathrm{CVaR}_{\alpha}$ objective is defined as
\begin{align}
\label{eq:cvar_discrete}
\mathrm{CVaR}_{\alpha}(\bm{\theta})
    : =  \frac{1}{\alpha} (\mathbb{E}_{P_{\bm{\theta}}} \bigl[\,E(\bm{\theta})\; \mathbb{I}\!\{E(\bm{\theta}) < q_\alpha(\bm{\theta})\}\bigr] \nonumber \\
    + q_\alpha(\bm{\theta}) \bigl(\alpha - \Pr(E(\bm{\theta}) < q_\alpha(\bm{\theta}))\bigr) ).
\end{align}
This expression correctly splits the probability atom if necessary. Equivalently, utilizing the quantile function $F_{\bm{\theta}}^{-1}(u)$, it admits a unified integral representation valid for any distribution \cite{rockafellar2000optimization}:
\begin{align}
\label{eq:cvar_integral}
\mathrm{CVaR}_{\alpha}(\bm{\theta})
    \;=\; \frac{1}{\alpha} \int_{0}^{\alpha} F_{\bm{\theta}}^{-1}(u)\, du.
\end{align}
Thus $\mathrm{CVaR}_{\alpha}(\bm{\theta})$ rigorously averages over exactly the lowest $\alpha$-fraction of energy outcomes, thereby emphasising low-energy configurations in variational optimisation, as done in \cite{Barkoutsos2020}.

\bigskip

Then in the following theorem we can relate the CVaR to the QTL by adding a corrective term that ensures a rigorous lower bound between both quantities.

\begin{theorem}[Negative $\gamma$ regime]\label{thm:cvar_vs_tilt}
Let $E(\bm{\theta})$ be a real-valued random energy with law $P_{\bm{\theta}}$. 
For any $\gamma < 0$ and any $\alpha\in(0,1)$, define the tilted loss
\begin{align}
    \mathcal{L}_\gamma(E,P_{\bm{\theta}}) 
    &:= \frac{1}{\gamma}\log \EX_{P_{\bm{\theta}}} \left[ e^{\gamma E} \right] \\\nonumber
    &= \frac{1}{\gamma}\log \left( \int_{-\infty}^{\infty} e^{\gamma E} \, dP_{\bm{\theta}}(E) \right)
\end{align}
and the lower-tail Conditional Value-at-Risk $\mathrm{CVaR}_{\alpha}(\bm{\theta})$ as in \cref{eq:cvar_integral}.  
Then the following inequality holds
\begin{align}
    \mathrm{CVaR}_{\alpha}(\bm{\theta})
    \geq
    \mathcal{L}_\gamma(\bm{\theta})
    + \frac{1}{\gamma}\log\!\left( \frac{1}{\alpha} \right).
\end{align}
\end{theorem}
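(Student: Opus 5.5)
Multiplying by $\gamma$ reverses the inequality, so the plan is to reduce the claim to an exponential-moment inequality. Since $\gamma<0$, the statement is equivalent to
\begin{align}
\gamma\,\mathrm{CVaR}_{\alpha}(\bm{\theta}) \le \log \EX_{P_{\bm{\theta}}}\!\left[e^{\gamma E}\right] + \log\frac{1}{\alpha}.
\end{align}
Exponentiating, this is
\begin{align}
\alpha\, e^{\gamma\,\mathrm{CVaR}_{\alpha}(\bm{\theta})} \le \EX_{P_{\bm{\theta}}}\!\left[e^{\gamma E}\right].
\end{align}
I will prove this form directly from the quantile representation \cref{eq:cvar_integral}.

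\textbf{Step 1 (Jensen on the tail).} By \cref{eq:cvar_integral}, $\mathrm{CVaR}_\alpha(\bm{\theta})$ is the mean of $F_{\bm{\theta}}^{-1}(u)$ under the uniform probability measure on $(0,\alpha)$. The map $x\mapsto e^{\gamma x}$ is convex for every real $\gamma$, so Jensen's inequality gives
\begin{align}
e^{\gamma\,\mathrm{CVaR}_{\alpha}(\bm{\theta})} \le \frac{1}{\alpha}\int_0^\alpha e^{\gamma F_{\bm{\theta}}^{-1}(u)}\,du.
\end{align}

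\textbf{Step 2 (extend to the full range).} The integrand $e^{\gamma F_{\bm{\theta}}^{-1}(u)}$ is strictly positive, so the integral over $(0,\alpha)$ is at most the integral over $(0,1)$. If $U$ is uniform on $(0,1)$, then $F_{\bm{\theta}}^{-1}(U)$ has law $P_{\bm{\theta}}$ (the standard quantile-transform fact). Hence
\begin{align}
\int_0^1 e^{\gamma F_{\bm{\theta}}^{-1}(u)}\,du = \EX_{P_{\bm{\theta}}}\!\left[e^{\gamma E}\right].
\end{align}
Combining Steps 1 and 2 gives $\alpha e^{\gamma \mathrm{CVaR}_\alpha} \le \EX[e^{\gamma E}]$. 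Taking logarithms and dividing by $\gamma<0$, which flips the inequality back, yields the theorem.

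\textbf{Main obstacle: integrability and degenerate cases.} For finite-spectrum Hamiltonians everything is bounded and the argument above is complete. For a general real-valued law, two edge cases need a remark.
\begin{itemize}
\item If $\EX[e^{\gamma E}]=+\infty$, which requires heavy mass as $E\to-\infty$, then $\mathcal{L}_\gamma=-\infty$ and the inequality holds trivially.
\item If $\mathrm{CVaR}_\alpha$ is $-\infty$, then the right-hand integral in Step 1 diverges. This is consistent with the previous case, since Step 2 then forces $\EX[e^{\gamma E}]=+\infty$.
\end{itemize}
So the only care needed is to state the conventions $\log(+\infty)=+\infty$ and $1/\gamma\cdot(+\infty)=-\infty$ for $\gamma<0$. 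Otherwise Jensen plus the quantile transform does all the work.
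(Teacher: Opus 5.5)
Your proof is correct and follows essentially the same route as the paper: write $\EX[e^{\gamma E}]$ via the quantile transform, use positivity of $e^{\gamma F^{-1}(u)}$ to truncate to $(0,\alpha)$, apply Jensen to the convex map $x\mapsto e^{\gamma x}$ under the uniform measure on $(0,\alpha)$, then take logarithms and divide by $\gamma<0$. The only differences are that you apply Jensen before truncating rather than after, and that you add remarks on the degenerate cases $\EX[e^{\gamma E}]=+\infty$ and $\mathrm{CVaR}_\alpha=-\infty$, which the paper's proof does not address and which are sound.
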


The proof relies on restricting the moment-generating function to the lowest $\alpha$-fraction of outcomes and applying Jensen's inequality. This technique adapts the classical bounding method originally introduced by~\cite{AhmadiJavid2011} to our negative-tilt, lower-tail convention. The complete derivation is provided in Appendix~\ref{app:cvar_vs_tilt}.

We note that while the base QTL does not strictly bound CVaR on its own, this theorem proves that by adding the correction term shown in the inequality, it acts as a smooth lower-tail counterpart of CVaR. It links the standard mean-energy objective and the minimum-energy objective through exponential reweighting, rather than by truncating to the lowest $\alpha$-fraction of outcomes.

\subsubsection{Relation to alternative measures EVaR and VaR}

In risk theory, $\mathrm{CVaR}$ was introduced by Rockafellar and Uryasev~\cite{rockafellar2000optimization} as a coherent alternative to the traditional Value-at-Risk ($\mathrm{VaR}$), addressing its lack of subadditivity and insensitivity to outcomes beyond the risk threshold. We note that the bound in our previous theorem admits a direct connection to another well-established metric: the \emph{Entropic Value-at-Risk} ($\mathrm{EVaR}$) introduced by Ahmadi-Javid~\cite{AhmadiJavid2011}. 

By selecting the tilt parameter to match the extremal case, the right-hand side of our inequality becomes exactly the lower-tail Entropic Value-at-Risk defined as
\begin{align}
   \mathrm{EVaR}_\alpha(\bm{\theta}) &:= \sup_{\gamma < 0} \left\{ \mathcal{L}_\gamma(\bm{\theta}) + \frac{1}{\gamma}\log\!\left( \frac{1}{\alpha} \right) \right\}  \\
   &\leq  \mathrm{CVaR}_{\alpha}(\bm{\theta}) . \nonumber
\end{align}
This provides a Chernoff-type entropic lower bound on $\mathrm{CVaR}_\alpha$. Note that while the classical definition in~\cite{AhmadiJavid2011} uses an infimum over $\gamma > 0$, our supremum over $\gamma < 0$ is mathematically equivalent after a change of variables ($\gamma \mapsto -\gamma$) and aligns with the lower-tail convention standard in quantum optimization~\cite{Barkoutsos2020}.

Furthermore, $\mathrm{EVaR}_\alpha$ can be understood directly as the optimal threshold obtained by inverting the Chernoff bound. If we interpret the previously defined $\alpha$-quantile of the energy distribution as the lower-tail $\mathrm{VaR}_\alpha(\bm{\theta})$—a standard metric that emerged for market-risk reporting~\cite{pritsker1997evaluating}—the inverted Chernoff bound guarantees that $\Pr(E(\bm{\theta}) \le t) \le \alpha$ for all $t \le \mathrm{EVaR}_\alpha(\bm{\theta})$. This implies the strict ordering:
\begin{align}
\mathrm{EVaR}_\alpha(\bm{\theta}) \le 
\mathrm{CVaR}_\alpha(\bm{\theta}) \le
\mathrm{VaR}_\alpha(\bm{\theta}) .
\end{align}
Thus, in our lower-tail convention, $\mathrm{EVaR}$ provides a Chernoff-derived lower bound on $\mathrm{VaR}$, showing that optimizing the negative-tilt QTL smoothly controls the extreme low-energy tail of the ansätz.

\subsubsection{Tightness of QTL and CVaR via Gibbs States}

Returning to the quantum tilted loss, we further clarify its relationship with the lower-tail $\mathrm{CVaR}$ objective by showing that the inequality established in Theorem~\ref{thm:cvar_vs_tilt} is in fact essentially tight. This tightness is demonstrated through a concrete example that naturally arises in quantum settings: Gibbs states. While classical references like \cite{AhmadiJavid2011} do not provide explicit instances where the bound is tight, Gibbs states form a canonical family for which the behaviour of the tilted loss can be analysed exactly.

Intuitively, the Boltzmann weights of a Gibbs state already induce an exponential bias toward low-energy outcomes. This structure aligns perfectly with the exponential reweighting generated by a negative tilt. 

To formalize this, consider a Hamiltonian $O = \sum_{i=1}^d E_i |E_i\rangle\!\langle E_i|$ with ordered eigenvalues $E_{\min} = E_1 \le \cdots \le E_{\max} := E_d$. For an inverse temperature $\beta>0$, we define the Gibbs state as $\rho_\beta := e^{-\beta O}/Z(\beta)$. When measuring $O$ on $\rho_\beta$, we obtain a classical random energy $X$, with the ground state probability given by $p_1 := \langle E_1|\rho_\beta|E_1\rangle$. We can evaluate the tilted loss specifically for this state as:
\be
    \mathcal{L}_\gamma(\rho_\beta) := \frac{1}{\gamma}\log \Tr\!\big(e^{\gamma O}\rho_\beta\big).
\ee

By pushing the tilt to the large negative limit, we show that the entropic bound converges exactly to the ground state energy, matching the $\mathrm{CVaR}$ for sufficiently small tail probabilities. We provide the formal proof of this Theorem in Appendix~\ref{app:gibbs_proof}.

\begin{theorem}[Tightness of the entropic $\mathrm{CVaR}$ bound for Gibbs states]\label{thm:gibbs_tight}
Under the definitions above, for any confidence level $\alpha \in (0,p_1]$, the $\mathrm{CVaR}$ inequality becomes tight for Gibbs states $\rho_\beta$ as
\be
    \mathrm{CVaR}_\alpha(X) = E_{\min} = \sup_{\gamma<0} \left\{ \mathcal{L}_\gamma(\rho_\beta) + \frac{1}{\gamma}\log\!\left(\frac{1}{\alpha}\right) \right\}.
\ee
\end{theorem}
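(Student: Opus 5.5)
We prove the two equalities separately; together with Theorem~\ref{thm:cvar_vs_tilt} they sandwich the bound. First, the CVaR side. Measuring $O$ on $\rho_\beta$ gives the Born law $P(X=E_i)=e^{-\beta E_i}/Z(\beta)$, since $\rho_\beta$ is diagonal in the eigenbasis of $O$. If $E_1$ is degenerate, $p_1$ should be read as the total mass at $E_{\min}$; otherwise it is the single-level weight, and in either case $P(X=E_{\min})\ge p_1$. Hence the quantile function satisfies $F^{-1}(u)=E_{\min}$ for all $u\in(0,p_1]$. For $\alpha\le p_1$, the integral representation \cref{eq:cvar_integral} then gives $\mathrm{CVaR}_\alpha=\frac{1}{\alpha}\int_0^\alpha E_{\min}\,du=E_{\min}$.

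Second, the entropic side, which we handle with an upper and a lower bound.

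\emph{Upper bound.} Theorem~\ref{thm:cvar_vs_tilt} applies to every $\gamma<0$, so the supremum is at most $\mathrm{CVaR}_\alpha=E_{\min}$. A direct check also works: $\mathrm{Tr}(e^{\gamma O}\rho_\beta)\ge p_1e^{\gamma E_{\min}}$, and since $\gamma<0$, this gives $\mathcal{L}_\gamma\le E_{\min}+\frac{1}{\gamma}\log p_1$. Adding $\frac{1}{\gamma}\log(1/\alpha)$ yields
\begin{align}
\mathcal{L}_\gamma+\tfrac{1}{\gamma}\log\tfrac{1}{\alpha}\le E_{\min}+\tfrac{1}{\gamma}\log\tfrac{p_1}{\alpha}\le E_{\min},
\end{align}
because $p_1/\alpha\ge1$ and $\gamma<0$.

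\emph{Lower bound.} We show that the expression tends to $E_{\min}$ as $\gamma\to-\infty$. Write
\begin{align}
\mathrm{Tr}(e^{\gamma O}\rho_\beta)=\frac{1}{Z(\beta)}\sum_i e^{(\gamma-\beta)E_i}=\frac{Z(\beta-\gamma)}{Z(\beta)},
\end{align}
using the Gibbs relation already derived in the paper. Bound this from above by $e^{\gamma E_{\min}}$, which holds since the $p_i$ sum to one and $e^{\gamma E_i}\le e^{\gamma E_{\min}}$ for $\gamma<0$. This gives $\mathcal{L}_\gamma\ge E_{\min}$. Therefore
\begin{align}
\mathcal{L}_\gamma+\tfrac{1}{\gamma}\log\tfrac{1}{\alpha}\ge E_{\min}-\tfrac{1}{|\gamma|}\log\tfrac{1}{\alpha}\to E_{\min}.
\end{align}
Combining the two bounds, the supremum equals $E_{\min}$. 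Note that it is approached only in the limit and is not attained when $\alpha<1$. Alternatively, Lemma~\ref{lem:limit} gives $\mathcal{L}_\gamma\to m_{\rho_\beta}=E_{\min}$ directly, because the Gibbs state has full support.

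The main subtlety is the first step. It needs $\alpha\le p_1$ so that the whole $\alpha$-tail lies on the ground level, and one must be careful about what $p_1$ means when $E_1$ is degenerate. The tilt side is routine once we use the crude sandwich $p_1e^{\gamma E_{\min}}\le \mathrm{Tr}(e^{\gamma O}\rho_\beta)\le e^{\gamma E_{\min}}$. The Gibbs structure only matters in guaranteeing full support and an explicit $p_1$.
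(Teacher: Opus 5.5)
Your proposal is correct and follows essentially the same route as the paper. The $\alpha$-tail lies entirely on the ground level, so $\mathrm{CVaR}_\alpha = E_{\min}$, and the sandwich $p_1 e^{\gamma E_{\min}} \le \mathrm{Tr}(e^{\gamma O}\rho_\beta) \le e^{\gamma E_{\min}}$ (the paper's auxiliary lemma) gives $f(\gamma)\le E_{\min}$ for $\alpha\le p_1$ and $f(\gamma)\to E_{\min}$ as $\gamma\to-\infty$. The differences are cosmetic: you use the quantile-integral form of CVaR rather than the atom-splitting formula, and the explicit lower half of the sandwich rather than the $\gamma$-limit lemma; your remark on degenerate ground levels is a correct small refinement.
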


\subsubsection{Empirical CVaR and tilted loss for diagonal Hamiltonians}

For a general Hamiltonian $O$, evaluating the tilted loss $\mathcal{L}_\gamma(\bm{\theta}) = \frac{1}{\gamma}\log\Tr(e^{\gamma O}\rho(\bm{\theta}))$ is typically hard because computing $e^{\gamma O}$ requires diagonalising $O$. However, in variational experiments where $O$ is diagonal in the computational basis, $O = \sum_{z\in\{0,1\}^n} o(z)\,|z\rangle\!\langle z|$, each measurement outcome corresponds directly to an eigenvalue of $O$.

Suppose we prepare a parametrised quantum state $\rho(\bm{\theta})$ on $n$ qubits and measure it in the computational basis. Repeating this procedure $K$ times yields independent bit strings $z_k$, generating $K$ independent energy samples $\{E_k(\bm{\theta}) = o(z_k)\}_{k=1}^K$ drawn from the measurement-induced distribution $P_{\bm{\theta}}$. 

From these shared samples, the empirical mean energy is $\widehat{E}_K = \frac{1}{K}\sum_{k=1}^K E_k(\bm{\theta})$, which satisfies $\widehat{E}_K \xrightarrow[K\to\infty]{\text{a.s.}} \mathbb{E}_{P_{\bm{\theta}}}[E]$ by the strong law of large numbers. We can analogously define the empirical estimators for both CVaR and the tilted loss:

\paragraph{Empirical $\mathrm{CVaR}$ loss.} Following~\cite{Barkoutsos2020}, we sort the sampled energies in increasing order, $E_{(1)}(\bm{\theta}) \le E_{(2)}(\bm{\theta}) \le \cdots \le E_{(K)}(\bm{\theta})$, and define the empirical lower-tail $\mathrm{CVaR}$ estimator at level $\alpha\in(0,1)$ as
\begin{align}
\widehat{\mathrm{CVaR}}_{\alpha,K}(\bm{\theta})
    := \frac{1}{\lceil \alpha K\rceil}
        \sum_{i=1}^{\lceil \alpha K\rceil} E_{(i)}(\bm{\theta}).
\end{align}

\paragraph{Empirical tilted loss.} Focusing on the negative $\gamma < 0$ regime for energy minimization, we define the empirical tilted loss directly from the same $K$ energy samples as
\begin{align}
\widehat{\mathcal{L}}_{\gamma,K}(\bm{\theta})
    := \frac{1}{\gamma}\log\!\left(
        \frac{1}{K}\sum_{k=1}^K e^{\gamma E_k(\bm{\theta})}
    \right).
\end{align}

By the law of large numbers, both $\widehat{\mathrm{CVaR}}_{\alpha,K}(\bm{\theta})$ and $\widehat{\mathcal{L}}_{\gamma,K}(\bm{\theta})$ converge almost surely to their exact counterparts as $K\to\infty$. With both estimators constructed from the same empirical distribution, we can formulate the empirical analogue of the exact $\mathrm{CVaR}$--tilted-loss inequality.

\begin{theorem}[Empirical lower-tail CVaR vs.\ tilted loss]\label{thm:empirical_cvar_tilted}
Let $\alpha\in(0,1)$, let $\rho(\bm{\theta})$ be any parametrised quantum state, and let $O$ be a Hamiltonian diagonal in the computational basis. Measuring $\rho(\bm{\theta})$ in this basis yields $K$ independent energy samples $\{E_k(\bm{\theta})\}_{k=1}^K$ from $P_{\bm{\theta}}$. Then for every $\gamma<0$,
\begin{align}
\widehat{\mathrm{CVaR}}_{\alpha,K}(\bm{\theta})
\;\ge\;
\widehat{\mathcal{L}}_{\gamma,K}(\bm{\theta})
    + \frac{1}{\gamma}\log\!\left(\frac{1}{\alpha}\right).
\end{align}
\end{theorem}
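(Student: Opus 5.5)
The plan is to derive the empirical statement from the population inequality of Theorem~\ref{thm:cvar_vs_tilt}, applied to the empirical measure instead of $P_{\bm{\theta}}$. Let $m:=\lceil \alpha K\rceil$ and let $\widehat{P}_K := \frac{1}{K}\sum_{k=1}^K \delta_{E_k(\bm{\theta})}$. Then $\widehat{\mathcal{L}}_{\gamma,K}(\bm{\theta})$ is exactly $\mathcal{L}_\gamma(E,\widehat{P}_K)$. The only mismatch is that $\widehat{\mathrm{CVaR}}_{\alpha,K}$ averages the lowest $m$ samples, i.e.\ a mass $m/K\ge\alpha$, rather than exactly $\alpha$. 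To avoid this subtlety, I would reprove the bound directly with the sample-level argument, since it is short.

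\textbf{Step 1 (restriction).} Because $\gamma<0$, every term $e^{\gamma E_k}$ is positive. Dropping all but the $m$ smallest samples therefore gives
\begin{align}
\frac{1}{K}\sum_{k=1}^K e^{\gamma E_k} \;\ge\; \frac{1}{K}\sum_{i=1}^{m} e^{\gamma E_{(i)}} = \frac{m}{K}\cdot\frac{1}{m}\sum_{i=1}^{m} e^{\gamma E_{(i)}} .
\end{align}

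\textbf{Step 2 (Jensen).} Convexity of $x\mapsto e^{\gamma x}$ gives $\frac{1}{m}\sum_{i\le m} e^{\gamma E_{(i)}}\ge e^{\gamma \widehat{\mathrm{CVaR}}_{\alpha,K}}$. Combining this with Step 1 and taking logarithms yields $\log\frac1K\sum_k e^{\gamma E_k}\ge \log\frac{m}{K}+\gamma\,\widehat{\mathrm{CVaR}}_{\alpha,K}$.

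\textbf{Step 3 (sign handling).} Divide by $\gamma<0$, which flips the inequality:
\begin{align}
\widehat{\mathcal{L}}_{\gamma,K}(\bm{\theta}) \le \widehat{\mathrm{CVaR}}_{\alpha,K}(\bm{\theta}) + \frac{1}{\gamma}\log\frac{m}{K}.
\end{align}
Rearranging gives $\widehat{\mathrm{CVaR}}_{\alpha,K}\ge \widehat{\mathcal{L}}_{\gamma,K} - \frac{1}{\gamma}\log\frac{m}{K} = \widehat{\mathcal{L}}_{\gamma,K}+\frac1\gamma\log\frac{K}{m}$.

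\textbf{Step 4 (rounding).} It remains to compare $\frac1\gamma\log\frac{K}{m}$ with the target $\frac1\gamma\log\frac1\alpha$. Since $m\ge \alpha K$, we have $\frac{K}{m}\le\frac1\alpha$, so $\log\frac Km\le\log\frac1\alpha$. Multiplying by $\frac1\gamma<0$ reverses this to $\frac1\gamma\log\frac Km\ge\frac1\gamma\log\frac1\alpha$, which gives the claim.

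The main obstacle I expect is exactly this sign and rounding bookkeeping in Steps 3 and 4. Each division by negative $\gamma$ flips an inequality, and the ceiling has to push the correction term in the favorable direction. I will verify the chain once more carefully, but the direction above appears consistent. Measurability, diagonality of $O$, and independence of the samples play no role; the inequality holds deterministically for any realization of the samples.
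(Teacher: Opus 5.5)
Your proposal is correct and follows essentially the same argument as the paper's proof. Both restrict the empirical moment-generating function to the $m=\lceil\alpha K\rceil$ smallest samples, apply Jensen's inequality, divide by $\gamma<0$ (flipping the inequality), and absorb the rounding through $m/K\ge\alpha$. Your sign bookkeeping in Steps~3 and~4 is right.
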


Thus the empirical estimators satisfy the same ordering as the exact quantities in Theorem~\ref{thm:cvar_vs_tilt}, and we provide the corresponding proof in Appendix~\ref{app:empirical_cvar_tilt}.

\subsection{Relation to Petz–\Renyi $\alpha$-relative entropy}
We explore the connection between the Quantum Tilted Loss (QTL) and the family of Petz-\Renyi relative entropies, which are a cornerstone of quantum information theory. The Petz–\Renyi $\alpha$-relative entropy of order $\alpha \in(0,1)\cup (1,+\infty)$ is defined as
\begin{align}
D_\alpha(\sigma \|\rho) := \frac{1}{\alpha-1} \log [\tr (\sigma^\alpha \rho^{1-\alpha})].
\end{align}

A direct comparison between the QTL and this entropy is possible if we consider the specific case where the target operator is given by the logarithm of a positive operator $\sigma$, i.e., $O = \log\sigma$. Under this assumption, the QTL takes the form
\begin{align}
\mathcal{L}_\gamma(\log \sigma,\rho) = \frac{1}{\gamma}\log\tr (\sigma^{\gamma} \rho) .
\end{align}
This expression can be interpreted as a tilted, un-normalized version of the Petz–Rényi $\alpha$-relative entropy, measuring the overlap between the data state $\rho$ and the model state $\sigma$. However, unlike relative entropy, the QTL is not symmetric in its arguments and does not necessarily measure a "distance" between states. Despite these structural differences, the connection between the two quantities is mathematically rigorous: the QTL is strictly bounded above by the Petz-Rényi divergence for positive tilts. We formalize this relationship in the following theorem.

\begin{theorem}\label{thm:QTL-vs-petz}
Let $\rho,\sigma\in\mathcal S$ be density operators and let
$\gamma\in(0,1)\cup(1,\infty)$. Define
\be
L_\gamma(\log\sigma,\rho)
:=
\frac{1}{\gamma}\log\Tr(\sigma^\gamma\rho).
\ee
If $\gamma>1$, assume 
$\operatorname{supp}(\sigma)\subseteq \operatorname{supp}(\rho)$ so that 
$D_\gamma(\sigma\Vert\rho)<\infty$. Then
\be\label{eq:QTL-petz1}
\mathcal{L}_\gamma(\log \sigma,\rho) \le \frac{\gamma-1}{\gamma}D_\gamma(\sigma\Vert\rho).
\ee
Equality holds iff $\rho$ is pure.
\end{theorem}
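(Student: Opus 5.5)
The plan is to reduce the claim to a trace inequality between $\rho$ and $\rho^{1-\gamma}$ tested against the positive operator $\sigma^\gamma$. By definition,
\begin{align}
\frac{\gamma-1}{\gamma}D_\gamma(\sigma\Vert\rho)=\frac{1}{\gamma}\log\Tr\big(\sigma^\gamma\rho^{1-\gamma}\big).
\end{align}
Since $\gamma>0$ in both regimes, the map $x\mapsto\frac{1}{\gamma}\log x$ is increasing. Hence \cref{eq:QTL-petz1} is equivalent to
\begin{align}
\Tr\big(\sigma^\gamma\rho\big)\le\Tr\big(\sigma^\gamma\rho^{1-\gamma}\big),
\end{align}
that is, $\Tr\big(\sigma^\gamma(\rho^{1-\gamma}-\rho)\big)\ge 0$. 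Because $\sigma^\gamma\succeq 0$, it suffices to show that $\rho^{1-\gamma}-\rho\succeq 0$ on the relevant subspace. Then the trace of a product of two positive semidefinite operators is nonnegative.

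\textbf{Case $\gamma\in(0,1)$.} Write $\rho=\sum_j r_j|j\rangle\langle j|$ with $r_j\in[0,1]$. For $r\in[0,1]$ we have $r\le r^{1-\gamma}$, since $r^{1-\gamma}/r=r^{-\gamma}\ge1$ for $r>0$, and both sides vanish at $r=0$. Both operators are functions of $\rho$ and hence commute, so the scalar inequality lifts to $\rho\preceq\rho^{1-\gamma}$, which is what we need.

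\textbf{Case $\gamma>1$.} Here $\rho^{1-\gamma}$ is understood on $\operatorname{supp}(\rho)$ as a generalized inverse power. Let $P$ be the projector onto $\operatorname{supp}(\rho)$. The hypothesis $\operatorname{supp}(\sigma)\subseteq\operatorname{supp}(\rho)$ gives $\sigma^\gamma=P\sigma^\gamma P$, so both traces are taken within $P$. On that subspace the eigenvalues satisfy $r_j\in(0,1]$, hence $r_j^{1-\gamma}\ge1\ge r_j$. This gives $P\rho^{1-\gamma}P\succeq P\rho P$ and the inequality follows.

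\textbf{Equality.} If $\rho$ is pure, then $\rho^{1-\gamma}=\rho$ (the generalized power for $\gamma>1$), so equality is immediate. Conversely, equality forces $\Tr\big(\sigma^\gamma(\rho^{1-\gamma}-\rho)\big)=0$. Since both factors are positive semidefinite, $\sigma^\gamma$ must be supported on the kernel of $\rho^{1-\gamma}-\rho$. That kernel consists of eigenvectors of $\rho$ with eigenvalue $0$ or $1$; for $\gamma>1$ only eigenvalue-$1$ eigenvectors matter, because $\sigma$ lives in $\operatorname{supp}(\rho)$. If $\rho$ is mixed, every nonzero eigenvalue lies in $(0,1)$. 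Under the support hypothesis this makes the difference strictly positive where $\sigma$ lives, so the inequality is strict.

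\textbf{Main obstacle.} The delicate step is the converse of the equality statement when $\gamma\in(0,1)$. If $\sigma$ is orthogonal to $\operatorname{supp}(\rho)$, both traces vanish and both sides equal $-\infty$, so the ``only if'' direction fails in that degenerate case. I would handle this by requiring $\Tr(\sigma^\gamma\rho)>0$, so that the quantities are finite, or equivalently a support-overlap condition. Under that condition, $\sigma$ has weight on eigenvectors with $r_j\in(0,1)$ whenever $\rho$ is mixed, and strictness follows.
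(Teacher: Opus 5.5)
Your proposal is correct and follows essentially the same route as the paper. Both arguments obtain $\rho^{1-\gamma}-\rho\succeq 0$ on the relevant support from the scalar inequality $\lambda^{1-\gamma}\ge\lambda$, pair it with $\sigma^\gamma\succeq 0$, and characterize equality through the kernel of $B=\rho^{1-\gamma}-\rho$. You are also right that the "only if" direction fails literally when $\Tr(\sigma^\gamma\rho)=0$ for $\gamma\in(0,1)$. That degenerate case is exactly what the extra assumption $\Tr(\sigma^\gamma\rho)>0$ rules out; the paper invokes this assumption only inside its proof, not in the theorem statement, to exclude equality coming from $\ker\rho$.
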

The proof is provided in Appendix~\ref{app:petz-renyi-tilt}.

\onecolumngrid

\begin{table}[h]
\centering
\caption{Comparison of objective functions used for trainability and optimization.}
\label{tab:objective_comparison}
\renewcommand{\arraystretch}{1.2}
\resizebox{\linewidth}{!}{%
\begin{tabular}{llllll}
\hline
\textbf{Dimension} & \textbf{Expectation Value} & \textbf{Local Cost} & \textbf{F-VQE / Filtering} & \textbf{CVaR} & \textbf{QTL (this work)} \\
\hline
Math definition 
& $\mathrm{tr}(O\rho)$ 
& $\mathrm{tr}(O_\text{local}\rho)$ 
& $\mathrm{tr}(f(O)\rho)$ 
& $\mathbb{E}[X \mid X \leq \mathrm{VaR}_{\alpha}]$ 
& $\frac{1}{\gamma}\log \mathrm{tr}(e^{\gamma O}\rho)$ \\
Core idea 
& Arithmetic averaging 
& Local measurements 
& Spectral filtering 
& Best $\alpha$-fraction 
& Exponential reweighting \\
Properties 
& Linear, analytic 
& Linear, analytic 
& Filter-dependent 
& Nonlinear, non-smooth 
& Nonlinear, smooth \\
Landscape 
& Often flat (BPs) 
& Trainable (shallow) 
& Sharpened by filter choice 
& Tail-selective, nonsmooth 
& Smooth, sharpened \\
Flexibility 
& None 
& Locality $k$ 
& Filter function $f$ 
& Tail parameter $\alpha$ 
& Tilt parameter $\gamma$ \\
\hline
\end{tabular}%
}
\end{table}

\twocolumngrid

\subsection{Comparative Analysis of Objective Functions}

To contextualize the Quantum Tilted Loss (QTL) framework within the landscape of objective design for variational quantum optimization, we compare it with several commonly used alternatives, summarized in Table~\ref{tab:objective_comparison}. The standard expectation-value objective, $\tr(O\rho)$, is the most natural and widely adopted choice. Its main advantages are its linearity, analytic simplicity, and straightforward physical interpretation. However, because it treats all sampled outcomes through uniform arithmetic averaging, it is particularly susceptible to landscape flattening. In expressive ans\"atze or deep circuits, this often manifests as barren plateaus, where gradients become exponentially small and optimization becomes ineffective.

Local cost functions of the form $\tr(O_{\mathrm{local}}\rho)$, where $O_{\mathrm{local}}$ acts non-trivially only on a small subset of qubits, significantly alleviate barren plateaus in shallow architectures by restricting measurements to local subsystems \cite{Cerezo2021,Uvarov_2021}. Nevertheless, this remedy is inherently problem-dependent. Its effectiveness relies on the existence of a meaningful local decomposition, rendering it unsuitable for tasks with fundamentally global structures, such as combinatorial optimization problems involving nonlocal constraints or parity-type observables.

Filtering VQE (F-VQE) provides a particularly close quantum analogue to QTL because it also modifies variational optimization through spectral reweighting rather than changing the ansätz itself. In F-VQE, a filter function or filtering operator is used to amplify low-energy components of the variational state while suppressing energetically unfavorable components, thereby biasing the optimization toward high-quality solutions~\cite{Amaro2022}. This allows F-VQE to increase the probability of sampling low-energy bit strings, which is often more important than simply reducing the average energy. However, the effectiveness of F-VQE depends on the choice and implementation of the filter, and explicit filtering may introduce additional circuit, sampling, or approximation overhead.

Another important non-linear alternative is $\mathrm{CVaR}$, which focuses optimization on the lowest $\alpha$-fraction of energy outcomes \cite{Barkoutsos2020,rockafellar2000optimization,Saem_2026}. This hard tail truncation introduces a nonsmooth dependence on the variational parameters: as $\theta$ changes, samples may enter or leave the empirical lower-tail set, and the active quantile threshold can shift discontinuously. Consequently, empirical CVaR gradients can be sensitive to finite-shot noise, sorting effects, and quantile ties. 
Additional objective-engineering strategies include subspace-search VQE, variational quantum deflation, and contextual-subspace VQE, which modify the variational objective or restrict the effective search space rather than reweighting the measured energy distribution \cite{Nakanishi2019,Higgott2019,Kirby2021}.

The QTL addresses these limitations by introducing a controllable exponential reweighting via the tilt parameter $\gamma$, permitting genuinely global observables. Small $|\gamma|$ recovers a regime close to standard expectation-value optimization, whereas larger $|\gamma|$ smoothly emphasizes selected spectral regions. In this sense, QTL interpolates between weak and strong risk sensitivity without introducing the hard truncations of $\mathrm{CVaR}$ or the explicit state-spectrum estimation required by free-energy objectives. Because the exponential reweighting acts directly on the observable's spectrum, the QTL admits closed-form analytic expressions for its gradients. While these can be evaluated on hardware via standard parameter-shift rules, the nonlinear normalization introduces specific sampling complexities that we formally address in Section~\ref{sec:estimation}. Ultimately, the QTL provides a unifying and flexible framework: it preserves the broad applicability of global objectives, avoids the discontinuities of truncation-based methods, and introduces a tunable mechanism for reshaping the optimization landscape in an analytically tractable manner.

\section{Estimability, limits and regimes of QTL}\label{sec:estimation}

As established in Lemma~\ref{lem:faithfulness}, the QTL faithfully preserves the global minimum of the target observable, serving as a rigorous drop-in replacement for the standard expectation-value objective in algorithms like VQE. To operationally implement this objective, however, it must be efficiently evaluable using finite quantum measurements. Compared with the linear quantity $\Tr(O\rho(\bm\theta))$, estimating the QTL is statistically and operationally more demanding because the nonlinear operator $e^{\gamma O}$ is not directly accessible from a standard Pauli decomposition of $O$. We therefore focus on three practically relevant regimes: structured observables, small-tilt perturbative expansions, and diagonal Hamiltonians. Technical proofs and refined concentration results are deferred to Appendix~\ref{app:qtl_estimation_appendix}.

\subsection{Exact reduction for finite-spectrum observables}

While estimating $\Tr(e^{\gamma O}\rho(\bm\theta))$ for a generic observable generally requires exponential resources, the exponential operator can fundamentally be mapped to a finite polynomial. For any Hermitian operator $O$ with exactly $r$ distinct eigenvalues, its minimal polynomial has degree $r$. Consequently, any analytic matrix function, including the exponential, can be exactly reduced modulo the minimal polynomial to a polynomial of degree at most $r-1$. 

Concretely, for every tilt parameter $\gamma\in\mathbb{R}$, there exists a polynomial $p_{r-1}$ such that
\begin{align}
e^{\gamma O} = p_{r-1}(O).
\end{align}
By linearity of the trace, the tilted partition function is exactly determined by the first $r-1$ moments of the observable as
\begin{align}
\Tr(e^{\gamma O}\rho)
=
\sum_{k=0}^{r-1} a_k(\gamma)\Tr(O^k\rho),
\end{align}
where the coefficients $a_k(\gamma)$ depend only on $\gamma$ and the spectrum of $O$.

However, this exact algebraic reduction is not unconditionally efficient. For a generic full-rank Hamiltonian, the number of distinct eigenvalues $r$ can scale with the Hilbert space dimension $d=2^n$, rendering the experimental measurement of $r-1$ high-degree moments $\Tr(O^k\rho)$ operationally prohibitive on near-term hardware. The practical utility of this reduction is therefore restricted to highly structured observables with spectral degeneracy. When $r$ is small and strictly independent of system size $n$—such as for rank-1 global projectors ($r=2$) or single Pauli strings ($r=2$)—this algebraic reduction collapses the estimation complexity of the QTL to that of standard, low-degree polynomial expectation values.

\subsection{Small-tilt perturbative expansion}

Because the exact algebraic reduction requires measuring high-order moments $\Tr(O^k\rho)$—a task that is resource-prohibitive for generic VQE Hamiltonians—we must adopt an alternative estimation strategy for standard applications. Rather than seeking an exact global representation of $e^{\gamma O}$, we can directly exploit the operational structure of the variational circuit and restrict our focus to the regime of small tilt parameters.

We assume standard hardware access to a parameterized ansätz state prepared by a quantum circuit,
\begin{align}
\rho(\bm\theta)=U(\bm\theta)\rho_0U^\dagger(\bm\theta),
\end{align}
and consider a generic target observable given by a Pauli decomposition $O=\sum_j c_j P_j$. When $|\gamma|$ is small, the QTL admits a tractable perturbative expansion. If $X$ denotes the classical random variable obtained by measuring $O$ in its eigenbasis on $\rho(\bm\theta)$, then formally
\begin{align}
\mathcal{L}_\gamma(O,\rho(\bm\theta))
=
\frac{1}{\gamma}\log \mathbb{E}[e^{\gamma X}],
\end{align}
so $\gamma\mathcal{L}_\gamma$ acts as the cumulant-generating function of $X$. Truncating the expansion at the second order yields
\begin{align}\label{eq:qtl_small_gamma_main}
\mathcal{L}_\gamma(O,\rho(\bm\theta))
=
\Tr(O\rho(\bm\theta))
+
\frac{\gamma}{2}\text{Var}_{\rho(\bm\theta)}(O)
+
\mathcal{O}(\gamma^2).
\end{align}
Thus, in the small-tilt regime, estimating the QTL reduces to estimating the first two moments of $O$ directly on the parameterized state. For a Pauli expansion $O=\sum_j c_jP_j$, the first moment is directly accessible, while the variance
\begin{align}
\text{Var}_{\rho(\bm\theta)}(O)
=
\sum_{j,k} c_jc_k\,\Tr(P_jP_k\rho(\bm\theta)) - \big(\Tr(O\rho(\bm\theta))\big)^2
\end{align}
can be estimated from standard Pauli measurements, with a quadratically growing number of terms.

\subsection{Finite tilt in the diagonal case}
\label{subsecFiniteTilt}
A nonperturbative but still tractable regime arises when $O$ is diagonal in the
computational basis,
\be
O=\sum_{z\in\{0,1\}^n} E(z)\ket{z}\!\bra{z}.
\ee
Measuring $\rho(\bm\theta)$ in that basis produces samples
$z\sim p_{\bm\theta}(z):=\bra{z}\rho(\bm\theta)\ket{z}$, so that
\be \label{eqTiltDiagonal}
Z_\gamma(\bm\theta):=\Tr(e^{\gamma O}\rho(\bm\theta))
=
\mathbb{E}[e^{\gamma X}],
\quad X:=E(z).
\ee
Hence QTL becomes a classical moment-generating-function estimation problem.

Given i.i.d.\ samples $X_1,\dots,X_m$, define
\be
\widehat Z_\gamma:=\frac1m\sum_{i=1}^m e^{\gamma X_i},
\qquad
\widehat{\mathcal L}_\gamma:=\frac1\gamma\log \widehat Z_\gamma.
\ee
Let $\lambda_{\min}\le X\le \lambda_{\max}$ and
$\Delta:=\lambda_{\max}-\lambda_{\min}$.
Then standard concentration implies the worst-case scaling
\be\label{eq:qtl_main_scaling}
m
=
\tilde O\!\left(
\frac{(e^{|\gamma|\Delta}-1)^2}{\gamma^2\varepsilon^2}
\log\frac1\delta
\right)
\ee
for estimating $\mathcal{L}_\gamma$ to additive error $\varepsilon$ with
failure probability at most $\delta$.
A variance-adaptive refinement, which replaces pure range dependence by
$\text{Var}(e^{\gamma X})$, is given in
Appendix~\ref{app:diagonal_qtl_bounds}.

\cref{eq:qtl_main_scaling} shows that large finite tilt may be
statistically expensive in the worst case.
Accordingly, polynomial estimability requires either additional structure or a
regime in which $|\gamma|\Delta$ remains moderate.

\subsection{Gradient estimation of QTL}
\label{subsecGradientQTL}
For variational optimization, one must estimate both the objective value and its gradient. For QTL, we obtain
\be
\partial_{\theta_k}\mathcal{L}_\gamma(\bm\theta)
=
\frac{1}{\gamma}\,
\frac{\partial_{\theta_k}Z_\gamma(\bm\theta)}{Z_\gamma(\bm\theta)}.
\label{eq:qtl-grad-chain}
\ee

For the usual expectation-value loss, the parameter-shift rule applies directly because the loss itself is an expectation value of a fixed observable \cite{mitarai2018,schuld2019,Wierichs2022generalparameter}. By contrast, QTL is nonlinear in $\rho(\bm\theta)$ because of the outer logarithm, so the shift rule does not apply directly to $\mathcal{L}_\gamma(\bm\theta)$. The key observation is that $Z_\gamma(\bm\theta)$ still has the form of an expectation value, now for the observable $e^{\gamma O}$, and therefore admits the standard parameter-shift representation.

Specifically, if $\theta_k$ enters through a gate $e^{-i\theta_k V_k}$ with
$V_k^2=v_k^2 \mathbb I$, and if
\be
\bm\theta_k^\pm:=\bm\theta\pm s_k\mathbf e_k,
\qquad
s_k:=\frac{\pi}{4v_k},
\ee
then
\be
\partial_{\theta_k}Z_\gamma(\bm\theta)
=
v_k\Bigl(
Z_\gamma(\bm\theta_k^+)-Z_\gamma(\bm\theta_k^-)
\Bigr).
\label{eq:qtl-z-shift-main}
\ee
Hence
\be
\partial_{\theta_k}\mathcal{L}_\gamma(\bm\theta)
=
\frac{v_k}{\gamma}\,
\frac{
Z_\gamma(\bm\theta_k^+)-Z_\gamma(\bm\theta_k^-)
}{
Z_\gamma(\bm\theta)
}.
\label{eq:qtl-grad-shift-main}
\ee
Thus, unlike the expectation-value case, the QTL gradient is not a simple central difference of $\mathcal{L}_\gamma$ itself, but a normalized difference of shifted tilted partition function values. The full statement and proof are given in Appendix~\ref{app:qtl-parameter-shift}.

\cref{eq:qtl-grad-shift-main} also shows that the sensitivity of the gradient estimate depends on the size of $Z_\gamma(\bm\theta)$: smaller values of $Z_\gamma(\bm\theta)$ lead to stronger amplification of estimation errors through the normalization. In contrast to $\mathrm{CVaR}$, which relies on empirical sorting and does not admit an exact parameter-shift rule, the QTL gradient can be evaluated exactly via shifted circuit evaluations.

\section{Trainability: Landscape Reshaping Mechanisms and Benchmark}\label{sec:trainability}

QTL cannot eliminate barren plateaus in variational optimization. QTL is able to reshape local optimization geometry through a nonlinear transformation of the objective, and, in specific structured settings, to improve gradient signal strength. This section develops that perspective in stages: we first analyze a structured projector benchmark where this landscape reshaping can be explicitly visualized and mathematically quantified (Sec.~\ref{subsec:benchmark}), and then explain the underlying geometric mechanism via Hessian analysis (Sec.~\ref{sec:hessian}).

Before proceeding, it is useful to distinguish two conceptually different aspects of trainability that are often conflated in the literature \cite{Anschuetz2022}:

\begin{itemize}
\item \textbf{Barren plateaus} concern the scaling of first-order gradient signals under random initialization. Formally, a cost function exhibits a barren plateau if $\mathrm{Var}_{\bm\theta}[\partial_k \mathcal{L}]$ vanishes exponentially in the system size $n$ for parameters $\bm\theta$ drawn from the initialization distribution~\cite{McClean2018}. This is a statement about the typical signal strength in a randomly chosen region of parameter space.

\item \textbf{Landscape reshaping} concerns how the objective function modifies local curvature, conditioning, and the geometry of optimization trajectories. Two objectives with the same minimizers can have very different Hessian spectra, condition numbers, and basin structures, leading to qualitatively different optimizer behavior.
\end{itemize}

\noindent
QTL acts directly on the latter: its logarithmic nonlinearity redistributes curvature and rescales gradients in a state-dependent manner, without modifying the circuit ansätz itself. As we show below, this reshaping can provably improve gradient-variance scaling in a structured shallow model, provided the tilt schedule scales with the system size. Conversely, for any fixed, constant tilt, the gradient variance still vanishes exponentially, adhering to standard barren plateau bounds. More critically, even when an aggressive tilt schedule successfully restores the landscape gradient, it triggers a statistical penalty. We formally establish this trade-off between trainability and accuracy at the end of this section (Sec.~\ref{sec:gradient_resolvability}), demonstrating how exponential tilt ultimately shifts the optimization bottleneck from the optimization landscape to the finite-shot measurement sample complexity.

\subsection{Projector benchmark: visual intuition and exact analysis}\label{subsec:benchmark}

We can directly observe and rigorously quantify the landscape reshaping of the QTL using a structured projector benchmark. The key enabling structure is that the observable is a global projector, which makes the tilted loss an explicit scalar transform of the standard cost.

The target state is $|\psi_{\rm target}\rangle = |0\rangle^{\otimes n}$, and the variational ansätz is the tensor-product circuit
\begin{align}
V(\bm\theta)=\bigotimes_{j=1}^n e^{i\theta_j \sigma_x^{(j)}/2}, \quad |\psi(\bm\theta)\rangle:=V(\bm\theta)|0\rangle^{\otimes n}.
\end{align}
The observable is the global projector
\begin{align}
O_G:=\mathbb I-|0\rangle\langle 0|^{\otimes n},
\end{align}
with the corresponding standard global loss $C_G(\bm\theta):=\Tr\!\big(O_G\rho(\bm\theta)\big)$, where $\rho(\bm\theta):=|\psi(\bm\theta)\rangle\langle\psi(\bm\theta)|$. We replace this expectation loss by the QTL
\begin{align}
\mathcal L_\gamma(O_G,\rho(\bm\theta)) := \frac{1}{\gamma}\log \Tr\!\big(e^{\gamma O_G}\rho(\bm\theta)\big).
\end{align}

The qualitative effect of the QTL on this optimization landscape is immediately apparent through direct visualization. Figure~\ref{fig:LossLandscape} plots a simplified two-parameter slice (using $R_y(\theta_1)R_x(\theta_2)$ generators) of this global cost alongside its QTL counterpart. As $n$ grows, the standard global loss becomes increasingly flat over most of the parameter space—a visual manifestation of the exponentially vanishing gradients that characterize barren plateaus. 

\begin{figure*}[t]
  \vskip 0.2in
  \begin{center}
  \centerline{\includegraphics[width=\textwidth]{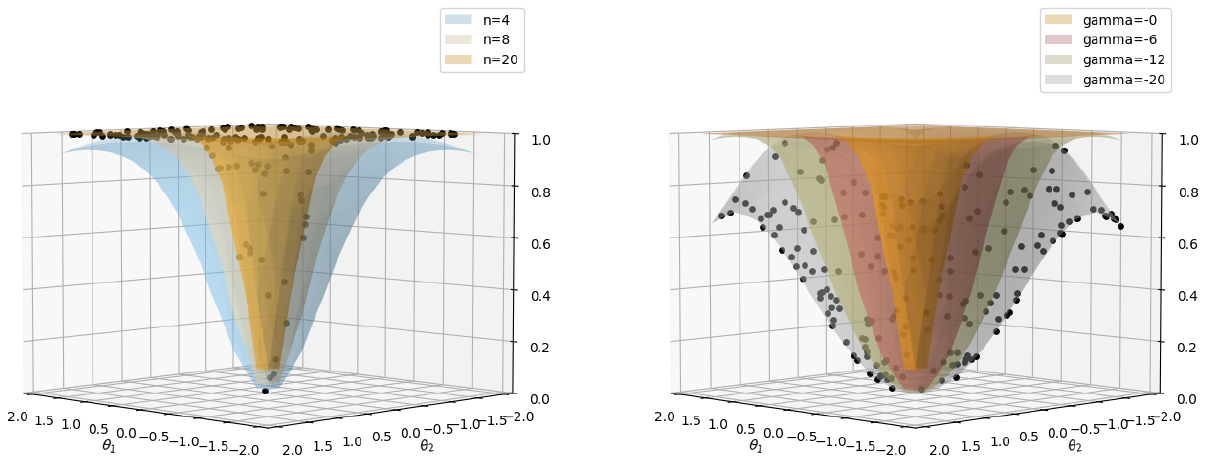}}
    \caption{Loss landscapes for a simple two-parameter ansätz $R_y(\theta_1)R_x(\theta_2)$ with 200 uniformly sampled parameter points (black dots). \textbf{Left:} standard global loss for $n=4,8,20$. \textbf{Right:} The corresponding global QTL for $n=20$ with tilt parameters $\gamma=0,-6,-12,-20$. As $\gamma$ decreases, the finite-size landscape is increasingly reshaped, with enhanced contrast between low-loss and high-loss regions.}
    \label{fig:LossLandscape}
  \end{center}
\end{figure*}

When a negative tilt is applied, the landscape is structurally reshaped: contrast between low-loss and high-loss regions is enhanced, and the basin around the optimum becomes sharply pronounced. The reshaping is highly nonuniform, compressing high-loss plateaus while deepening low-loss valleys, consistent with the logarithmic nonlinearity acting more strongly where the objective value is small. Crucially, the position of the global minimum is strictly preserved—the tilted and untilted landscapes share the exact same optimizer.

These visual observations are mathematically formalized by the structural properties of the projector model.

\begin{proposition}[Properties of projector QTL benchmark]
\label{prop:warmup_projector_qtl}
Let $O_G = \mathbb{I} - |0\rangle\langle 0|^{\otimes n}$ and the variational ansätz be the tensor-product circuit $V(\bm\theta)=\bigotimes_{j=1}^n e^{i\theta_j \sigma_x^{(j)}/2}$. This benchmark model has the following properties:
\begin{enumerate}
    \item Reduction: The projector QTL can be exactly reduced to a scalar transformation of the standard global cost $C_G(\bm\theta) = 1 - \prod_{j=1}^n \cos^2(\theta_j/2)$, given by
\begin{align}
\mathcal L_\gamma(O_G,\rho(\bm\theta)) = \phi_\gamma\!\big(C_G(\bm\theta)\big),
\end{align}
where the nonlinear reshaping function $\phi_\gamma: [0,1] \to \mathbb{R}$ is defined as
\begin{align}
\phi_\gamma(x) := \frac{1}{\gamma}\log\!\Big(1+(e^\gamma-1)x\Big).
\end{align}
    \item Monotonicity: The function $\phi_\gamma$ is strictly increasing on $[0,1]$ with $\phi_\gamma(0) = 0$ and $\phi_\gamma(x) > 0$ for $x > 0$. That is, the QTL is an exact scalar post-processing of the ordinary expectation value $C_G(\bm\theta)$---a monotone but nonlinear reshaping of the standard global cost.
    \item Faithfulness: $\mathcal L_\gamma(O_G,\rho(\bm\theta)) = 0 \iff C_G(\bm\theta) = 0$, so the tilted and untilted losses share the same global minimizers.
\end{enumerate}
\end{proposition}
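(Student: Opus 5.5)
The plan is to exploit that $O_G$ is a projector, so $O_G^2=O_G$ and its spectrum is $\{0,1\}$. Write $\Pi_0:=|0\rangle\langle 0|^{\otimes n}$, so $O_G=\mathbb I-\Pi_0$. Since $O_G$ is a projector, the power series gives
\begin{align}
e^{\gamma O_G}=\mathbb I+(e^\gamma-1)O_G .
\end{align}
This is exactly the degree-$(r-1)$ polynomial reduction from the finite-spectrum subsection, here with $r=2$. Taking the trace against $\rho(\bm\theta)$ gives $\Tr(e^{\gamma O_G}\rho(\bm\theta))=1+(e^\gamma-1)C_G(\bm\theta)$, so $\mathcal L_\gamma=\phi_\gamma(C_G)$ for $\gamma\neq0$.

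It remains to compute $C_G$. The overlap factorizes over the tensor-product ansatz:
\begin{align}
\langle 0|^{\otimes n}V(\bm\theta)|0\rangle^{\otimes n}=\prod_j\langle0|e^{i\theta_j\sigma_x/2}|0\rangle=\prod_j\cos(\theta_j/2),
\end{align}
using $e^{i\theta\sigma_x/2}=\cos(\theta/2)\mathbb I+i\sin(\theta/2)\sigma_x$. Hence $C_G=1-\prod_j\cos^2(\theta_j/2)$. For $\gamma=0$, I would take $\phi_0(x):=x$ as the continuous extension, consistent with Lemma~\ref{Lemma:convergence}, since $\log(1+(e^\gamma-1)x)/\gamma\to x$.

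For monotonicity, the argument $1+(e^\gamma-1)x$ is positive on $[0,1]$: it interpolates linearly between $1$ and $e^\gamma$. So $\phi_\gamma$ is well defined and smooth there, with
\begin{align}
\phi_\gamma'(x)=\frac{e^\gamma-1}{\gamma\,(1+(e^\gamma-1)x)} .
\end{align}
This is strictly positive because $(e^\gamma-1)/\gamma>0$ for every $\gamma\neq0$. Clearly $\phi_\gamma(0)=0$, and strict increase then gives $\phi_\gamma(x)>0$ for $x>0$.

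Faithfulness follows directly. By strict monotonicity, $\phi_\gamma$ is injective with $\phi_\gamma(0)=0$, so $\mathcal L_\gamma=0\iff C_G=0$. Since $C_G\ge0$ and $\mathcal L_\gamma\ge0$ (also by Lemma~\ref{lem:prop}, non-negativity), the zero sets of the two losses are exactly the global minimizers of each, consistent with Lemma~\ref{lem:faithfulness}.

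None of these steps is a genuine obstacle. The only points needing care are the sign bookkeeping for $\gamma<0$, which is handled by the positivity of $(e^\gamma-1)/\gamma$, and the separate treatment of $\gamma=0$ via the limit.
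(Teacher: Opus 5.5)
Your proposal is correct and follows the paper's proof step for step. Both use the projector identity $e^{\gamma O_G}=\mathbb I+(e^\gamma-1)O_G$ to get $Z_\gamma=1+(e^\gamma-1)C_G$, compute $C_G$ from the product-state overlap, and get strict monotonicity from the sign of $\phi_\gamma'$; your observation that $(e^\gamma-1)/\gamma>0$ is exactly the paper's ``numerator and denominator have the same sign''. The separate treatment of $\gamma=0$ via $\phi_0(x)=x$ is a harmless addition that the paper leaves implicit.
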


\noindent
The proof is given in Appendix~\ref{app:proof_warmup_projector_qtl}. The reduction property explains the visual preservation of the minima: in this benchmark model, the tilted loss is an exact scalar post-processing of the ordinary expectation value $C_G(\bm\theta)$. Consequently, the loss itself can be estimated at the exact same cost as the standard global loss, regardless of how large $|\gamma|$ is. We return to this point in Sec.~\ref{sec:gradient_resolvability}, where the distinction between loss estimation and gradient estimation plays a central role.

Having established the structural properties of the projector QTL, we now turn to its gradient-variance scaling. Since $\mathcal L_\gamma = \phi_\gamma(C_G)$, the gradient is $\partial_k\mathcal L_\gamma = \phi_\gamma'(C_G)\,\partial_k C_G$, where the state-dependent prefactor $\phi_\gamma'(C_G(\bm\theta))$ is what enables the tilt to reshape gradient scaling. The following theorem identifies two distinct regimes.

\begin{theorem}[Gradient-variance regimes in the projector QTL benchmark]
\label{thm:critical_tilt_transition}
Consider the projector benchmark model of Proposition~\ref{prop:warmup_projector_qtl} under i.i.d.\ uniform random initialization $\theta_j \sim \mathrm{Unif}(-\pi,\pi)$.
\begin{itemize}
\item Fixed small tilt: For fixed $|\gamma| \ll 1$, the gradient variance of $\mathcal L_\gamma$ exhibits the same exponential-in-$n$ decay as the standard global cost $C_G$:
\begin{align}
\mathrm{Var}[\partial_{\theta_k}\mathcal L_\gamma]
=
\frac{1}{8}\Big(\frac{3}{8}\Big)^{n-1}\big(1+\mathcal{O}(\gamma)\big).
\end{align}

\item Linear negative tilt schedule: There exists an explicit tilt schedule $\gamma(n)=\Theta(-n)$ under which the gradient variance admits a polynomial lower bound in $n$, in contrast to the exponential upper bound at fixed tilt.
\end{itemize}
More concretely, for the explicit choice
\begin{align}
\gamma(n)=2(n-1)\log(3/8),
\end{align}
the appendix proves the lower bound
\begin{align}
\mathrm{Var}[\partial_{\theta_k}\mathcal L_{\gamma(n)}]
=
\Omega\!\left(\frac{1}{n^2}\right).
\end{align}
\end{theorem}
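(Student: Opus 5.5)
We write $P(\bm\theta):=\prod_{j=1}^n\cos^2(\theta_j/2)=1-C_G(\bm\theta)$ and $Q_k:=\prod_{j\neq k}\cos^2(\theta_j/2)$. The plan is to start from the reduction $\mathcal L_\gamma=\phi_\gamma(C_G)$ of Proposition~\ref{prop:warmup_projector_qtl} and write the gradient in closed form. Since $\phi_\gamma'(x)=\frac{e^\gamma-1}{\gamma\,(1+(e^\gamma-1)x)}$ and $\partial_k C_G=-\tfrac12\sin\theta_k\,Q_k$, and since $1+(e^\gamma-1)(1-P)=e^\gamma+(1-e^\gamma)P$, we get
\begin{align}
\partial_{\theta_k}\mathcal L_\gamma
=-\frac{e^\gamma-1}{2\gamma}\,\frac{\sin\theta_k\,Q_k}{e^\gamma+(1-e^\gamma)\cos^2(\theta_k/2)\,Q_k}.
\end{align}
The map $\theta_k\mapsto-\theta_k$ leaves the denominator invariant and flips the sign of $\sin\theta_k$. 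Hence the mean of this gradient is zero, and $\mathrm{Var}[\partial_k\mathcal L_\gamma]=\mathbb E[(\partial_k\mathcal L_\gamma)^2]$. Both regimes are then statements about this single second moment.

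\textbf{Fixed small tilt.} At $\gamma=0$ the prefactor is $1$ and the denominator is $1$, so the second moment is $\tfrac14\,\mathbb E[\sin^2\theta_k]\,\prod_{j\neq k}\mathbb E[\cos^4(\theta_j/2)]$. Under the uniform measure these factors are $1/2$ and $3/8$, which gives $\tfrac18(3/8)^{n-1}$. For $|\gamma|\ll1$ the denominator lies between $\min(1,e^\gamma)$ and $\max(1,e^\gamma)$, uniformly in $\bm\theta$ and $n$. Likewise $(e^\gamma-1)/\gamma=1+\mathcal O(\gamma)$. The squared gradient is therefore the $\gamma=0$ integrand multiplied by a factor $1+\mathcal O(\gamma)$ that is uniform in $\bm\theta$. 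Taking expectations yields the stated formula, with the same exponential decay.

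\textbf{Linear negative tilt.} Take $\gamma=\gamma(n)=2(n-1)\log(3/8)<0$. Then $(e^\gamma-1)/\gamma\ge 1/(2|\gamma|)$ for large $n$. The denominator is at most $e^\gamma+\cos^2(\theta_k/2)Q_k$. We lower-bound the second moment by restricting to the event $A=\{\theta_k\in[\pi/4,\pi/2]\}\cap\{Q_k\ge e^\gamma\}$.
\begin{itemize}
\item On $A$ we have $\sin^2\theta_k\ge 1/2$ and $\cos^2(\theta_k/2)\ge 1/2$.
\item Also on $A$, the denominator is at most $e^\gamma+Q_k\le 2Q_k$.
\end{itemize}
Hence $(\partial_k\mathcal L_\gamma)^2\ge \frac{1}{16\gamma^2}\cdot\frac12\cdot\frac{Q_k^2}{4Q_k^2}=\Omega(1/n^2)$ on $A$. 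Since $\theta_k$ is independent of $Q_k$, it remains to show that $\Pr(Q_k\ge e^{\gamma})$ is bounded below by a constant.

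\textbf{Main step (the obstacle).} We must show $\Pr(\log Q_k\ge\gamma)\to1$. Write $\log Q_k=\sum_{j\neq k}Y_j$ with $Y_j=\log\cos^2(\theta_j/2)$ i.i.d. Using $\frac1\pi\int_0^{\pi}\log\cos^2(\theta/2)\,d\theta=-2\log2$, we get $\mathbb E[Y_j]=-2\log 2\approx-1.386$. This is strictly larger than $\gamma/(n-1)=2\log(3/8)\approx-1.962$. The variables $Y_j$ have only an integrable logarithmic singularity, so $\mathbb E[Y_j^2]<\infty$. Chebyshev's inequality then gives
\begin{align}
\Pr(\log Q_k<\gamma)\le \frac{(n-1)\mathrm{Var}(Y)}{\big((n-1)(2\log(3/8)+2\log2)\big)^2}=\mathcal O(1/n)\to0.
\end{align}
Thus $\Pr(A)\ge\tfrac18\cdot\tfrac12$ for large $n$, and $\mathrm{Var}[\partial_k\mathcal L_{\gamma(n)}]=\Omega(1/n^2)$. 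This step is the one that needs care, because the specific constant $\log(3/8)$ must sit strictly below $\mathbb E[Y]/2$; verifying that separation, together with the finiteness of $\mathrm{Var}(Y)$, is where I would check the computation most carefully.
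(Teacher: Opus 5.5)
Your proof is correct. The fixed-small-tilt part is essentially the paper's argument. Your remark that $e^\gamma+(1-e^\gamma)P$ is a convex combination of $1$ and $e^\gamma$ makes the uniformity of the $1+\mathcal O(\gamma)$ factor a bit cleaner. There is one harmless sign slip: $\partial_{\theta_k}C_G=+\tfrac12\sin\theta_k\,Q_k$, but only the square enters.

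For the negative tilt you take a genuinely different route from the paper.
\begin{itemize}
\item The paper fixes $\theta_k$ and notes that $F(x)=x/\bigl(1+\beta(1-f_k\sqrt{x})\bigr)^2$, with $\beta=e^\gamma-1$, $f_k=\cos^2(\theta_k/2)$ and $x=Q_k^2$, is concave in $x$ for $\gamma<0$. The chord bound $F(x)\ge xF(1)$ then reduces everything to $\mathbb E[Q_k^2]=(3/8)^{n-1}$ times an exactly computable integral, $2/\bigl(e^{\gamma/2}(1+e^{\gamma/2})^2\bigr)$. The schedule $\gamma(n)=2(n-1)\log(3/8)$ is chosen precisely so that $e^{\gamma/2}$ cancels $(3/8)^{n-1}$, giving $\mathrm{Var}\ge\tfrac12(1-e^{\gamma/2})^2/\gamma^2$ for every $n\ge2$.
\item You instead restrict to a typical event and control $\log Q_k$ by Chebyshev. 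This uses $\mathbb E[\log\cos^2(\theta/2)]=-2\log2>2\log(3/8)$. The variance of this variable is $\pi^2/3$, so that step is sound.
\end{itemize}

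The paper's route gives explicit constants valid for every $n$. Yours gives a stronger conclusion: $\Pr\bigl(|\partial_{\theta_k}\mathcal L_{\gamma(n)}|\ge c/n\bigr)\ge 1/16$ for large $n$. In other words, gradients are of order $1/n$ at a constant fraction of initializations, which is the operationally relevant statement for the shot-resolvability discussion. The paper's second-moment bound does not give this by itself. Its bound rests on $\mathbb E[Q_k^2]$, which is dominated by rare configurations (typically $Q_k^2\approx 16^{-(n-1)}$). It also rests on a window of width $\sim e^{\gamma/2}$ around $\theta_k=\pi$. Finally, your argument applies unchanged to any schedule $\gamma(n)\le -c(n-1)$ with $c>2\log 2$. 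By contrast, the paper's chord bound becomes exponentially small for slopes $2\log2<c<2\log(8/3)$.
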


\noindent
The proof is given in Appendix~\ref{app:train}. The result relies on three specific ingredients: (i) the projector structure of $O_G$, which makes $e^{\gamma O_G}$ exactly computable; (ii) the shallow product ansätz, which makes the product $\prod_j \cos^2(\theta_j/2)$ explicit; and (iii) the $n$-dependent tilt schedule, which tunes the compensation. Removing any of these weakens or invalidates the conclusion. The resulting $\Omega(1/n^2)$ scaling is comparable to that typically associated with local cost functions, suggesting a local-cost-like scaling mechanism in this benchmark model. However, this is a scaling resemblance rather than a formal equivalence, since the connection to operator-theoretic notions of locality has not been established. Accordingly, Theorem~12 should be viewed as a structured benchmark result rather than a universal barren-plateau avoidance theorem. Its broader value is to isolate a concrete mechanism by which nonlinear tilting can compensate for exponentially small gradients when observable structure, circuit structure, and tilt schedule are properly aligned. This motivates turning from first-order variance scaling to the more general geometric question of how QTL reshapes local curvature and conditioning.

The projector benchmark above identifies one explicit mechanism by which QTL can improve gradient scaling, but its assumptions are deliberately restrictive. More generally, the role of tilting is geometric rather than universal: even when it does not eliminate barren plateaus in full generality, it can still reshape local curvature, anisotropy, and conditioning. This is captured naturally by the Hessian decomposition.

\subsection{Hessian decomposition and geometric mechanism}\label{sec:hessian}

Second-order structure gives a more refined view of the QTL landscape than gradients alone. In classical nonlinear optimization, the Hessian defines the local quadratic model, and its spectrum controls curvature, conditioning, and the behavior of Newton/quasi-Newton-type methods. In modern nonconvex machine learning, Hessian spectra have also been used to diagnose saddle plateaus, sharp/flat directions, and the evolution of optimization trajectories. Hessian information is often used to distinguish sharp and flat directions, identify ill-conditioning, and explain why optimization trajectories behave differently even when first-order signals look similar \cite{Nocedal1999,1Yann2014,sagun2018,ghorbani2019}.

Recently, tilted sharpness-aware minimization shows that exponential tilting can reweight local perturbation neighborhoods and favor flatter solutions in classical nonconvex landscapes, reinforcing our use of tilting as a landscape-reshaping operation rather than only a tail-risk measure~\cite{li2025sharp}. 
In variational quantum optimization, however, this perspective comes with an extra caveat: higher-order derivatives may also become hard to resolve statistically in barren-plateau regimes~\cite{Cerezo_2021}. Thus, for QTL, it is useful to separate two questions: how the tilt modifies the quantum local geometry, and when this modified curvature remains observable in practice.

Differentiating QTL once gives
\begin{align}
\nabla \mathcal L_\gamma(\bm\theta)
=
\frac{1}{\gamma\, Z_\gamma(\bm\theta)}\,\nabla Z_\gamma(\bm\theta),
\end{align}
and differentiating again yields
\begin{align}
\label{eq:qtl_hessian_clean}
\nabla^2 \mathcal L_\gamma(\bm\theta)
=
\frac{1}{\gamma\, Z_\gamma(\bm\theta)}\,\nabla^2 Z_\gamma(\bm\theta)
-\frac{1}{\gamma}
\frac{\nabla Z_\gamma(\bm\theta)\nabla Z_\gamma(\bm\theta)^\top}{Z_\gamma(\bm\theta)^2}.
\end{align}
Equivalently,
\begin{align}
\label{eq:qtl_hessian_grad_form}
\nabla^2 \mathcal L_\gamma(\bm\theta)
=
\frac{1}{\gamma\, Z_\gamma(\bm\theta)}\,\nabla^2 Z_\gamma(\bm\theta)
-\gamma\,
\nabla \mathcal L_\gamma(\bm\theta)\nabla \mathcal L_\gamma(\bm\theta)^\top .
\end{align}
This form makes the geometry transparent: the Hessian of QTL is the Hessian of the tilted partition function, rescaled by $1/(\gamma Z_\gamma)$, together with a rank-one correction aligned with the gradient direction.

Taking the quadratic form in an arbitrary direction $v\in\mathbb R^m$ gives
\begin{align}
\label{eq:qtl_directional_curvature}
v^\top \nabla^2 \mathcal L_\gamma(\bm\theta)\, v
=
&\frac{1}{\gamma\, Z_\gamma(\bm\theta)}
\,v^\top \nabla^2 Z_\gamma(\bm\theta)\, v \nonumber\\
&-\gamma\,
\bigl(\langle v,\nabla \mathcal L_\gamma(\bm\theta)\rangle\bigr)^2 .
\end{align}
Here we used the elementary identity
$
v^\top (aa^\top) v = (a^\top v)^2
$.
Equation~\cref{eq:qtl_directional_curvature} shows that QTL does not simply rescale curvature uniformly. Instead, it changes curvature most strongly in directions that overlap with the gradient. When $\gamma>0$, the second term in~\cref{eq:qtl_directional_curvature} is non-positive, so QTL suppresses curvature along gradient-aligned directions. When $\gamma<0$, the sign is reversed, and the same term enhances curvature in those directions. In both cases, directions orthogonal to $\nabla \mathcal L_\gamma(\bm\theta)$ are unaffected by the rank-one correction. This is the basic geometric mechanism behind the anisotropic landscape reshaping induced by the tilt.

A particularly simple picture emerges at stationary points. If $\bm\theta^\star$ satisfies
$
\nabla \mathcal L_\gamma(\bm\theta^\star)=0
$,
equivalently
$
\nabla Z_\gamma(\bm\theta^\star)=0
$,
then the rank-one term vanishes and
\begin{align}
\label{eq:qtl_hessian_stationary}
\nabla^2 \mathcal L_\gamma(\bm\theta^\star)
=
\frac{1}{\gamma\, Z_\gamma(\bm\theta^\star)}
\,\nabla^2 Z_\gamma(\bm\theta^\star).
\end{align}
Hence, at a stationary point, QTL preserves the eigendirections of $\nabla^2 Z_\gamma$; only the eigenvalues are rescaled by the scalar factor $1/(\gamma Z_\gamma(\bm\theta^\star))$. Away from stationarity, by contrast, the additional rank-one term changes the local quadratic model in a direction-dependent way.

This decomposition also clarifies the limits of what QTL can guarantee. 
A finite tilt can reshape the local second-order geometry, even in regimes where the first-order barren-plateau phenomenon persists. 
However, the identity in Eq.~\eqref{eq:qtl_hessian_grad_form} does not by itself imply favorable conditioning, nor does it guarantee that the modified curvature can be estimated efficiently. 
Recent Hessian analyses for VQAs emphasize this distinction between geometric structure and statistical resolvability: Hessian entries can admit exact second-order parameter-shift representations, while still becoming prohibitively costly to resolve for global objectives; by contrast, locality can preserve polynomially accessible curvature information~\cite{Huang2026}. 
The Hessian decomposition should therefore be understood as a local geometric mechanism, not as a standalone guarantee of efficient curvature-based optimization. 
More generally, noise--precision relations for VQAs show that the observable spectrum, Hessian geometry, and ansätz structure jointly affect the noise level required to achieve a target cost precision, paralleling QTL's tilt-dependent sampling overhead~\cite{Kosuke2023}. 
Determining the full Hessian spectrum, condition number, or optimizer convergence rate requires additional assumptions on the circuit ensemble, observable structure, and finite-shot estimation budget.

\subsection{Gradient Resolvability and Signal-to-Noise Ratio}\label{sec:gradient_resolvability}

While the Hessian analysis shows that QTL can geometrically reshape the cost landscape, a steeper landscape is a necessary but insufficient condition for successful variational training. This motivates analyzing not only whether a loss reshapes the optimization landscape, but also whether the resulting objective values and gradients can be reliably estimated from finite quantum measurements. 
Operationally, gradients are estimated stochastically from finite measurement samples. For gradient descent to navigate the landscape effectively, the true gradient signal must be statistically resolvable against measurement noise. This signal-to-noise perspective is standard in barren-plateau analyses, where trainability is often characterized by whether the gradient variance remains large enough to be distinguished from finite-shot fluctuations~\cite{Wang2021}.

We formally quantify this resolvability through the noise-to-signal ratio, 
\begin{align}
\mathcal{R} = \frac{\sigma^2_{\text{shot}}}{N \|\nabla \mathcal{L}_\gamma\|^2 },
\end{align}
where $\sigma^2_{\text{shot}}$ is the single-shot variance of the estimator and $N$ is the number of measurements. In a standard barren plateau, the signal $\|\nabla \mathcal{L}_\gamma\|^2$ vanishes exponentially, demanding an exponentially large shot count $N$ to maintain the resolvability condition $\mathcal{R} \lesssim 1$. 

The QTL addresses this by amplifying the gradient signal via the tilt parameter $\gamma$. However, this geometric reshaping does not come for free. Because the stochastic estimator for the QTL gradient relies on the exponentially reweighted observable $e^{\gamma O}$, its single-shot variance $\sigma^2_{\text{shot}}$ scales exponentially with $|\gamma|$. Consequently, if an aggressive tilt $|\gamma| = \Omega(n)$ is deployed to unconditionally avoid a barren plateau, the statistical variance of the estimator explodes. This effectively transfers the exponential bottleneck from the vanishing landscape gradients to the measurement sampling complexity.

This exposes a fundamental trainability-estimability trade-off for near-term variational algorithms: while exponential tilting geometrically reshapes the landscape to amplify gradient signals, resolving those non-linear gradients incurs a statistical overhead that scales with the strength of the tilt. Recent finite-sample analyses of nonlinear post-processing caution that nonlinear objectives can appear to improve barren-plateau scaling while remaining exponentially hard to estimate, which motivates our explicit trainability–estimability trade-off analysis \cite{Saem_2026}.  Crucially, this bottleneck is not unique to the QTL; it reflects a statistical penalty for any objective function that aggressively biases the measurement distribution. For example, the empirical $\mathrm{CVaR}$ objective~\cite{Barkoutsos2020} sharpens the optimization landscape by discarding all but the lowest $\alpha$-fraction of outcomes. As the tail parameter $\alpha$ is tightened to isolate the ground-state signal, the effective sample size is proportionally reduced, demanding a strict increase in the total measurement budget simply to extract a statistically reliable gradient direction. We provide the information-theoretic derivation of this sample complexity bound, its connection to hypothesis testing, and the operational distinction between near-term sampling and fault-tolerant estimation regimes in Appendix~\ref{app:gradient_resolvability}.

\section{Tilted-QAOA for MaxCut problem}
\label{sec:qaoa-maxcut-qtl}
In this section we conduct numerical experiments to evaluate the performance of our proposed quantum tilted loss on a prominent variational optimization task known as the Quantum Approximate Optimization Algorithm (QAOA) to solve the MaxCut problem ~\cite{farhi2014}. Although QAOA algorithm and its variants have been extensively studied both in the literature and in experiments~\cite{farhi2016qaoa, zhou2020qaoa, rabinovich2022ionnativeqaoa, adamattibridi2026expressivity}, their trainability behaviors in the presence of barren plateaus are characterized by the expressiveness of the circuit ansätz and the problem specifications, via the framework of dynamical Lie algebras~\cite{Larocca2022,fontana2024characterizing, ragone2024dla, mao2025qaoa}. Nonetheless, few works have focused on the performance of the QAOA algorithm with loss functions distinct to the expectation forms. 

Consider a graph $G=(V,E)$ with $|V|=n$ nodes. A cut is modeled by a string $x \in \{0,1\}^{n}$. For any two nodes that are connected by an edge $(i,j)$, both $i$ and $j$ are in the same set if $x_i=x_j$, while $(i,j)$ is in the cut if and only if $x_i\neq x_j$. Our task of finding the maximum number of edges that are cut by a partition can be modeled by the cut function $\text{Cut}(x)=\sum_{(i,j)\in E}x_i+x_j-2x_ix_j$. This problem is then equivalent to finding the ground state of the following diagonal cost Hamiltonian
\be
H_C = \sum_{(i,j)\in E} \frac{1}{2}\left(Z_i Z_j - \mathbb I\right),
\ee
where $Z_i$ are Pauli Z operators. In the QAOA algorithm, a mixer Hamiltonian is chosen as $H_M=\sum_{i\in V}X_i$, where $X_i$ is the Pauli X operator acting on qubit $i$. Let $p\in \bb{N}^{+}$ and $\bm\theta, \bm\tau \in[0, 2\pi)^{\otimes p}$. We then construct the depth-$p$ QAOA ansätz as a density matrix given by $\Psi_p(\bm\theta, \bm\tau)=\ket{\psi_p(\bm\theta, \bm\tau)}\bra{\psi_p(\bm\theta, \bm\tau)}$, where

\be
\ket{\psi_p(\bm\theta, \bm\tau)} = U(\theta_p, \tau_p) \cdots U(\theta_1, \tau_1) |+\rangle^{\otimes n},
\ee
where
\be
U(\theta_k, \tau_k) = U_M(\tau_k) U_C(\theta_k) = e^{-i\tau_k H_M} e^{-i\theta_k H_C},
\ee
for every layer $k\in[p]$. A schematic illustration of the QAOA circuit is shown below.
\be
\scalebox{0.85}{\Qcircuit @C=0.3em @R=0.5em{
\lstick{|0\rangle} & \gate{H} & \multigate{2}{U_C(\theta_1)} & \multigate{2}{U_M(\tau_1)} & \qw & \cdots & \qw & \multigate{2}{U_C(\theta_p)} & \multigate{2}{U_M(\tau_p)} & \meter & \cw\\
\lstick{\vdots}    & \gate{H} & \ghost{U_C(\theta_1)}        & \ghost{U_M(\tau_1)}        & \qw &        & \qw & \ghost{U_C(\theta_p)}        & \ghost{U_M(\tau_p)}        & \meter & \cw\\
\lstick{|0\rangle} & \gate{H} & \ghost{U_C(\theta_1)}        & \ghost{U_M(\tau_1)}        & \qw & \cdots & \qw & \ghost{U_C(\theta_p)}        & \ghost{U_M(\tau_p)}        & \meter & \cw
}
}
\ee

\subsection{Tilted-QAOA}

Unlike the standard QAOA algorithm that minimizes the expectation loss function, here we utilize our quantum tilted loss in Definition~\ref{def:QTL} as the loss function, and our solutions are obtained through minimization of QTL
\begin{align}
\bm\theta^*, \bm\tau^* &\in \argmin_{\bm\theta, \bm\tau \in[0, 2\pi)^{\otimes p}} \mathcal{L}_\gamma(H_C, \Psi_p(\bm\theta, \bm\tau)) \\\nonumber
&= \argmin_{\bm\theta, \bm\tau \in[0, 2\pi)^{\otimes p}} \frac{1}{\gamma}\log \tr \left( e^{\gamma H_C} \Psi_p(\bm\theta, \bm\tau) \right).
\end{align}
And when the parameters are optimized, the resulting state is close to the ground state of the cost Hamiltonian, as discussed in Lemma~\ref{lem:limit} and Lemma~\ref{lem:faithfulness}, and the optimal cut $x^* \in \{0,1\}^{n}$ is then obtained. We name our modification of the QAOA algorithm with tilted loss as \textit{tilted-QAOA}. The main part of the algorithm lies in efficient estimation of the quantum tilted loss. Since for the MaxCut problem, the cost Hamiltonian is diagonal. We can utilize Eq.~\ref{eqTiltDiagonal} in Section ~\ref{subsecFiniteTilt} to first estimate the expectation values and then calculate the classical moment-generating function as the estimation of tilted loss, as described in Algorithm~\ref{algo:estimation_diagonal}.

\begin{algorithm}[H]
\caption{Tilted-QAOA: diagonal QTL estimation}
\label{algo:estimation_diagonal}
\begin{algorithmic}[1]
\REQUIRE cost Hamiltonian $H_C$; circuit depth $p$; parameters $\theta, \tau$; shot budget $N_{shot}$; tilt parameter $\gamma$
\STATE Prepare the depth-$p$ QAOA state $|\psi_p(\theta, \tau)\rangle$
\FOR{$s = 1$ \textbf{to} $N_{shot}$}
    \STATE Measure the state in the computational basis to obtain a random bitstring outcome $z_s \in \{0,1\}^n$
    \STATE Compute the individual classical energy sample $E_s = \langle z_s | H_C | z_s \rangle$
\ENDFOR
\STATE Obtain the set of independent energy samples $\{E_s\}_{s=1}^{N_{shot}}$
\IF{$\gamma = 0$}
    \STATE Estimate the empirical expectation loss:
    \STATE $\hat{L}_0(\theta, \tau) \gets \frac{1}{N_{shot}} \sum_{s=1}^{N_{shot}} E_s$
\ELSE
    \STATE Estimate the empirical tilted loss:
    \STATE $\hat{L}_\gamma(\theta, \tau) \gets \frac{1}{\gamma} \log \left( \frac{1}{N_{shot}} \sum_{s=1}^{N_{shot}} e^{\gamma E_s} \right)$
\ENDIF
\ENSURE $\hat{L}_\gamma(\theta, \tau)$
\end{algorithmic}
\end{algorithm}

The tilted-QAOA algorithm has a hybrid quantum-classical procedure resembling the conventional QAOA algorithm, except for the loss function being modified to our QTL. With respect to the classical optimizations, the gradients of QTL are hard from a statistical perspective, as discussed in Section~\ref{subsecGradientQTL}. Hence, we construct our heuristic optimizer for the tilted-QAOA algorithm. We first utilize the '\texttt{Pennylane}' package \cite{bergholm2022pennylane} to build the QAOA circuit and call '\texttt{qml.grad}' with the finite difference method as the estimator of gradients. Based on those gradients, we implement a gradient-based optimization strategy with Polyak’s momentum~\cite{polyak1964methods}, combined with gradient clipping and a learning rate decay schedule, as explained in Algorithm~\ref{alg:gradient_QTL}.

\begin{remark}
    Although we prove in Appendix~\ref{app:qtl-parameter-shift} that an exact parameter-shift rule exists for QTL gradients, its practical use on hardware requires estimating the ratio of two extremely noisy partition function values ($Z_{\gamma}$), which can amplify errors. For the structured, low-depth QAOA circuits used in this demonstration, we found that standard finite-difference methods provided a more operationally stable gradient approximation under finite-shot constraints.
\end{remark}

\begin{algorithm}[H]
\caption{Tilted-QAOA: heuristic optimization}
\label{alg:gradient_QTL}
\begin{algorithmic}[1]
\REQUIRE cost Hamiltonian $H_C$; circuit depth $p$; initial parameters $\bm{\theta}^{(0)}, \bm{\tau}^{(0)}$; shots per objective evaluation $N_{\mathrm{shot}}$; tilt parameter $\gamma$; total optimization steps $T$; clipping parameter $C$; initial momentum $\bm{\nu}^{(0)}=\bm{0}$; momentum ratio $r_{\text{mom}}$; learning rate $\eta_{\text{lr}}$;  decay offset $\eta_{\text{off}}$; decay power $\eta_{\text{pow}}$; learning penalty $\eta_{\text{lp}}$;
\FOR{$t=0$ \TO $T-1$}
\STATE estimate $\widehat{L}_\gamma(\bm{\theta}^{(t)}, \bm{\tau}^{(t)})$ using Algorithm~\ref{algo:estimation_diagonal};
\STATE obtain $\widehat{g}_{\gamma}(\bm{\theta}^{(t)}, \bm{\tau}^{(t)})\gets\nabla_{\bm{\theta}, \bm{\tau}}\widehat{L}_\gamma(\bm{\theta}^{(t)}, \bm{\tau}^{(t)})$;
\STATE calculate the norm of gradient $\|\widehat{g}(\bm{\theta}^{(t)}, \bm{\tau}^{(t)})\|$;
\IF{$\|\widehat{g}_{\gamma}(\bm{\theta}^{(t)}, \bm{\tau}^{(t)})\|> C$}
\STATE $\widehat{g}_{\gamma}(\bm{\theta}^{(t)}, \bm{\tau}^{(t)}) \gets C\frac{\widehat{g}_{\gamma}(\bm{\theta}^{(t)}, \bm{\tau}^{(t)})}{\|\widehat{g}_{\gamma}(\bm{\theta}^{(t)}, \bm{\tau}^{(t)})\|} $;
\ENDIF
\STATE calculate the momentum;
\STATE $\bm{\nu}^{(t+1)}_{\gamma} \gets r_{\text{mom}} \bm{\nu}^{(t)}_{\gamma} + (1 - r_{\text{mom}}) \widehat{g}_{\gamma}(\bm{\theta}^{(t)}, \bm{\tau}^{(t)})$;
\STATE calculate the learning rate $\eta^{(t)} \gets \frac{\eta_{\text{lr}} }{(t + 1 + \eta_{\text{off}})^{\eta_{\text{pow}}}}$;
\STATE apply the fixed-tilt learning penalty
$\eta^{(t)} \gets \frac{\eta^{(t)}}{1+\eta_{\mathrm{lp}}|\gamma|}$;
\STATE update the parameters;
\STATE $\bm{\theta}^{(t+1)} \gets \bm{\theta}^{(t)} - \eta^{(t)} \cdot \bm{\nu}^{(t+1)}_{\gamma}|_{\bm{\theta}}$;
\STATE $\bm{\tau}^{(t+1)} \gets \bm{\tau}^{(t)} - \eta^{(t)} \cdot \bm{\nu}^{(t+1)}_{\gamma}|_{\bm{\tau}}$;
\ENDFOR
\ENSURE $\bm{\theta}^{(T)}, \bm{\tau}^{(T)}$.
\end{algorithmic}
\end{algorithm}

\subsection{Fixed tilt under finite-shot estimation}
When the tilt parameter $\gamma$ is fixed to be several specific values, we have the schematic framework for the tilted-QAOA algorithm elaborated in Algorithm~\ref{algo:tilted_qaoa}.
We benchmark our tilted-QAOA algorithm across a random set of Erd\H{o}s--R\'enyi ensemble $G(n,p_{\mathrm{edge}})$. We fixed the size $n=8$ and the probability that an edge exists between two nodes by $p_{\mathrm{edge}} = 0.43$. 
We study fixed-tilt performance over a range of tilt parameter $|\gamma| \in [0,4]$, using the mean final cut ratio as the performance metric. 
And we use QAOA depth $p=2$ and optimize each run for $T=100$ iterations. For each setting, results are averaged over five Erdős--Rényi graph instances and five independent random initializations of the QAOA angles. The final mean cut ratio is computed from the finite-shot output distribution produced by the simulator, and is therefore itself a finite-sampling estimate rather than an exact statevector expectation. We minimize the loss function with a negative tilt, corresponding to energy minimization. For convenience, the horizontal axis in the figures reports the tilt $|\gamma|$, corresponding to the tilted objective $L_{-\gamma}$. Figure~\ref{fig:FixedGammaShotsCombine} shows the mean final cut ratio as a function of $|\gamma|$ under different shot budgets. Since $\gamma$ is fixed during each run, the tilt-dependent penalty is constant along the optimization trajectory.

\begin{algorithm}[H]
\caption{Tilted-QAOA: fixed tilt experiment}
\label{algo:tilted_qaoa}
\begin{algorithmic}[1]
\REQUIRE tilt parameters set $\Gamma$; the exact MaxCut value obtained from brute-force search $C_{\max}$; cut function $\text{Cut}(x)$;
\FORALL{$\gamma \in \Gamma$}
\STATE use Algorithm~\ref{alg:gradient_QTL} to obtain $\bm{\theta}^{(T)}, \bm{\tau}^{(T)}$;
\STATE prepare $|\psi_p(\bm{\theta}^{(T)}, \bm{\tau}^{(T)})\rangle$ and measure it in the computational basis to estimate the expected cut from samples with probability vector $P_{\bm{\theta}, \bm{\tau}}(x)$ and string $x\in\{0,1\}^{\otimes n}$;
\STATE calculate the cut value $C^{\text{QAOA}}_{\gamma} \gets \sum_{x}P_{\bm{\theta}, \bm{\tau}}(x)\text{Cut}(x)$;
\STATE calculate mean final ratio $\text{MFR}_{\gamma} \gets \frac{C^{\text{QAOA}}_{\gamma}}{C_{\max}}$.
\ENDFOR
\ENSURE $\{\text{MFR}_{\gamma}\}_{\gamma\in\Gamma}$.
\end{algorithmic}
\end{algorithm}

\begin{figure}[h]
    \centering
    \includegraphics[width=\linewidth]{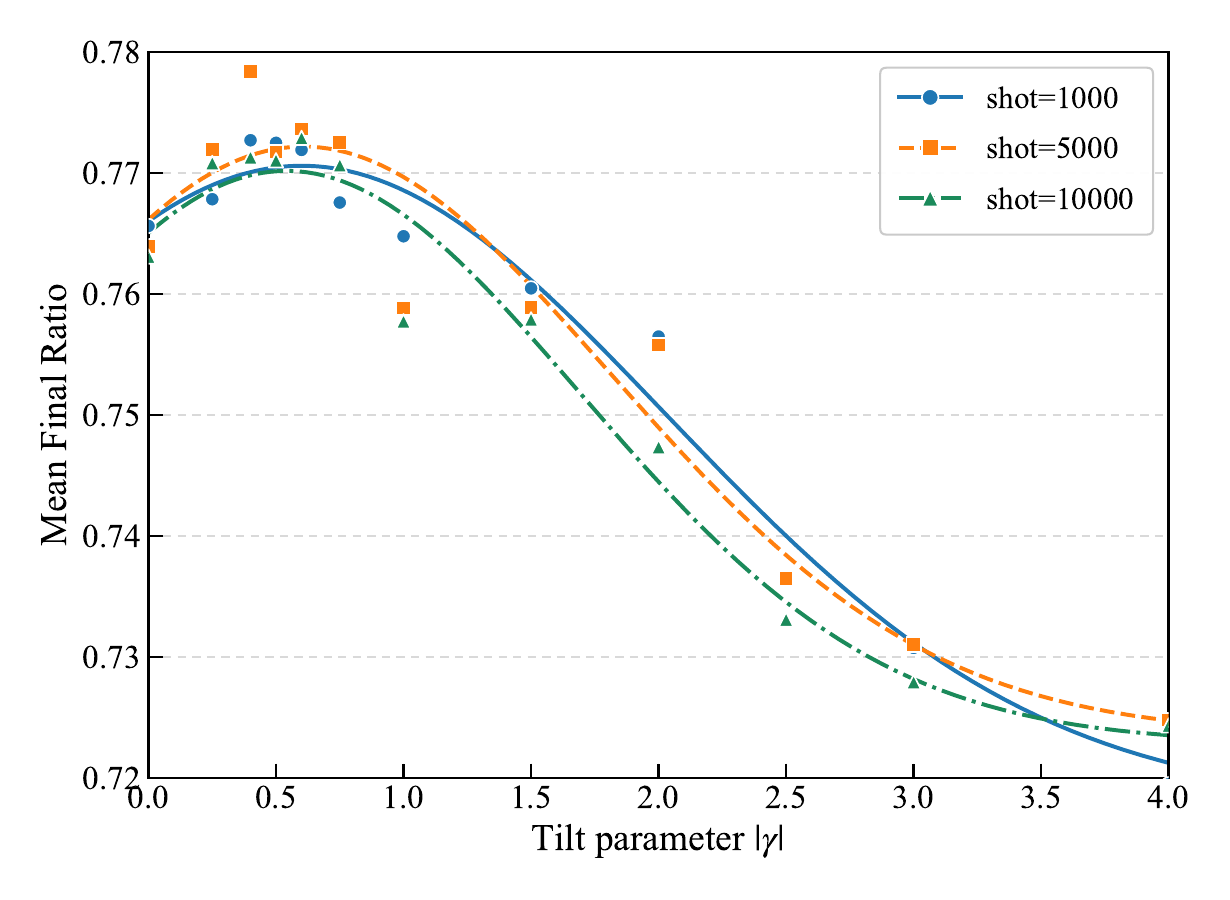}
    \caption{The shot budget affects the estimation of the QTL, hence affects the performance of the tilted-QAOA algorithm. We benchmark the algorithm under different shot budgets $N_{\mathrm{shot}} \in \{1000,\,5000,\,10000\}$. The curves show weighted robust fits over $|\gamma| \in [0,4]$ using shifted exponential-quadratic models of the form $d + (a + b|\gamma|)e^{-c|\gamma|^2}$. We utilize this function to model the relationship between the mean final ratio and the tilt parameter $|\gamma|$, to observe the behavior of the tilted-QAOA algorithm at different shot budgets. }
    \label{fig:FixedGammaShotsCombine}
\end{figure}

The dependence on $|\gamma|$ remains clearly non-monotonic: performance first improves as the tilt increases from zero, reaches its best values at a moderate tilt, and then degrades again for larger $|\gamma|$. In the present experiment, the preferred fixed tilt lies in a relatively small intermediate range, roughly around $|\gamma|\approx 0.4$--$0.7$, rather than at the largest tilt values. The optimal choice is also not entirely universal across shot budgets: the three curves are broadly similar, but their peak heights and local behavior differ slightly, with the intermediate shot budget attaining the best observed peak in this figure. At the same time, increasing the shot budget does not produce a uniformly better fixed-$|\gamma|$ curve across the full range of tilts, indicating that the benefit of stronger tilting is not simply unlocked monotonically by more samples. This behavior is consistent with the trainability--estimability trade-off developed in Sections~\ref{subsecFiniteTilt} and~\ref{sec:gradient_resolvability}: increasing $|\gamma|$ can reshape the optimization landscape in a potentially favorable way, but it also makes the tilted objective statistically harder to estimate from finite samples. In this regime, the main empirical message is therefore the existence of a shot-dependent moderate optimal tilt, rather than a monotonic advantage of either larger tilts or larger shot budgets. Because the current results are averaged over a limited number of instances, the precise location and height of the maxima should not be over-interpreted.

\subsection{Ascending tilt improves low-shot performance}
\label{subsec:scheduled_qtl}

The observation of tilt performance under finite-shot estimation motivates an ascending tilt schedule, as elaborated in Algorithm~\ref{algo:ascending_tilted_qaoa}. Rather than fixing a single tilt throughout training, we begin in the weak-tilt regime and gradually increase the tilt toward the stronger tail-focusing regime. Intuitively, this allows the optimizer to first benefit from the smoother, more expectation-like landscape at early iterations, and then progressively sharpen the objective once the parameters have entered a promising region. In this sense, the schedule acts as a continuation strategy: it starts from an easier objective and slowly transforms it into a more selective one. The ascending schedule is designed to combine both advantages: early-stage stability from the weak-tilt regime and late-stage sharpening from the stronger-tilt regime.

The scheduled variant follows the continuation idea of Ascending-CVaR, which progressively sharpens the tail objective and warm-starts later subproblems from earlier ones \cite{Kolotouros2022}. Ascending QTL implements the same principle: one first optimizes a smoother subproblem and then warm-starts the next, sharper subproblem using the previously obtained parameters. In our setting, the sharpening is implemented by the linear tilt schedule in the following. The tilt evolves during training from a smooth objective to a sharper one,
\begin{align}
\label{eq:ascending}
|\gamma_t| = |\gamma_{\text{init}}| + \frac{t}{T-1}\bigl(|\gamma_{\text{end}}|-|\gamma_{\text{init}}|\bigr),&
 \\\nonumber
t=0,1,\dots,T-1,&
\label{eq:linear_tilt_schedule}
\end{align}
with $\gamma_{\text{init}}=0$ and $\gamma_{\text{end}}<0$, where $T$ is the total number of optimization steps. For the fixed-versus-scheduled comparison in Figure~\ref{fig:fixed-ascending}, we made additional modifications: the smaller penalty and slower decay keep updates larger when the scheduled tilt becomes stronger late in training. The training starts from the expectation-value objective and gradually moves toward a more selective tilted objective. Figure~\ref{fig:fixed-ascending} compares fixed tilt with a linear schedule from 0 to $\gamma_{\text{end}}$, using the same optimization steps and total shot budget. For each $|\gamma|$, we first average performance over the fixed set of random initializations for each problem instance, and then report the mean ±1 standard error of the mean (SEM) across problem instances. In the $5000$-shot regime, the scheduled objective generally attains higher final cut ratios than fixed-tilt training at moderate-to-large final tilt magnitudes. This suggests that starting from a smoother objective and gradually increasing selectivity can improve optimization stability when loss estimates are noisy.

\begin{algorithm}[H]
\caption{Tilted-QAOA: ascending tilt experiment}
\label{algo:ascending_tilted_qaoa}
\begin{algorithmic}[1]
\REQUIRE cost Hamiltonian $H_C$; circuit depth $p$; initial parameters $\bm{\theta}^{(0)}, \bm{\tau}^{(0)}$; shots per objective evaluation $N_{\mathrm{shot}}$; tilt parameters set $\Gamma$; total optimization steps $T$; clipping parameter $C$; initial momentum $\bm{\nu}^{(0)}=\bm{0}$; momentum ratio $r_{\text{mom}}$; learning rate $\eta_{\text{lr}}$; learning penalty $\eta_{\text{lp}}$; decay offset $\eta_{\text{off}}$; decay power $\eta_{\text{pow}}$; the exact MaxCut value obtained from brute-force search $C_{\max}$; cut function $\text{Cut}(x)$;
\FORALL{$|\gamma_{\text{end}}| \in \Gamma$}
\FOR{$t=0$ \TO $T-1$}
\STATE calculate ascending tilt parameter;
$\gamma^{(t)}\gets \frac{t\gamma_{\text{end}}}{T-1}$.
\STATE estimate $\widehat{L}_{\gamma^{(t)}}(\bm{\theta}^{(t)}, \bm{\tau}^{(t)})$ using Algorithm~\ref{algo:estimation_diagonal};
\STATE obtain $\widehat{g}_{\gamma^{(t)}}(\bm{\theta}^{(t)}, \bm{\tau}^{(t)})\gets \nabla_{\bm{\theta}, \bm{\tau}}\widehat{L}_{\gamma^{(t)}}(\bm{\theta}^{(t)}, \bm{\tau}^{(t)})$;
\STATE calculate the norm of gradient $\|\widehat{g}_{\gamma^{(t)}}(\bm{\theta}^{(t)}, \bm{\tau}^{(t)})\|$;
\IF{$\|\widehat{g}_{\gamma^{(t)}}(\bm{\theta}^{(t)}, \bm{\tau}^{(t)})\|> C$}
\STATE $\widehat{g}_{\gamma^{(t)}}(\bm{\theta}^{(t)}, \bm{\tau}^{(t)}) \gets C\frac{\widehat{g}_{\gamma^{(t)}}(\bm{\theta}^{(t)}, \bm{\tau}^{(t)})}{\|\widehat{g}_{\gamma^{(t)}}(\bm{\theta}^{(t)}, \bm{\tau}^{(t)})\|} $;
\ENDIF
\STATE calculate the momentum;
\STATE $\bm{\nu}_{\gamma^{(t)}}^{(t+1)} \gets r_{\text{mom}} \bm{\nu}_{\gamma^{(t)}}^{(t)} + (1 - r_{\text{mom}}) \widehat{g}_{\gamma^{(t)}}(\bm{\theta}^{(t)}, \bm{\tau}^{(t)})$;
\STATE calculate the learning rate $\eta^{(t)} \gets \frac{\eta_{\text{lr}} }{(t + 1 + \eta_{\text{off}})^{\eta_{\text{pow}}}}$;
\STATE add penalty to learning rate for ascending $\gamma^{(t)}$;
\STATE $\eta^{(t)}_{\gamma^{(t)}} \gets  \frac{\eta^{(t)}}{(1 + \eta_{\text{lp}} \cdot |\gamma^{(t)}|)}$;
\STATE update the parameters;
\STATE $\bm{\theta}^{(t+1)} \gets \bm{\theta}^{(t)} - \eta^{(t)}_{\gamma^{(t)}} \cdot \bm{\nu}_{\gamma^{(t)}}^{(t+1)}|_{\bm{\theta}}$;
\STATE $\bm{\tau}^{(t+1)} \gets \bm{\tau}^{(t)} - \eta^{(t)}_{\gamma^{(t)}} \cdot \bm{\nu}_{\gamma^{(t)}}^{(t+1)}|_{\bm{\tau}}$;
\ENDFOR
\STATE prepare $|\psi_p(\bm{\theta}^{(T)}, \bm{\tau}^{(T)})\rangle$ and measure it in the computational basis to estimate the expected cut from samples with $P_{\bm{\theta}, \bm{\tau}}(x)$ and string $x\in\{0,1\}^{\otimes n}$;
\STATE calculate the cut value $C^{\text{QAOA}}_{\gamma_{\text{end}}} \gets \sum_{x}P_{\bm{\theta}, \bm{\tau}}(x)\text{Cut}(x)$;
\STATE calculate mean final ratio $\text{MFR}_{\gamma_{\text{end}}} \gets \frac{C^{\text{QAOA}}_{\gamma_{\text{end}}}}{C_{\max}}$.
\ENDFOR
\ENSURE $\{\text{MFR}_{\gamma_{\text{end}}}\}_{\gamma_{\text{end}}\in\Gamma}$.
\end{algorithmic}
\end{algorithm}

\begin{figure}[h]
  \centering

  \includegraphics[width=\linewidth]{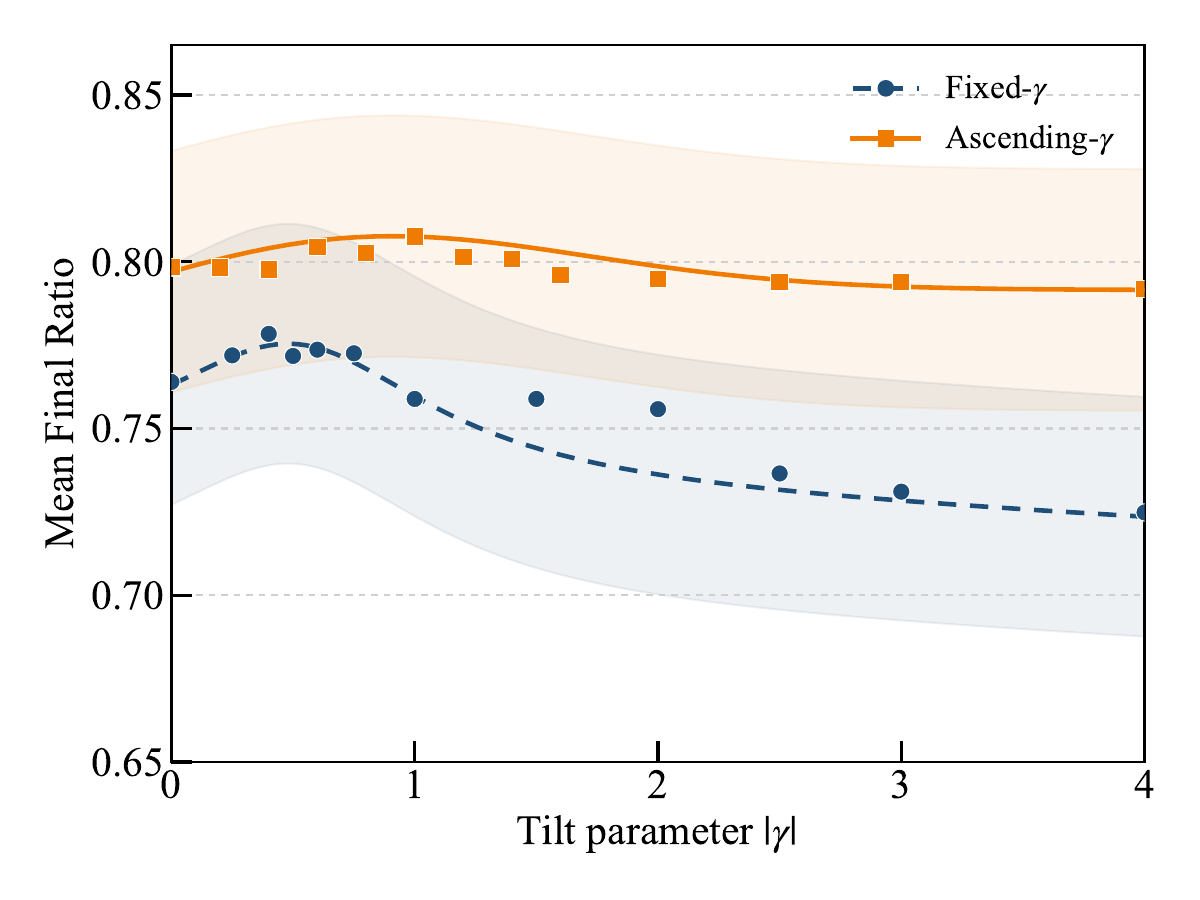}
    \caption{Mean final cut ratio versus final tilt magnitude $\gamma_{\text{end}}$ in $N_{\mathrm{shot}}=5000$. Fixed uses $\gamma = \gamma_{\text{end}}$ for all steps; ascending uses linear ascending $\gamma$ as in \cref{eq:ascending}. Both methods use the same optimizer, number of optimization steps, total shot budget, and the same shifted exponential quadratic model for fitting. Error bars denote $\pm 1$ SEM across problem instances.}
    \label{fig:fixed-ascending}
\end{figure}

\section{Conclusion and Discussions}

We introduced the Quantum Tilted Loss (QTL) as a tunable nonlinear objective for variational quantum optimization, analyzing it from three  perspectives: objective design, optimization geometry, and statistical estimability. The central message is that tilting is useful because it fundamentally changes the path to the target solution. We established that QTL preserves the global minima of the original problem while dynamically reshaping the local curvature of the optimization landscape. Furthermore, while previous tunable loss functions, such as the popular CVaR~\cite{Barkoutsos2020} or Gibbs objectives~\cite{gibbs_objective} have largely relied on heuristics and specific numerical demonstrations to outperform standard mean estimation, our framework provides a rigorous theoretical unification. We clarified the exact mathematical relationships between standard expectation-value training, tail-focused objectives like CVaR, and Gibbs formulations.

 Our analysis demonstrates that nonlinear objective design is not a free lunch. By tuning the tilt parameter, we can dynamically trade geometric flatness for statistical variance, shifting the primary algorithmic bottleneck from vanishing gradients to sample complexity. While stronger tilts successfully amplify the gradient signal $\|\nabla \mathcal{L}_\gamma \|$ needed to escape flat regions and guide descent methods, they simultaneously increase the statistical cost of estimating both the loss function and its gradient from finite measurement samples. This exposes a fundamental trainability-estimability trade-off: an exponential explosion in sampling variance is the price to pay when deploying aggressive nonlinear transformations to bypass barren plateaus in the near-term hardware regime.

The practical message is therefore not that one should universally tilt as aggressively as possible. Instead, exponential tilting must be treated as a controllable resource whose benefits depend entirely on the available measurement budget, the underlying observable structure, and the current stage of optimization. Our QAOA experiments on the MaxCut problem are consistent with these claims, demonstrating that when the shot budget is strictly limited, employing an ascending tilt schedule can outperform than relying on a fixed tilt. This approach allows the optimizer to balance reliable gradient estimation during the initial iterations with a sharper focus on tail outcomes in the final stages of training.

More broadly, these findings suggest that loss design is a tool just as important as ansätz design or optimizer selection for building variational quantum algorithms. Recent literature has made it clear that trainability, sampling costs, and potential quantum advantage cannot be treated as separate problems; they must be balanced together to find a practical advantage. Within this context, the QTL framework provides a mathematically rigorous and analytically tractable testbed for understanding exactly how nonlinear objectives reshape variational landscapes, and critically, where the real operational bottlenecks move when they do.

We expect that the following directions of future work will become relevant in the context of quantum optimization.

\begin{enumerate}
    \item \textbf{Ascending tilt schedules:} While our numerical analysis utilized fixed and linear ascending tilts, an important open problem is how to design schedules that adapt to circuit depth, system size, and shot budget in a principled way. Because the optimal tilt is inherently state-dependent, a natural extension is to dynamically adjust the parameter $\gamma_t$ using online empirical estimates of the noise-to-signal ratio. By autonomously maintaining statistical resolvability at each optimization step, this approach would explicitly bridge the QTL framework with advanced stochastic optimization, ensuring efficient landscape navigation under strict finite-shot constraints.
    \item \textbf{Beyond Ground-State Energy Minimization:} Because exponential tilting explicitly targets and isolates specific spectral regions, the QTL framework naturally extends to tasks such as quantum tomography, thermal-state learning, and Hamiltonian learning. A particularly concrete direction lies in Gibbs-state preparation: integrating the QTL with free-energy objectives within dynamic parameterized quantum circuits~\cite{deshpande2024dynamic} could provide a rigorous, analytically tractable mechanism for adaptive thermal state generation.
    \item \textbf{Estimating  Non-Commuting Hamiltonians:} Operationally, the QTL framework is currently most tractable for diagonal Hamiltonians, perturbative tilt regimes, or highly degenerate observables. An open direction is the development of efficient estimation protocols for general, non-commuting Hamiltonians. Promising approaches include  classical shadow tomography for near-term surrogates \cite{huang2020predicting,Stefan2022}, or  quantum signal processing (QSP) and block-encoding techniques in early fault-tolerant architectures. 
\end{enumerate}

\section*{Acknowledgements} YQ and JL thank Adrián Pérez-Salinas for discussions on barren plateaus during his visit at CQT. YQ thanks Richard Küng, Diego García-Martín and Zoë Holmes for discussions and feedback. Parts of the manuscript were revised with the help of LLMs. This work is supported by the National Research Foundation, Singapore, and A*STAR under its CQT Bridging Grant and its Quantum Engineering Programme under grant NRF2021-QEP2-02-P05. JL is supported by the National Research Foundation through the NRF Investigatorship on Quantum-Enhanced Agents (Grant No. NRF-NRFI09-0010),   the RIE25 Japan-Singapore Joint Call on Quantum R25J4IR111, the Singapore Ministry of Education Tier 1 Grant RT4/23 and RG91/25 (S) and the National Quantum Office, hosted in A*STAR, under its Centre for Quantum Technologies Funding Initiative (S24Q2d0009).

\bibliography{bibliography}

\appendix
\setcounter{subsection}{0}
\setcounter{table}{0}
\setcounter{figure}{0}

\vspace{2cm}
\onecolumngrid
\vspace{2cm}

\setcounter{equation}{0}
\setcounter{table}{0}
\setcounter{section}{0}
\setcounter{definition}{0}
\setcounter{figure}{0}

\newpage

\section*{Supplemental material}

\section{Classical Variational Representations and Entropic Risk}\label{app:classical_risk}

\subsection{Statistics and machine learning}\label{subsec:classml}
Li et al. \cite{li2020tilted, li2023tilted} replaced the usual empirical risk minimization (ERM) with a $\gamma$-tilted empirical risk, an exponential-tilting of the per-sample/objective values that smoothly interpolates between ERM and worst-case (or best-case) behavior and is known to relate to VaR/CVaR-type risks.

\begin{definition}[Tilted Empirical Risk Minimization (TERM) \citep{li2020tilted,li2023tilted}]\label{def:term}
For $\gamma \in \mathbbm{R}$, the $\gamma$-tilted loss in ERM is defined as the
tilted empirical risk minimization, given by
\be
\widetilde{R}_{\gamma}(L):=
  \frac{1}{\gamma}\log \left(\frac{1}{n} \sum_{i \in [n]}
  e^{\gamma L(h({\bf x}^{(i)}), y_i)} \right),
\ee
where $L(h({\bf x}^{(i)}), y_i)$ is the loss function on hypothesis
$h({\bf x}^{(i)})$ and true label $y_i$, and $n$ is the number of training
samples.
\end{definition}

\begin{definition}
[LogSumExp (LSE)]\label{def:lse}
LSE is a smooth function that maps a vector of real numbers $x_1,\dots,x_n$ to
\be
\operatorname{LSE}(x_1,\dots,x_n) = \log\Big(\sum_{i=1}^n e^{x_i}\Big).
\ee
\end{definition}
LSE is often used as a numerically stable way to work with sums of exponentials in log space, and as a smooth approximation of the maximum:
\be
\max_i x_i \le \operatorname{LSE}(x_1,\dots,x_n) \le \max_i x_i + \log n.
\ee
The LogSumExp function arises as the convex conjugate of the negative Shannon entropy, and possesses the variational representation (see Sec.~3.6 in Ref.~\cite{Wainwright2007}). Let $\Delta_n := \{p\in \mathbb{R}^+ : \, \sum_{i=1}^n p_i =1\}$, then
\be\label{Eq:lse_var}
\operatorname{LSE}(x) = \sup_{p\in\Delta_n}  \left\{\sum_{i=1}^n p_i x_i + H(p)\right\},
\ee
where $H(p) = -\sum_{i=1}^n p_i \log p_i$ is the Shannon entropy of $p$. The unique maximizer is the softmax distribution
\be
p_i^* (x) = \frac{e^{x_i}}{\sum_{j=1}^n e^{x_j}}.
\ee
Write the entropy in terms of the KL divergence to the uniform distribution $u = (1/n, \dots, 1/n)$ via
$H(p) = \log n - D_{\operatorname{KL}}(p \| u)$, where $D_{\operatorname{KL}}$ denotes KL divergence between two distributions. The representation in \cref{Eq:lse_var} can be rewritten as
\be
\operatorname{LSE}(x) = \log n + \sup_{p\in\Delta_n}  \left\{\sum_{i=1}^n p_i x_i - D_{\operatorname{KL}}(p \| u)\right\},
\ee
Now consider TERM for $\gamma>0$, the variational representation is given by
\be\label{eq:term_var}
\widetilde{R}_{\gamma}(x) = \sup_{p\in\Delta_n}  \left\{\sum_{i=1}^n p_i x_i - \frac{1}{\gamma} D_{\operatorname{KL}}(p \| u)\right\}.
\ee
That is, $\widetilde{R}_\gamma$ can be viewed as a KL-regularized worst-case reweighting of the empirical distribution. For $\gamma < 0$, an analogous expression holds with the supremum replaced by an infimum. The corresponding representation in terms of empirical risk is proved in Lemma~7 of Ref.~\cite{li2023tilted}.

The finite-sample TERM is the empirical analogue of a classical \emph{entropic risk measure}. In classical probability and mathematical finance, entropic risk is a canonical convex risk measure defined as below.
\begin{definition}[Entropic Risk Measure \cite{AhmadiJavid2011, FLLMER2011, Fllmer2016}]\label{def:erm}
Let $(\Omega,\mathcal{F},P)$ be a probability space and
$x:\Omega\to\mathbb{R}$ a loss random variable. For $\gamma\in\mathbb{R}$, the entropic risk is defined as
\be
\rho_\gamma(x):=
\frac{1}{\gamma}\log \mathbb{E}_P\big[e^{\gamma x}\big],
\ee
whenever the logarithm is finite.
\end{definition}
When $P$ is the empirical distribution over the training samples, the quantity $\rho_\gamma(x)$ reduces exactly to the TERM $\widetilde{R}_\gamma(x)$. For small $|\gamma|$, $\rho_\gamma(x)$ is close to the usual expectation $\mathbb{E}_P [x]$, while $\gamma>0$ puts exponentially increasing weight on large losses, making $\rho_\gamma(x)$ sensitive to tail events. A key feature is its variational representation
\be\label{eq:erm_var}
\rho_\gamma(x) =\sup_{Q \ll P}
\left\{ \mathbb{E}_Q[x] - \frac{1}{\gamma} D_{\operatorname{KL}}(Q \| P)\right\},
\ee
for $\gamma>0$, where the supremum is over all probability measures $Q$ absolutely continuous with respect to $P$. The KL-regularized variational representation in \cref{eq:erm_var} places QTL in the broader distributionally robust optimization paradigm, where one optimizes against distributions in an ambiguity set around a reference law \cite{Kuhn2025}. 
In words, $\rho_\gamma(x)$ is the worst-case expected loss over all distributions $Q$ in a KL-regularized neighbourhood of the reference distribution $P$. The maximizer $Q_\gamma$ is given by the exponential tilting (or called the Esscher transform \cite{esscher1932probability})
\be\label{eq:exptilt}
\frac{dQ_\gamma^i}{dP} = \frac{e^{\gamma x_i}}{\mathbb{E}_P[e^{\gamma x}]},
\ee
which is the continuous analogue of the softmax weights in the finite-sample representation of LogSumExp. Specializing \cref{eq:erm_var}  to the empirical distribution $P = u$ recovers the variational form of LogSumExp and TERM in \cref{Eq:lse_var} and \cref{eq:term_var}.

\subsection{Economic expected utility theory}\label{sec:risk_section}
Classical economic treatments of decision-making under uncertainty provide a standard framework for evaluating random outcomes and formalizing risk preferences \cite{mas1995microeconomic,Gollier2001,Fllmer2016}. 
Three core ingredients are emphasized: (a) expected utility as a criterion for ranking uncertain payoffs and its equivalent representation via certainty equivalents; (b) the role of utility curvature in determining risk attitudes, summarized locally by the Arrow--Pratt measure of absolute risk aversion; and (c) the special case of constant absolute risk aversion (CARA), for which exponential utility yields a closed-form certainty equivalent and naturally leads to the entropic (exponential) risk measure. 
We briefly review these notions below for completeness.

\textit{Utility, certainty equivalents, and risk.}
In classical decision theory, a random payoff $X$ is evaluated through a utility function
$u:\mathbb{R}\to\mathbb{R}$ by means of its expected utility.
When $u$ is strictly increasing, this evaluation can be expressed in terms of a
\emph{certainty equivalent}, defined as the deterministic value $\mathrm{CE}(X)$
that renders the agent indifferent between $X$ and a sure outcome.
\begin{align}
    u(\mathrm{CE}(X)) = \mathbb{E}[u(X)] ,
\end{align}
or equivalently,
\begin{align}
    \mathrm{CE}(X)
    = u^{-1}\!\left(\mathbb{E}[u(X)]\right).
\end{align}
The certainty equivalent provides a scalar summary of the full distribution of $X$
and coincides with the expectation only in the risk-neutral case.

\textit{Risk attitudes and absolute risk aversion.}
The curvature of $u$ determines the agent's attitude toward risk:
affine utilities correspond to risk neutrality, concave utilities to risk aversion,
and convex utilities to risk seeking.
A local measure of risk sensitivity is given by the absolute risk aversion
\begin{align}
    A(x) := -\frac{u''(x)}{u'(x)} .
\end{align}
Utilities with constant absolute risk aversion (CARA) satisfy $A(x)\equiv\kappa>0$,
which uniquely (up to affine transformations) leads to exponential utility functions.

\textit{Exponential utility and entropic risk.}
For CARA utilities, a convenient representative is the exponential utility
\begin{align}
    u(x) \propto -e^{-\kappa x},
    \qquad \kappa>0 .
\end{align}
In this case, the certainty equivalent of a random variable $X$ admits a closed form,
\begin{align}
    \mathrm{CE}(X)
    = -\frac{1}{\kappa}\log \mathbb{E}\!\left[e^{-\kappa X}\right].
\end{align}
When $X$ is interpreted as a loss, this naturally induces the entropic risk measure
\begin{align}
    \rho_\kappa(X)
    := \frac{1}{\kappa}\log \mathbb{E}\!\left[e^{\kappa X}\right],
\end{align}
which reduces to the expectation in the limit $\kappa\to 0$ and increasingly
emphasises tail events as $|\kappa|$ grows.

\section{Proof of properties of QTL}\label{app:propery-QTL}

Lemma~\ref{lem:prop} [Basic Properties of QTL]
\textit{
The QTL satisfies the following:
\begin{enumerate}
    \item Non-negativity: $\mathcal{L}_\gamma(O,\rho) \ge 0$ if the eigenvalues of $O$ are non-negative, i.e., $O\succeq 0$.
    \item Additivity: $ \mathcal{L}_\gamma(O,\rho)=\mathcal{L}_\gamma(O_A,\rho_A)+\mathcal{L}_\gamma(O_B,\rho_B)$ if $\rho=\rho_A\otimes\rho_B$ and $O=O_A\otimes\mathbb{I}_B+\mathbb{I}_A\otimes O_B$.
    \item Shift: $\mathcal{L}_\gamma(O+c\mathbb{I},\rho)=\mathcal{L}_\gamma(O,\rho)+c$ for any constant $c\in\mathbb R$.
    \item Monotonicity in $\gamma$: $\mathcal{L}_\gamma(O,\rho)$ is non-decreasing on $\mathbb{R}$.
\end{enumerate}
}
\begin{proof}
Denote $\mathcal{L}_\gamma(O,\rho) =  \frac{1}{\gamma}\log \tr \left( e^{\gamma O} \rho\right) $. We have:

\begin{enumerate}
\item For $\gamma>0$, if $O\succeq 0$ then $e^{\gamma O}\succeq \mathbb{I}$; and for $\gamma<0$, $e^{\gamma O}\preceq \mathbb{I}$. Hence, $\tr(\rho e^{\gamma O})\ge \tr(\rho)=1$ when $\gamma>0$, yielding $\mathcal{L}_\gamma(O,\rho)\ge 0$; Similarly, $\mathcal{L}_\gamma(O,\rho)\ge 0$ when $\gamma<0$.
For $\gamma=0$, $\mathcal{L}_\gamma(O,\rho)=\tr(\rho O)\ge 0$.
\item The summands commute: $[O_A\otimes\mathbb{I}_B,\mathbb{I}_A\otimes O_B]=0$, hence $e^{\gamma(O_A\otimes\mathbb{I}_B+\mathbb{I}_A\otimes O_B)}=e^{\gamma O_A}\otimes e^{\gamma O_B}$. Then
\begin{align}
\mathcal{L}_\gamma(O,\rho)
&=\frac{1}{\gamma}\log\tr\big((\rho_A\otimes\rho_B)(e^{\gamma O_A}\otimes e^{\gamma O_B})\big)\\\nonumber
&=\frac{1}{\gamma}\log\Big(\tr(\rho_A e^{\gamma O_A})\tr(\rho_B e^{\gamma O_B})\Big)\\\nonumber
&= \mathcal{L}_\gamma(O_A,\rho_A)+\mathcal{L}_\gamma(O_B,\rho_B).
\end{align}
\item Let $c\in\mathbb R$. For $\gamma\neq 0$, since $[O,\mathbb I]=0$ we have
$e^{\gamma(O+c\mathbb I)}=e^{\gamma c}\,e^{\gamma O}$. Hence
\begin{align}
\mathcal{L}_\gamma(O+c\mathbb I,\rho)
&=\frac{1}{\gamma}\log\tr\!\big(\rho\,e^{\gamma(O+c\mathbb I)}\big)\\\nonumber
&=\frac{1}{\gamma}\log\Big(e^{\gamma c}\tr(\rho e^{\gamma O})\Big)\\\nonumber
&=\frac{1}{\gamma}\big(\gamma c+\log\tr(\rho e^{\gamma O})\big)\\\nonumber
&=c+\mathcal{L}_\gamma(O,\rho).
\end{align}
For $\gamma=0$, using $\mathcal{L}_0(O,\rho)=\tr(\rho O)$ and $\tr(\rho)=1$, $\mathcal{L}_0(O+c\mathbb I,\rho)=\tr(\rho O)+c\tr(\rho)=\mathcal{L}_0(O,\rho)+c.$

\item Let the spectral decomposition be $O=\sum_i o_i \Pi_i$, where $\{\Pi_i\}$ are orthogonal projectors.
Define $p_i := \tr(\rho \Pi_i)$ with $p_i\ge 0, \,\sum_i p_i = 1$. 
Then for any $\gamma\in\mathbb{R}$, $Z(\gamma):=\tr(\rho e^{\gamma O})
=\sum_i \tr(\rho \Pi_i) e^{\gamma o_i}
=\sum_i p_i e^{\gamma o_i}$. Now define the $\gamma$-tilted distribution
\begin{align}
q_i^{(\gamma)} := \frac{p_i e^{\gamma o_i}}{Z(\gamma)}.
\end{align}
A direct differentiation gives
\begin{align}
\frac{\mathrm d}{\mathrm d\gamma}\log Z(\gamma)
=\frac{\sum_i p_i o_i e^{\gamma o_i}}{\sum_i p_i e^{\gamma o_i}}
=\sum_i q_i^{(\gamma)} o_i.
\end{align}
Moreover, since
\begin{align}
\log\frac{q_i^{(\gamma)}}{p_i}=\gamma o_i-\log Z(\gamma),
\end{align}
the KL divergence satisfies
\begin{align}
D\left(q^{(\gamma)}\|p\right)
&:=\sum_i q_i^{(\gamma)}\log\frac{q_i^{(\gamma)}}{p_i}\\\nonumber
&=\sum_i q_i^{(\gamma)}\big(\gamma o_i-\log Z(\gamma)\big)\\\nonumber
&=\gamma \sum_i q_i^{(\gamma)} o_i - \log Z(\gamma).
\end{align}
Therefore, 
\begin{align}
\frac{\mathrm d}{\mathrm d\gamma}\mathcal{L}_\gamma(O,\rho)
&=\frac{\mathrm d}{\mathrm d\gamma}\left(\frac{1}{\gamma}\log Z(\gamma)\right)\\\nonumber
&=\frac{\gamma\,\frac{\mathrm d}{\mathrm d\gamma}\log Z(\gamma)-\log Z(\gamma)}{\gamma^2}\\\nonumber
&=\frac{D\!\left(q^{(\gamma)}\|p\right)}{\gamma^2}\ \ge\ 0,
\end{align}
since $D(\cdot\|\cdot)\ge 0$ and $\gamma^2>0$. Hence $\mathcal{L}_\gamma(O,\rho)$ is non-decreasing on all $\mathbb{R}$.
\end{enumerate}
\end{proof}

Lemma~\ref{Lemma:convergence} [Convergence of QTL]
\textit{
Assume the matrix elements of the operator $O(x)$ are continuously differentiable with respect to the parameters $x \in \mathcal{X} \subseteq \mathbb{R}^d$. Under this condition, the Quantum Tilted Loss converges to the standard expectation value in the zero-tilt limit as
\begin{align}
    \lim _{\gamma \rightarrow 0} \mathcal{L}_\gamma(O,\rho) := \mathcal{L}_0(O,\rho)= \tr \left( O \rho \right).
\end{align}
}

\begin{proof}
By the continuous differentiability assumption, the matrix exponential $e^{\gamma O}$ is analytic. Applying L’Hôpital’s rule with respect to the tilt parameter $\gamma$ yields
\begin{align}
\lim _{\gamma \rightarrow 0} \mathcal{L}_\gamma(O,\rho) 
    &= \lim _{\gamma \rightarrow 0} \frac{1}{\gamma}\log \tr \left( e^{\gamma O} \rho\right) \\\nonumber
    &= \lim _{\gamma \rightarrow 0} \frac{ \tr \left( O e^{\gamma O} \rho\right)}{\tr \left( e^{\gamma O} \rho\right)} \\\nonumber
    &= \tr \left( O \rho \right).
\end{align}
\end{proof}

Lemma~\ref{lem:limit} [$\gamma$-Limits of QTL]
\textit{
Let $O = \sum_i \lambda_i \Pi_i$ be the spectral decomposition of an observable $O$, where $\Pi_i$ are the corresponding eigenspace projectors. Given a quantum state $\rho$, let $p_i = \mathrm{Tr}(\rho \Pi_i)$. Define $m_\rho$ and $M_\rho$ as the minimum and maximum eigenvalues of $O$ that have a strictly positive  weight on $\rho$ as $m_\rho = \min \{ \lambda_i \mid p_i > 0 \}, M_\rho = \max \{ \lambda_i \mid p_i > 0 \}$. The QTL exhibits the following limiting behavior: 
\begin{align}
    \lim_{\gamma \to -\infty} \mathcal{L}_\gamma(O,\rho) &= m_\rho \quad (\text{minimum})\\\nonumber
    \lim_{\gamma \to 0} \mathcal{L}_\gamma(O,\rho) &= \tr(O\rho) \quad (\text{expectation value}) \\\nonumber
    \lim_{\gamma \to +\infty} \mathcal{L}_\gamma(O,\rho) &= M_\rho \quad (\text{maximum}) 
\end{align}
}
\begin{proof}
The limit as $\gamma \rightarrow 0$ was established in Lemma~\ref{Lemma:convergence}. Let $\{(\lambda_i,P_i)\}_i$ be the spectral decomposition of $O$, where $\{P_i\}_i$ denotes the complete spectral PVM of $O$ (so $\sum_i P_i=\mathbb{I}$). For a normalized state $\rho$, $p_i=\tr(P_i\rho)$ is a probability distribution supported on the eigenspaces that intersect $\mathrm{supp}(\rho)$. Thus, 
$0 \leq p_i \leq 1$ with $\sum_i p_i=1$.
Let
\be
Z(\gamma) = \tr\big(e^{\gamma O}\rho\big)
= \sum_i e^{\gamma \lambda_i} p_i.
\ee
Since $Z(\gamma)$ is a convex combination of the numbers $\{e^{\gamma \lambda_i}\}$ over that support, we thus have
\be
 Z(\gamma) \in [\min\{e^{\gamma m_\rho}, e^{\gamma M_\rho}\}, \max\{e^{\gamma m_\rho}, e^{\gamma M_\rho}\} ]  \qquad\text{for all }\gamma\in\mathbb{R}.
\ee
After taking logarithm, the limits as $\gamma\to -\infty$ yields
\begin{align}
\lim_{\gamma\to -\infty}\frac{1}{\gamma}\log Z(\gamma) &= \lim_{\gamma\to -\infty}\frac{1}{\gamma}\log \big(\sum_i e^{\gamma \lambda_i} p_i\big)\\\nonumber
&\leq \lim_{\gamma\to -\infty}\frac{1}{\gamma}\log \big(\sum_i e^{\gamma m_\rho} p_i\big)\\\nonumber
&= m_\rho.
\end{align}
On the other hand,
\begin{align}
\lim_{\gamma\to -\infty}\frac{1}{\gamma}\log Z(\gamma) &= \lim_{\gamma\to -\infty}\frac{1}{\gamma}\log \big(\sum_i e^{\gamma \lambda_i} p_i\big)\\\nonumber
&\geq \lim_{\gamma\to -\infty}\frac{1}{\gamma}\log \big(e^{\gamma m_\rho} p_{m_\rho}\big)\\\nonumber
&= m_\rho + \lim_{\gamma\to -\infty}\frac{1}{\gamma}\log \big( p_{m_\rho}\big)\\\nonumber
&= m_\rho.
\end{align}
For the limit as $\gamma \rightarrow +\infty$: This argument follows the same reasoning process as in the case where $\gamma \to -\infty$ above. Following the same steps, it leads to $\lim_{\gamma \to +\infty} \mathcal{L}_\gamma(O,\rho) = M_\rho$.
\end{proof}

Lemma~\ref{lem:faithfulness} [Faithfulness of QTL]
\textit{Let $O$ be an observable with minimum eigenvalue $o_{\min}$ and corresponding eigenspace projector $\Pi_{\min}$. For any non-zero tilt parameter $\gamma \in \mathbb{R} \setminus \{0\}$, the global minimum of the Quantum Tilted Loss over the set of all density operators $\mathcal{S}$ is exactly the ground state energy of $O$:
\begin{align}
    \min_{\rho \in \mathcal{S}} \mathcal{L}_\gamma(O,\rho) = o_{\min}.
\end{align}
Furthermore, a state $\rho^\star \in \mathcal{S}$ achieves this minimum if and only if it is entirely supported on the ground state eigenspace, $\tr(\rho^\star \Pi_{\min}) = 1$.}

\begin{proof}
Consider the objective $\min_{\rho \in \mathcal{S}} \frac{1}{\gamma} \log \tr(e^{\gamma O} \rho)$. For $\gamma > 0$, the prefactor $1/\gamma$ is positive, so minimizing the loss is equivalent to minimizing the linear functional $\tr(e^{\gamma O} \rho)$. Since $f(x) = e^{\gamma x}$ is strictly increasing for $\gamma > 0$, the minimum eigenvalue of $e^{\gamma O}$ is $e^{\gamma o_{\min}}$. This minimum expectation value is achieved if and only if $\rho$ is supported entirely on the eigenspace $\Pi_{\min}$. Substituting this minimizer yields $\frac{1}{\gamma} \log(e^{\gamma o_{\min}}) = o_{\min}$. 

Conversely, for $\gamma < 0$, the prefactor $1/\gamma$ is negative, so minimizing the loss requires maximizing $\tr(e^{\gamma O} \rho)$. Since $f(x) = e^{\gamma x}$ is strictly decreasing for $\gamma < 0$, the maximum eigenvalue of $e^{\gamma O}$ corresponds to the smallest eigenvalue of $O$, which is again $e^{\gamma o_{\min}}$. This maximum is identically achieved if and only if $\rho$ is supported on $\Pi_{\min}$, yielding the exact same global minimum value $o_{\min}$.
\end{proof}

Lemma~\ref{lem:gibbs-variational} [Gibbs variational bound]
\textit{
Fix a real number $\gamma \neq 0$. Let $\rho$ be a full-rank density operator. Then the quantum tilted loss has the following variational representation:
For $\gamma>0$, it is given by a supremum:
\begin{align}
\mathcal{L}_\gamma 
\ge \sup_{\sigma} \Big\{ \tr[\sigma O] - \frac{1}{\gamma}D(\sigma\Vert \rho) \Big\}.
\end{align}
For $\gamma<0$, it is given by an infimum:
\begin{align}
\mathcal{L}_\gamma  \le \inf_{\sigma}\Big\{\tr[\sigma O] - \frac{1}{\gamma} D(\sigma\Vert \rho) \Big\},
\end{align}
where the optimization is over all density operators $\sigma$, and $D(\sigma\Vert\rho):=\tr[\sigma(\log\sigma-\log\rho)]$ is the Umegaki quantum relative entropy. Moreover, equality holds for both cases iff $[\rho,O]=0$, in which case the unique optimizer is $\sigma^\star\propto e^{\log\rho+\gamma O}$.
}

\begin{proof}
The Gibbs variational principle (a.k.a. the Donsker–Varadhan variational formula) states that for any Hermitian operator $K$ and density operator $Q$:
\be
\log\tr[e^{\log Q+K}]=\sup_{\sigma}\Big\{\tr[\sigma K]-D(\sigma\Vert Q)\Big\},
\ee
where the supremum is attained at $\sigma^\star\propto e^{\log Q+K}$. By Golden–Thompson inequality $\tr e^{A+B}\le \tr(e^A e^B)$ for Hermitian $A,B$ it yields that $\mathcal{L}_\gamma \ge \tfrac{1}{\gamma}\log\tr[e^{\log Q+K}]$ for $\gamma >0$ ( and $\le$ for $\gamma <0$). The lemma is obtained by setting $K=\gamma O$ and $Q=\rho$, and then dividing the entire expression by $\gamma$. When $\gamma<0$, dividing by a negative number flips the supremum to an infimum, yielding the second case.
\end{proof}

\section{Relations to other loss functions}

\subsection{Proof of Theorem~\ref{thm:cvar_vs_tilt}}\label{app:cvar_relation}\label{app:cvar_vs_tilt}

Theorem~\ref{thm:cvar_vs_tilt} [Negative $\gamma$ regime]
\textit{
Let $E(\bm{\theta})$ be a real-valued random energy with law $P_{\bm{\theta}}$. 
For any $\gamma < 0$ and any $\alpha\in(0,1)$, define the tilted loss
\begin{align}
    \mathcal{L}_\gamma(E,P_{\bm{\theta}}) 
    &:= \frac{1}{\gamma}\log \EX_{P_{\bm{\theta}}} \left[ e^{\gamma E} \right] \\\nonumber
    &= \frac{1}{\gamma}\log \left( \int_{-\infty}^{\infty} e^{\gamma E} \, dP_{\bm{\theta}}(E) \right)
\end{align}
and the lower-tail Conditional Value-at-Risk $\mathrm{CVaR}_{\alpha}(\bm{\theta})$ as in \cref{eq:cvar_integral}.  
Then the following inequality holds
\begin{align}
    \mathrm{CVaR}_{\alpha}(\bm{\theta})
    \geq
    \mathcal{L}_\gamma(\bm{\theta})
    + \frac{1}{\gamma}\log\!\left( \frac{1}{\alpha} \right).
\end{align}
}
\begin{proof}
We utilize the unified quantile integral representation for expectations and $\mathrm{CVaR}_\alpha$. By the probability integral transform, the moment-generating function can be expressed as an integral over the uniform interval $(0, 1)$ via the generalized inverse CDF $F_{\bm{\theta}}^{-1}(u)$ as
\begin{align}
\mathbb{E}_{P_{\bm{\theta}}}\!\left[e^{\gamma E(\bm{\theta})}\right]
    = \int_{0}^{1} e^{\gamma F_{\bm{\theta}}^{-1}(u)} \, du.
\end{align}
Since the integrand is strictly positive, we can lower-bound the integral by restricting the domain to the lowest $\alpha$-fraction $u \in (0, \alpha)$
\begin{align}
\mathbb{E}_{P_{\bm{\theta}}}\!\left[e^{\gamma E(\bm{\theta})}\right]
    &\ge \int_{0}^{\alpha} e^{\gamma F_{\bm{\theta}}^{-1}(u)} \, du \nonumber\\
    &= \alpha \left( \frac{1}{\alpha} \int_{0}^{\alpha} e^{\gamma F_{\bm{\theta}}^{-1}(u)} \, du \right).
\end{align}
Because the exponential function $x \mapsto e^{\gamma x}$ is convex for any $\gamma \in \mathbb{R}$, we apply Jensen's inequality to the normalized integral (which represents an expectation under the uniform measure on $(0, \alpha)$)
\begin{align}
\frac{1}{\alpha} \int_{0}^{\alpha} e^{\gamma F_{\bm{\theta}}^{-1}(u)} \, du
    &\ge \exp\!\left( \gamma \left( \frac{1}{\alpha} \int_{0}^{\alpha} F_{\bm{\theta}}^{-1}(u) \, du \right) \right) \nonumber\\
    &= \exp\!\left(\gamma\,\mathrm{CVaR}_{\alpha}(\bm{\theta})\right),
\end{align}
where we substituted the rigorous definition of $\mathrm{CVaR}_\alpha$ valid for both discrete and continuous distributions. Combining both inequalities yields
\begin{align}
\mathbb{E}_{P_{\bm{\theta}}}\!\left[e^{\gamma E(\bm{\theta})}\right]
    &\ge \alpha\,\exp\!\left(\gamma\,\mathrm{CVaR}_{\alpha}(\bm{\theta})\right).
\end{align}
Taking logarithms on both sides, we obtain:
\begin{align}
\log \mathbb{E}_{P_{\bm{\theta}}}\!\left[e^{\gamma E(\bm{\theta})}\right]
    &\ge \log \alpha + \gamma\,\mathrm{CVaR}_{\alpha}(\bm{\theta}).
\end{align}
Dividing by $\gamma < 0$ reverses the direction of the inequality:
\begin{align}
\mathcal{L}_\gamma(E,P_{\bm{\theta}}) 
    &= \frac{1}{\gamma}\log \mathbb{E}_{P_{\bm{\theta}}}\!\left[e^{\gamma E(\bm{\theta})}\right] \nonumber\\
    &\le \mathrm{CVaR}_{\alpha}(\bm{\theta})
       + \frac{1}{\gamma}\log \alpha.
\end{align}
Rewriting $\log \alpha = -\log (1/\alpha)$ and rearranging gives:
\begin{align}
\mathrm{CVaR}_{\alpha}(\bm{\theta})
    \;\geq\;
    \mathcal{L}_\gamma(\bm{\theta})
    + \frac{1}{\gamma}\log\!\left(\frac{1}{\alpha}\right),
\end{align}
which is the claimed inequality.
\end{proof}

\subsection{Proof of Theorem~\ref{thm:gibbs_tight}}\label{app:gibbs_proof}

\begin{lemma}\label{lem:lower_bound_gibbs}
Let 
$O = \sum_{i=1}^d E_i |E_i\rangle\!\langle E_i|$ 
with eigenvalues 
$E_{\min} = E_1 \le \cdots \le E_{\max} := E_d$.
For $\beta>0$, let the Gibbs state be
\begin{align}
\rho_\beta = \frac{e^{-\beta O}}{Z(\beta)},
\qquad 
Z(\beta) = \sum_{i=1}^d e^{-\beta E_i}.
\end{align}
For $\gamma<0$, define
\begin{align}
&\mathcal{L}_\gamma(\rho_\beta)
    := \frac{1}{\gamma}\log \Tr\!\big(e^{\gamma O}\rho_\beta\big),
\quad f(\gamma)
    := \mathcal{L}_\gamma(\rho_\beta)
    + \frac{1}{\gamma}\log\!\left(\frac{1}{\alpha}\right),
\quad \alpha\in(0,1).
\end{align}
Then, for all $\gamma<0$,
\begin{align}
E_{\min} + \frac{1}{\gamma}\log\!\left(\frac{1}{\alpha}\right) \leq f(\gamma)
\leq
E_{\min} + \frac{1}{\gamma}\log\!\left(\frac{p_{\min}}{\alpha}\right) ,
\end{align}
where $p_{\min}  = \langle E_{\min}|\rho_\beta|E_{\min}\rangle .$
\end{lemma}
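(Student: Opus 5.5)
Since $\rho_\beta$ commutes with $O$, the quantity $\Tr(e^{\gamma O}\rho_\beta)$ reduces to a classical moment-generating function over the Gibbs weights $p_i := e^{-\beta E_i}/Z(\beta)$. Concretely,
\begin{align}
\Tr\!\big(e^{\gamma O}\rho_\beta\big)=\sum_{i=1}^d p_i e^{\gamma E_i},\qquad \sum_{i=1}^d p_i=1,
\end{align}
and $p_{\min}=p_1=e^{-\beta E_1}/Z(\beta)>0$, so both sides of the claimed bound are finite. The plan is to sandwich this sum between two exponentials in $E_{\min}$. Then I take $\frac{1}{\gamma}\log$, keeping track of the sign flip caused by $\gamma<0$, and finally add the common term $\frac{1}{\gamma}\log(1/\alpha)$ to both sides.

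\textbf{Upper bound on the sum.} Since $\gamma<0$ and $E_i\ge E_{\min}$ for every $i$, we have $e^{\gamma E_i}\le e^{\gamma E_{\min}}$. Because the $p_i$ sum to one, this gives $\sum_i p_i e^{\gamma E_i}\le e^{\gamma E_{\min}}$. Taking logarithms and dividing by $\gamma<0$ reverses the inequality, so $\mathcal{L}_\gamma(\rho_\beta)\ge E_{\min}$. This is the same argument as in Lemma~\ref{lem:faithfulness}. Adding $\frac{1}{\gamma}\log(1/\alpha)$ yields the left inequality.

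\textbf{Lower bound on the sum.} Every term in the sum is nonnegative, so I drop all terms except $i=1$ to get $\sum_i p_i e^{\gamma E_i}\ge p_{\min}e^{\gamma E_{\min}}$. Taking logarithms gives $\log\Tr(e^{\gamma O}\rho_\beta)\ge \gamma E_{\min}+\log p_{\min}$. Dividing by $\gamma<0$ reverses this to $\mathcal{L}_\gamma(\rho_\beta)\le E_{\min}+\frac{1}{\gamma}\log p_{\min}$. Adding $\frac{1}{\gamma}\log(1/\alpha)$ and combining the logarithms as $\frac{1}{\gamma}\log(p_{\min}/\alpha)$ gives the right inequality.

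There is no real obstacle here. The only points needing care are the sign reversal each time we divide by $\gamma$, and the observation that $p_{\min}>0$ because the Gibbs state is full rank. If the ground energy is degenerate, the same argument still works, and we could even use the total ground-space weight in place of $p_{\min}$ to get a sharper bound.
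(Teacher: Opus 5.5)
Your proposal is correct and follows the paper's proof essentially verbatim: the same two-sided sandwich $p_{\min}e^{\gamma E_{\min}}\le\sum_i p_i e^{\gamma E_i}\le e^{\gamma E_{\min}}$, then taking logarithms, dividing by $\gamma<0$ with the order reversed, and adding $\frac{1}{\gamma}\log(1/\alpha)$. Your observation that $p_{\min}>0$ because $\rho_\beta$ is full rank is a small point of care that the paper leaves implicit.
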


\begin{proof}

Define the outcome probabilities of measuring $\rho_\beta$ in the energy eigenbasis of $O$ as
\begin{align}
p_i := \langle E_i|\rho_\beta|E_i\rangle 
= \frac{e^{-\beta E_i}}{Z(\beta)},
\end{align}
Then we have
\begin{align}
\Tr\!\big(e^{\gamma O}\rho_\beta\big)
    = \sum_{i=1}^d p_i e^{\gamma E_i}.
\end{align}
Since $\gamma<0$ and $E_i \ge E_1$ for all $i$, it holds that 
$e^{\gamma E_i} \le e^{\gamma E_1}$, and hence
\begin{align}
p_1 e^{\gamma E_1}
\;\le\;
\sum_{i=1}^d p_i e^{\gamma E_i}
\;\le\;
e^{\gamma E_1},
\end{align}
where we used $\sum_i p_i = 1$.
Taking logarithms,
\begin{align}
\log p_1 + \gamma E_1
\;\le\;
\log \Tr\!\big(e^{\gamma O}\rho_\beta\big)
\;\le\;
\gamma E_1.
\end{align}
Dividing by $\gamma<0$ (which reverses both inequalities) gives
\begin{align}
E_1 
\;\le\;
\mathcal{L}_\gamma(\rho_\beta)
= \frac{1}{\gamma}\log \Tr\!\big(e^{\gamma O}\rho_\beta\big)
\;\le\;
E_1 + \frac{1}{\gamma}\log p_1 .
\end{align}
Adding $\frac{1}{\gamma}\log(1/\alpha)$ to all terms yields
\begin{align}
E_1 + \frac{1}{\gamma}\log\!\left(\frac{1}{\alpha}\right)
\;\le\;
f(\gamma)
\;\le\;
E_1 + \frac{1}{\gamma}\log\!\left(\frac{p_1}{\alpha}\right),
\end{align}
and in particular the claimed upper bound on $f(\gamma)$.
\end{proof}

Theorem~\ref{thm:gibbs_tight} [Tightness of the entropic $\mathrm{CVaR}$ bound for Gibbs states] 
\textit{
Under the definitions above, for any confidence level $\alpha \in (0,p_1]$, the $\mathrm{CVaR}$ inequality becomes tight:
\be
    \mathrm{CVaR}_\alpha(X) = E_{\min} = \sup_{\gamma<0} \left\{ \mathcal{L}_\gamma(\rho_\beta) + \frac{1}{\gamma}\log\!\left(\frac{1}{\alpha}\right) \right\}.
\ee
}

\begin{proof}
For the measurement outcome $X$ with $\Pr(X = E_i) = p_i$, the lower-tail $\mathrm{CVaR}_\alpha(X)$ at level $\alpha$ must be evaluated using the proper definition for discrete distributions. The ground state is at energy $E_1$ with probability $p_1$. For any $\alpha \in (0, p_1]$, the $\alpha$-quantile is exactly $q_\alpha = E_1$.

Using the discrete definition of $\mathrm{CVaR}_\alpha$ that correctly splits the probability atom as
\begin{align}
\mathrm{CVaR}_\alpha(X) &= \frac{1}{\alpha} \left( \mathbb{E}\bigl[X \mathbb{I}_{\{X < E_1\}}\bigr] + E_1 \bigl(\alpha - \Pr(X < E_1)\bigr) \right).
\end{align}
Since $E_1$ is the minimum possible energy, the strict lower tail is empty ($\Pr(X < E_1) = 0$), so the expectation $\mathbb{E}\bigl[X \mathbb{I}_{\{X < E_1\}}\bigr] = 0$. This simplifies directly to:
\begin{align}
\mathrm{CVaR}_\alpha(X) &= \frac{1}{\alpha} \left( 0 + E_1 (\alpha - 0) \right) = E_1 = E_{\min}
\qquad\text{for all } \alpha\in(0,p_1] .
\end{align}
Then let's check the behaviour of $f (\gamma)$ by computing the following limit
\begin{align}
\lim_{\gamma \to -\infty} f(\gamma)
&= \lim_{\gamma \to -\infty} \left( 
      \mathcal{L}_\gamma(\rho_\beta)
      + \frac{1}{\gamma}\,\log\!\left(\frac{1}{\alpha}\right)
    \right) \\\nonumber
&= E_{\min} + 0 = E_{\min}.
\end{align}
where we used Lemma~\ref{lem:limit} noticing that $\rho_\beta$ and $O$ have the same support. Now we use that $\alpha \le p_1$. Then $p_1/\alpha \ge 1$ and hence 
$\log(p_1/\alpha) \ge 0$. Since $\gamma<0$, we have
\begin{align}
\frac{1}{\gamma}\log\!\left(\frac{p_1}{\alpha}\right)
\;\le\; 0,
\end{align}
so the upper bound from lemma~\ref{lem:lower_bound_gibbs} simplifies to
\begin{align}
f(\gamma)
\;\le\;
E_1
\qquad\text{for all } \gamma<0 \text{ and } \alpha\le p_1 .
\end{align}
Combining this with the previous inequality 
$\sup_{\gamma<0} f(\gamma) \le E_1$, we conclude
\begin{align}
\sup_{\gamma<0} f(\gamma) = E_1.
\end{align}
Finally, since for $\alpha\le p_1$ we have 
$\mathrm{CVaR}_\alpha(X) = E_1$ and 
$\sup_{\gamma<0} f(\gamma) \le \mathrm{CVaR}_\alpha(X)$ from the previous step, it follows that
\begin{align}
\mathrm{CVaR}_\alpha(X)
    = E_1
    = \sup_{\gamma<0} f(\gamma),
\end{align}
which proves the claim.
\end{proof}

\subsection{Proof of Theorem~\ref{thm:empirical_cvar_tilted}}\label{app:empirical_cvar_tilt}

Theorem~\ref{thm:empirical_cvar_tilted} [Empirical lower-tail CVaR vs.\ tilted loss]
\textit{
Let $\alpha\in(0,1)$, let $\rho(\bm{\theta})$ be any parametrised quantum state, and let $O$ be a Hamiltonian diagonal in the computational basis. Measuring $\rho(\bm{\theta})$ in this basis yields $K$ independent energy samples $\{E_k(\bm{\theta})\}_{k=1}^K$ from $P_{\bm{\theta}}$. Then for every $\gamma<0$,
\begin{align}
\widehat{\mathrm{CVaR}}_{\alpha,K}(\bm{\theta})
\;\ge\;
\widehat{\mathcal{L}}_{\gamma,K}(\bm{\theta})
    + \frac{1}{\gamma}\log\!\left(\frac{1}{\alpha}\right).
\end{align}
}

\begin{proof}
Let $m := \lceil \alpha K\rceil$.  
By convexity of the exponential function and Jensen's inequality,
\begin{align}
\frac{1}{m}
    \sum_{i=1}^{m} e^{\gamma E_{(i)}(\bm{\theta})}
    \;\ge\;
    \exp\!\left(
        \gamma\,\widehat{\mathrm{CVaR}}_{\alpha,K}(\bm{\theta})
    \right).
\end{align}
Since all terms are positive and the sum over the $m$ smallest exponentials
is bounded above by the full sum, we have
\begin{align}
\frac{1}{K}\sum_{k=1}^{K} e^{\gamma E_k(\bm{\theta})}
    &\;\ge\;
    \frac{1}{K}\sum_{i=1}^{m} e^{\gamma E_{(i)}(\bm{\theta})}\\\nonumber
    & \;\ge\;
    \frac{m}{K}\,
    \exp\!\left(
        \gamma\,\widehat{\mathrm{CVaR}}_{\alpha,K}(\bm{\theta})
    \right).
\end{align}

Taking logarithms and dividing by $\gamma<0$ (which reverses the inequality),

\begin{align}
\widehat{\mathcal{L}}_{\gamma,K}(\bm{\theta})
    &= \frac{1}{\gamma}
        \log\!\left(
            \frac{1}{K}\sum_{k=1}^{K} e^{\gamma E_k(\bm{\theta})}
        \right)\\\nonumber
    &\;\le\;
    \widehat{\mathrm{CVaR}}_{\alpha,K}(\bm{\theta})
    + \frac{1}{\gamma}\log\!\left(\frac{m}{K}\right).
\end{align}
Rearranging,
\begin{align}
\widehat{\mathrm{CVaR}}_{\alpha,K}(\bm{\theta})
    \;\ge\;
    \widehat{\mathcal{L}}_{\gamma,K}(\bm{\theta})
    + \frac{1}{\gamma}\log\!\left(\frac{K}{m}\right).
\end{align}
Since $m = \lceil \alpha K\rceil$ we have $m/K \ge \alpha$, hence
$K/m \le 1/\alpha$ and $\log(K/m) \le \log(1/\alpha)$.  
Because $\gamma<0$, multiplying by $1/\gamma$ reverses the inequality and yields
\begin{align}
\frac{1}{\gamma}\log\!\left(\frac{K}{m}\right)
    \;\ge\;
    \frac{1}{\gamma}\log\!\left(\frac{1}{\alpha}\right).
\end{align}
Combining with the previous bound gives
\begin{align}
\widehat{\mathrm{CVaR}}_{\alpha,K}(\bm{\theta})
    \;\ge\;
    \widehat{\mathcal{L}}_{\gamma,K}(\bm{\theta})
    + \frac{1}{\gamma}\log\!\left(\frac{1}{\alpha}\right),
\end{align}
which proves the claim.
\end{proof}

\subsection{Proof of Theorem~\ref{thm:QTL-vs-petz}}\label{app:petz-renyi-tilt}

Theorem~\ref{thm:QTL-vs-petz}
\textit{
Let $\rho,\sigma\in\mathcal S$ be density operators and let
$\gamma\in(0,1)\cup(1,\infty)$. Define
\be
L_\gamma(\log\sigma,\rho)
:=
\frac{1}{\gamma}\log\Tr(\sigma^\gamma\rho).
\ee
If $\gamma>1$, assume 
$\operatorname{supp}(\sigma)\subseteq \operatorname{supp}(\rho)$ so that 
$D_\gamma(\sigma\Vert\rho)<\infty$. Then
\be\label{eq:QTL-petz}
\mathcal{L}_\gamma(\log \sigma,\rho) \le \frac{\gamma-1}{\gamma}D_\gamma(\sigma\Vert\rho).
\ee
Equality holds iff $\rho$ is pure.
}
\begin{proof}
For $t\in[0,1]$ and any $\gamma>0$, $t^{\,1-\gamma}\ge t$. By functional calculus this gives
$\rho^{1-\gamma}-\rho\succeq 0$ on $\operatorname{supp}(\rho)$. Since $\sigma^\gamma\succeq 0$,
\begin{align}
\tr(\sigma^\gamma\rho) \le \tr(\sigma^\gamma\rho^{1-\gamma}).
\end{align}
Indeed, the right-hand side gives
\be
\frac{1}{\gamma}\log\Tr(\sigma^\gamma\rho^{1-\gamma})
=
\frac{\gamma-1}{\gamma}D_\gamma(\sigma\|\rho).
\ee

Let $B:=\rho^{1-\gamma}-\rho$, interpreted on
$\supp(\rho)$ when $\gamma>1$. For every nonzero
eigenvalue $\lambda\in(0,1]$ of $\rho$,
\be
\lambda^{1-\gamma}\ge \lambda ,
\ee
for all $\gamma\in(0,1)\cup(1,\infty)$. Hence
$B\succeq 0$ on the relevant support. Since
$\sigma^\gamma\succeq0$, and since for $\gamma>1$ we assume
$\supp(\sigma)\subseteq\supp(\rho)$, we obtain
\begin{equation}
\Tr(\sigma^\gamma\rho^{1-\gamma})
-
\Tr(\sigma^\gamma\rho)
=
\Tr(\sigma^\gamma B)
\ge 0.
\end{equation}
Taking logarithms and dividing by $\gamma>0$ proves \cref{eq:QTL-petz}.

It remains to characterize equality. Equality holds iff
\be
\Tr(\sigma^\gamma B)=0.
\ee
For positive semidefinite operators this is equivalent to
$\supp(\sigma)\subseteq\ker B$ on the relevant support. The
eigenvalues of $B$ are $\lambda^{1-\gamma}-\lambda$, which
vanish only at $\lambda=1$ among the nonzero eigenvalues of
$\rho$. For $0<\gamma<1$, $B$ also vanishes on
$\ker(\rho)$, but the assumption
$\Tr(\sigma^\gamma\rho)>0$ rules out equality coming only from
that kernel. Therefore equality is possible iff $\rho$ has an
eigenvalue equal to $1$, which for a density operator is
equivalent to $\rho$ being pure. Conversely, if $\rho$ is pure,
then $B=0$ on the support relevant to the trace, so equality
holds. For $\gamma>1$, the support assumption further forces
$\sigma=\rho$ in the equality case.
\end{proof}

\section{Additional details for QTL estimation}
\label{app:qtl_estimation_appendix}

In this appendix we provide the technical details omitted from Sec.~\ref{sec:estimation}. We treat four topics:
(i) exact reductions for structured observables, 
(ii) cumulant expansions in the small-tilt regime,
(iii) concentration bounds for diagonal observables under sampling access,
and (iv) gradient estimation via ratio estimators.

\subsection{Concentration bounds in the diagonal case}
\label{app:diagonal_qtl_bounds}

We now turn to the nonperturbative diagonal setting. Assume that
\begin{align}
O=\sum_{z\in\{0,1\}^n} E(z)\ket{z}\!\bra{z}
\end{align}
is diagonal in the computational basis. Measuring $\rho(\bm\theta)$ in the
computational basis yields a sample $z\sim p_{\bm\theta}(z)$ and hence a real
random variable
\begin{align}
X:=E(z).
\end{align}
Recall that
\begin{align}
Z_\gamma:=\Tr(e^{\gamma O}\rho(\bm\theta))
=
\mathbb{E}[e^{\gamma X}],
\qquad
\mathcal{L}_\gamma
:=
\frac{1}{\gamma}\log Z_\gamma.
\end{align}
Given $m$ i.i.d.\ samples $X_1,\dots,X_m$, define the empirical estimator as
\begin{align}\label{def:empirical_qtl}
\widehat Z_\gamma:=\frac{1}{m}\sum_{i=1}^m e^{\gamma X_i},
\qquad
\widehat{\mathcal{L}}_\gamma:=\frac{1}{\gamma}\log \widehat Z_\gamma.
\end{align}

We give a worst-case Hoeffding bound on estimating QTL with diagonal observable as below.

\begin{theorem}[Sample complexity for estimating diagonal QTL]\label{thm:qtl_hoeffding_appendix}
Assume $X\in[\lambda_{\min},\lambda_{\max}]$ almost surely, and let
\begin{align}
\Delta:=\lambda_{\max}-\lambda_{\min}.
\end{align}
Fix $\varepsilon\in(0,1)$, $\delta\in(0,1)$, and $\gamma\neq 0$.  Assume the target precision satisfies $\epsilon \leq 1/|\gamma|$.
If
\begin{align}
m \ge
\frac{2}{\gamma^2\varepsilon^2}
\bigl(e^{|\gamma|\Delta}-1\bigr)^2
\log\!\Big(\frac{2}{\delta}\Big),
\label{eq:qtl_hoeffding_m}
\end{align}
then
\begin{align}
\Pr\!\left(
|\widehat{\mathcal{L}}_\gamma-\mathcal{L}_\gamma|
\le \varepsilon
\right)\ge 1-\delta.
\end{align}
\end{theorem}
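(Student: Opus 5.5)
The plan is to reduce the estimation of $\mathcal{L}_\gamma$ to a \emph{relative}-error estimation of $Z_\gamma$. We apply Hoeffding's inequality to the bounded i.i.d.\ variables $Y_i:=e^{\gamma X_i}$, and then transfer the relative error of $\widehat Z_\gamma$ through the logarithm. The point of normalizing by $Z_\gamma$ is that the dependence on the absolute scale $e^{\gamma\lambda_{\min}}$ or $e^{\gamma\lambda_{\max}}$ cancels. Only the ratio $e^{|\gamma|\Delta}$ survives, which is why the bound in \cref{eq:qtl_hoeffding_m} depends on $\Delta$ alone.

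\textbf{Step 1 (range of $Y_i$).} Since $X\in[\lambda_{\min},\lambda_{\max}]$, each $Y_i$ lies in $[a,b]$ with
\begin{align}
a:=\min\{e^{\gamma\lambda_{\min}},e^{\gamma\lambda_{\max}}\},\qquad b:=\max\{e^{\gamma\lambda_{\min}},e^{\gamma\lambda_{\max}}\}.
\end{align}
This gives $b-a=a\,(e^{|\gamma|\Delta}-1)$ for either sign of $\gamma$. Because $Z_\gamma=\mathbb{E}[Y]$ is a convex combination of values in $[a,b]$, we also have $Z_\gamma\ge a$.

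\textbf{Step 2 (Hoeffding).} For any $t>0$, Hoeffding's inequality gives
\begin{align}
\Pr\big(|\widehat Z_\gamma-Z_\gamma|>t\big)\le 2\exp\!\big(-2mt^2/(b-a)^2\big).
\end{align}
Take $t=\eta Z_\gamma$ with relative accuracy $\eta:=|\gamma|\varepsilon/2$. Using $Z_\gamma\ge a$, the exponent is at least $2m\eta^2/(e^{|\gamma|\Delta}-1)^2=m\gamma^2\varepsilon^2/\bigl(2(e^{|\gamma|\Delta}-1)^2\bigr)$. Hence the failure probability is at most $\delta$ exactly when $m$ satisfies \cref{eq:qtl_hoeffding_m}.

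\textbf{Step 3 (logarithm transfer).} This is the only delicate step, and it is where the hypothesis $\varepsilon\le 1/|\gamma|$ is used. On the good event, $\widehat Z_\gamma/Z_\gamma\in[1-\eta,1+\eta]$ with $\eta\le 1/2$. This ensures $\widehat Z_\gamma>0$, so the logarithm is well defined. It then suffices to bound
\begin{align}
|\log(\widehat Z_\gamma/Z_\gamma)|\le\max\{\log(1+\eta),-\log(1-\eta)\}\le \frac{\eta}{1-\eta}\le 2\eta=|\gamma|\varepsilon .
\end{align}
Dividing by $|\gamma|$ gives $|\widehat{\mathcal{L}}_\gamma-\mathcal{L}_\gamma|=|\gamma|^{-1}|\log\widehat Z_\gamma-\log Z_\gamma|\le\varepsilon$. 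The sign of $\gamma$ is irrelevant because we work with absolute values.

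The main thing to check carefully is the elementary inequality $-\log(1-\eta)\le 2\eta$ for $\eta\le 1/2$. The rest is bookkeeping of the constants, which match the factor $2$ in \cref{eq:qtl_hoeffding_m}.
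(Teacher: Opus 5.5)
Your proposal is correct and follows the paper's proof essentially step for step. Both arguments apply Hoeffding to $Y_i=e^{\gamma X_i}$, normalize the deviation by $Z_\gamma\ge\min\{e^{\gamma\lambda_{\min}},e^{\gamma\lambda_{\max}}\}$ so that only $e^{|\gamma|\Delta}-1$ survives, and pass through the logarithm using $\log(1+\eta)\le\eta$ and $-\log(1-\eta)\le 2\eta$ for $\eta\le 1/2$. The differences are presentational and slightly in your favor: you fix $\eta=|\gamma|\varepsilon/2$ up front instead of deriving $t_\delta$ from $\delta$, and you treat both signs of $\gamma$ at once via $a,b$. This makes explicit where $\varepsilon\le 1/|\gamma|$ enters and recovers the exact constant $2$ in the sample bound.
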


\begin{proof}
Under the \cref{def:empirical_qtl}, if $\gamma<0$ and $X\in[\lambda_{\min},\lambda_{\max}]$, the map $x\mapsto e^{\gamma x}$ is decreasing, hence
\begin{align}
Y \in \big[e^{\gamma\lambda_{\max}},\, e^{\gamma\lambda_{\min}}\big].
\end{align}
Hoeffding's inequality yields, for any $t>0$,
\begin{align}
\Pr \Big(|\widehat Z_\gamma - Z_\gamma|\ge t\Big)
\le
2\exp \left(-\frac{2m t^2}{(e^{\gamma\lambda_{\min}}-e^{\gamma\lambda_{\max}})^2}\right).
\label{eq:hoeffding_Z}
\end{align}
Set the RHS of \cref{eq:hoeffding_Z} to $\delta\in(0,1)$, yielding
\begin{align}
t_\delta:=(e^{\gamma\lambda_{\min}}-e^{\gamma\lambda_{\max}})\sqrt{\frac{\ln(2/\delta)}{2m}},
\end{align}
so that $\Pr(|\widehat Z_\gamma - Z_\gamma|\le t_\delta)\ge 1-\delta$. 
If $t_\delta< Z_\gamma$ then
\begin{align}
Z_\gamma-t_\delta \le \widehat Z_\gamma \le Z_\gamma+t_\delta \Longrightarrow 
1-\frac{t_\delta}{Z_\gamma}\le \frac{\widehat Z_\gamma}{Z_\gamma}\le 1+\frac{t_\delta}{Z_\gamma}.
\end{align}
Taking logs gives
\begin{align}
\log \left(1-\frac{t_\delta}{Z_\gamma}\right)
\le
\log \left(\frac{\widehat Z_\gamma}{Z_\gamma}\right)
\le
\log \left(1+\frac{t_\delta}{Z_\gamma}\right).
\end{align}
Since
\begin{align}
\widehat{\mathcal L}_\gamma-\mathcal L_\gamma
=\frac{1}{\gamma}\log\!\left(\frac{\widehat Z_\gamma}{Z_\gamma}\right),
\end{align}
we obtain
\begin{align}
\big|\widehat{\mathcal L}_\gamma-\mathcal L_\gamma\big|
\le
\frac{1}{|\gamma|}
\max\!\left\{
\log\!\left(1+\frac{t_\delta}{Z_\gamma}\right),\,
-\log\!\left(1-\frac{t_\delta}{Z_\gamma}\right)
\right\}.
\label{eq:log_bound}
\end{align}
Because $Y\ge e^{\gamma\lambda_{\max}}$ almost surely, $Z_\gamma=\mathbb E[Y]\ge e^{\gamma\lambda_{\max}}$.
Thus,
\begin{align}
\frac{t_\delta}{Z_\gamma}
&\le
\frac{t_\delta}{e^{\gamma\lambda_{\max}}}\nonumber\\
&=
\frac{e^{\gamma\lambda_{\min}}-e^{\gamma\lambda_{\max}}}{e^{\gamma\lambda_{\max}}}\sqrt{\frac{\ln(2/\delta)}{2m}}\nonumber\\
&=
\left(e^{|\gamma|\Delta}-1\right)\sqrt{\frac{\ln(2/\delta)}{2m}}
=:\eta.
\label{eq:eta}
\end{align}
Using the condition $\epsilon \leq 1/|\gamma|$ we get $\eta\le 1/2$. Then $\log(1+\eta)\le \eta$ and $-\log(1-\eta)\le \frac{\eta}{1-\eta}\le 2\eta$,
so \cref{eq:log_bound} implies
\begin{align}
\big|\widehat{\mathcal L}_\gamma-\mathcal L_\gamma\big|
\le \frac{2\eta}{|\gamma|}
=
\frac{2}{|\gamma|}\left(e^{|\gamma|\Delta}-1\right)\sqrt{\frac{\ln(2/\delta)}{2m}}.
\label{eq:final_bound}
\end{align}
Thus, to achieve $\big|\widehat{\mathcal L}_\gamma-\mathcal L_\gamma\big|\le \varepsilon$ with probability at least $1-\delta$, it suffices (up to constants) that
\begin{align}
m \ge \frac{2}{\gamma^2\varepsilon^2}\big(e^{|\gamma|\Delta}-1\big)^2\ln\left(\frac{2}{\delta}\right).
\end{align}
Same proof holds for $\gamma>0$.
\end{proof}

\subsection{Parameter shift rule for QTL}\label{app:qtl-parameter-shift}

\begin{theorem}[Parameter-shift rule for the quantum tilted loss]
\label{thm:qtl-parameter-shift}
Let
\be
\mathcal L_\gamma(\bm\theta)
:=
\frac{1}{\gamma}\log Z_\gamma(\bm\theta),
\qquad
Z_\gamma(\bm\theta)
:=
\Tr\!\big[\rho_{\bm\theta} e^{\gamma O}\big],
\ee
where $\rho_{\bm\theta} = U(\bm\theta)\rho_0 U(\bm\theta)^\dagger$. Assume that the parameter $\theta_k$ enters the circuit through a single gate $W_k(\theta_k)=e^{-i\theta_k V_k}$, so that
\be
U(\bm\theta)=U_-\,W_k(\theta_k)\,U_+,
\ee
with $U_\pm$ independent of $\theta_k$. Suppose moreover that the generator $V_k$ has two eigenvalues $\pm v_k$, i.e., $
V_k^2=v_k^2 \mathbb I $. Define the shifted parameters $\bm\theta_k^\pm
:=
\bm\theta \pm s_k \mathbf e_k, \,
s_k:=\frac{\pi}{4v_k}$, then the tilted moment admits the exact parameter-shift rule
\be
\partial_{\theta_k} Z_\gamma(\bm\theta)
=
v_k\Big(
Z_\gamma(\bm\theta_k^+)-Z_\gamma(\bm\theta_k^-)
\Big).
\label{eq:shift-Mgamma}
\ee
Consequently,
\be
\partial_{\theta_k}\mathcal L_\gamma(\bm\theta)
=
\frac{v_k}{\gamma}\,
\frac{
Z_\gamma(\bm\theta_k^+)-Z_\gamma(\bm\theta_k^-)
}{
Z_\gamma(\bm\theta)
}.
\label{eq:shift-Lgamma-M}
\ee
\end{theorem}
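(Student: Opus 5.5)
The plan is to treat $Z_\gamma(\bm\theta)$ as an ordinary expectation value of the fixed Hermitian observable $e^{\gamma O}$. The standard two-eigenvalue parameter-shift argument then applies verbatim. The claim for $\mathcal L_\gamma$ will follow from the chain rule already recorded in \cref{eq:qtl-grad-chain}.

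\textbf{Step 1: reduce to a single-parameter trigonometric function.} Write
\begin{align}
Z_\gamma(\bm\theta)=\Tr\big[\tilde\rho\, W_k(\theta_k)^\dagger \tilde O\, W_k(\theta_k)\big],
\end{align}
where $\tilde\rho:=U_+\rho_0U_+^\dagger$ and $\tilde O:=U_-^\dagger e^{\gamma O}U_-$. Both are independent of $\theta_k$, and $\tilde O$ is Hermitian. Since $V_k^2=v_k^2\mathbb I$, the gate takes the closed form
\begin{align}
W_k(\theta_k)=\cos(v_k\theta_k)\,\mathbb I-\frac{i}{v_k}\sin(v_k\theta_k)\,V_k.
\end{align}
Substituting this shows that $f(\theta_k):=Z_\gamma(\bm\theta)$ has the form $f(\theta)=a+b\cos(2v_k\theta)+c\sin(2v_k\theta)$ with real constants $a,b,c$.

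\textbf{Step 2: verify the shift identity.} Equivalently to the trigonometric form, one can use the commutator route. Differentiating gives
\begin{align}
\partial_{\theta_k}Z_\gamma=i\,\Tr\big[\tilde\rho\, W^\dagger[V_k,\tilde O]W\big].
\end{align}
Then use the operator identity
\begin{align}
i[V_k,A]=v_k\big(e^{isV_k}Ae^{-isV_k}-e^{-isV_k}Ae^{isV_k}\big)\quad\text{at } s=\pi/(4v_k).
\end{align}
This identity is checked by expanding $e^{\pm isV_k}=\frac{1}{\sqrt2}(\mathbb I\pm iV_k/v_k)$: the $A$ and $V_kAV_k$ terms cancel, and the cross terms give $\frac{i}{v_k}(V_kA-AV_k)$ times $v_k$. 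Since $W_k(\theta_k\pm s_k)=W_k(\theta_k)e^{\mp is_kV_k}$, the two conjugated terms are exactly $Z_\gamma(\bm\theta_k^{+})$ and $Z_\gamma(\bm\theta_k^{-})$.

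Alternatively, in trigonometric form, $f(\theta+s)-f(\theta-s)=2\sin(2v_ks)\,[c\cos(2v_k\theta)-b\sin(2v_k\theta)]$. With $2v_ks=\pi/2$ this equals $f'(\theta)/v_k$, which is \cref{eq:shift-Mgamma}.

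\textbf{Step 3: pass to $\mathcal L_\gamma$.} Note that $Z_\gamma(\bm\theta)>0$, because $e^{\gamma O}\succ0$ and $\rho_{\bm\theta}$ is a state. Hence $\log Z_\gamma$ is differentiable, and $\partial_k\mathcal L_\gamma=\partial_kZ_\gamma/(\gamma Z_\gamma)$. Substituting Step 2 gives \cref{eq:shift-Lgamma-M}.

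The only delicate point is bookkeeping rather than an obstacle. One must get the sign and the factor $v_k$ right in the shift identity, given the convention $W=e^{-i\theta V}$ and the shift $s_k=\pi/(4v_k)$ rather than the more familiar $\pi/(2\cdot 2v_k)$ in other normalizations. I would fix these by checking the single-qubit case $V=\sigma_x/2$, $\tilde O=\sigma_z$, for which $f(\theta)=\cos\theta$.
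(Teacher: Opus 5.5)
Your proposal is correct and follows essentially the same route as the paper. You absorb $U_\pm$ into $\tilde\rho$ and $\tilde O=U_-^\dagger e^{\gamma O}U_-$, use $V_k^2=v_k^2\mathbb I$ to write $Z_\gamma$ as $a+b\cos(2v_k\theta_k)+c\sin(2v_k\theta_k)$, compare the derivative with the shifted difference at $2v_ks_k=\pi/2$, and finish with the chain rule. Your commutator identity is an equivalent operator-level check, and your observation that $Z_\gamma>0$, which justifies taking the logarithm, is a detail the paper leaves implicit. Incidentally, the $\pi/(2\cdot 2v_k)$ you contrast with $s_k$ in your last paragraph equals $\pi/(4v_k)$, so there is no normalization discrepancy to resolve.
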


\begin{proof}
Fix all parameters except $\theta_k$. Since
\be
U(\bm\theta)=U_-\,e^{-i\theta_k V_k}\,U_+,
\ee
we may absorb the $\theta_k$-independent parts into an effective state and observable:
\be
\rho := U_+\rho_0 U_+^\dagger,
\qquad
A := U_-^\dagger e^{\gamma O} U_- .
\ee
Then
\be
Z_\gamma(\bm\theta)
=
f(\theta_k)
:=
\Tr\!\big[
A\,e^{-i\theta_k V_k}\rho\,e^{i\theta_k V_k}
\big].
\label{eq:def-f-theta}
\ee
Let $P_k:=\frac{V_k}{v_k}$, so that $P_k^2=\mathbb I$ by the assumption $V_k^2=v_k^2 \mathbb I$. Hence
\be
e^{-i\theta_k V_k}
=
\cos(v_k\theta_k)\,\mathbb I
-
i\sin(v_k\theta_k)\,P_k.
\label{eq:expansion-gate}
\ee
Substituting \cref{eq:expansion-gate} into \cref{eq:def-f-theta} and expanding shows that $f(\theta_k)$ is a trigonometric polynomial of the form
\be
f(\theta_k)
=
a+b\cos(2v_k\theta_k)+c\sin(2v_k\theta_k),
\label{eq:trig-form}
\ee
for some constants $a,b,c$ independent of $\theta_k$.

Differentiating \cref{eq:trig-form} yields
\be
f'(\theta_k)
=
2v_k\Big(
-b\sin(2v_k\theta_k)+c\cos(2v_k\theta_k)
\Big).
\label{eq:fprime}
\ee
On the other hand,
\begin{align}
f(\theta_k+s)-f(\theta_k-s)
&=
b\Big[
\cos(2v_k(\theta_k+s))-\cos(2v_k(\theta_k-s))
\Big]
\nonumber\\
&\quad
+
c\Big[
\sin(2v_k(\theta_k+s))-\sin(2v_k(\theta_k-s))
\Big]
\nonumber\\
&=
2\sin(2v_k s)\,
\Big(
-b\sin(2v_k\theta_k)+c\cos(2v_k\theta_k)
\Big),
\label{eq:shift-difference}
\end{align}
where we used $ \cos(x+y)-\cos(x-y) = -2\sin x \sin y,\,
\sin(x+y)-\sin(x-y) = 2\cos x \sin y$. 
Choosing $s=s_k=\frac{\pi}{4v_k}$ gives $\sin(2v_k s_k)=1$, so \cref{eq:shift-difference} becomes
\be
f(\theta_k+s_k)-f(\theta_k-s_k)
=
2\Big(
-b\sin(2v_k\theta_k)+c\cos(2v_k\theta_k)
\Big).
\label{eq:shift-special}
\ee
Comparing \cref{eq:fprime} and \cref{eq:shift-special}, we obtain
\be
f'(\theta_k)
=
v_k\Big(
f(\theta_k+s_k)-f(\theta_k-s_k)
\Big).
\ee
Recalling that $f(\theta_k)=Z_\gamma(\bm\theta)$ proves
\be
\partial_{\theta_k} Z_\gamma(\bm\theta)
=
v_k\Big(
Z_\gamma(\bm\theta_k^+)-Z_\gamma(\bm\theta_k^-)
\Big),
\ee
which is \cref{eq:shift-Mgamma}. It remains to differentiate the QTL. By the chain rule,
\be
\partial_{\theta_k}\mathcal L_\gamma(\bm\theta)
=
\frac{1}{\gamma}
\frac{\partial_{\theta_k}Z_\gamma(\bm\theta)}{Z_\gamma(\bm\theta)}.
\ee
Substituting \cref{eq:shift-Mgamma} gives
\be
\partial_{\theta_k}\mathcal L_\gamma(\bm\theta)
=
\frac{v_k}{\gamma}\,
\frac{
Z_\gamma(\bm\theta_k^+)-Z_\gamma(\bm\theta_k^-)
}{
Z_\gamma(\bm\theta)
},
\ee
which is \cref{eq:shift-Lgamma-M}. Equivalently, since $Z_\gamma(\bm\theta)=e^{\gamma \mathcal L_\gamma(\bm\theta)}$,
\be
\partial_{\theta_k}\mathcal L_\gamma(\bm\theta)
=
\frac{v_k}{\gamma}
\frac{
e^{\gamma \mathcal L_\gamma(\bm\theta_k^+)}
-
e^{\gamma \mathcal L_\gamma(\bm\theta_k^-)}
}{
e^{\gamma \mathcal L_\gamma(\bm\theta)}
}.
\label{eq:shift-Lgamma-exp}
\ee
In the common Pauli-generated case $e^{-i\theta_k P_k/2}$ with $P_k^2=\mathbb I$, one has $v_k=\tfrac12$ and $s_k=\pi/2$. 
\end{proof}

\section{Trainability}\label{app:proofs_trainability}

\subsection{Proof of Proposition~\ref{prop:warmup_projector_qtl}}
\label{app:proof_warmup_projector_qtl}

Propostion~\ref{prop:warmup_projector_qtl} [Properties of projector QTL benchmark]
\textit{
Let $O_G = \mathbb{I} - |0\rangle\langle 0|^{\otimes n}$ and the variational ansätz be the tensor-product circuit $V(\bm\theta)=\bigotimes_{j=1}^n e^{i\theta_j \sigma_x^{(j)}/2}$. This benchmark model has the following properties:
\begin{enumerate}
    \item Reduction: The projector QTL can be reduced as
\begin{align}
\mathcal L_\gamma(O_G,\rho(\bm\theta))
:=
\phi_\gamma\!\big(C_G(\bm\theta)\big) =\frac{1}{\gamma}\log\!\Big(1+(e^\gamma-1)C_G(\bm\theta)\Big),
\end{align}
where $C_G(\bm\theta) = 1 - \prod_{j=1}^n \cos^2(\theta_j/2)$.
    \item Monotonicity: The function $\phi_\gamma$ is strictly increasing on $[0,1]$ with $\phi_\gamma(0) = 0$ and $\phi_\gamma(x) > 0$ for $x > 0$. That is, the QTL is an exact scalar post-processing of the ordinary expectation value $C_G(\bm\theta)$---a monotone but nonlinear reshaping of the standard global cost.
    \item Faithfulness: $\mathcal L_\gamma(O_G,\rho(\bm\theta)) = 0 \iff C_G(\bm\theta) = 0$, so the tilted and untilted losses share the same global minimizers.
\end{enumerate}
}
\begin{proof}
Recall that
\be
O_G = \mathbb I - |0\rangle\langle 0|^{\otimes n}.
\ee
Since $|0\rangle\langle 0|^{\otimes n}$ is a projector, $O_G$ is also a projector:
\be
O_G^2 = O_G.
\ee
Therefore,
\be
e^{\gamma O_G}
= \sum_{m=0}^\infty \frac{\gamma^m}{m!} O_G^m
= \mathbb I + \sum_{m=1}^\infty \frac{\gamma^m}{m!} O_G
= \mathbb I + (e^\gamma - 1) O_G.
\ee
Taking expectation in the state $\rho(\bm\theta)$ gives
\be
Z_\gamma(\bm\theta)
= \Tr\!\big(e^{\gamma O_G}\rho(\bm\theta)\big)
= 1 + (e^\gamma - 1)\,\Tr\!\big(O_G\rho(\bm\theta)\big)
= 1 + (e^\gamma - 1)\,C_G(\bm\theta).
\ee
Hence
\be
\mathcal L_\gamma(O_G,\rho(\bm\theta))
= \frac{1}{\gamma}\log Z_\gamma(\bm\theta)
= \frac{1}{\gamma}\log\!\Big(1 + (e^\gamma - 1)\,C_G(\bm\theta)\Big)
= \phi_\gamma\!\big(C_G(\bm\theta)\big),
\ee
where
\be
\phi_\gamma(x) := \frac{1}{\gamma}\log\!\Big(1 + (e^\gamma - 1)x\Big), \qquad x \in [0,1].
\ee

For the explicit expression of $C_G$, since $e^{i\theta_j\sigma_x/2}|0\rangle = \cos(\theta_j/2)|0\rangle + i\sin(\theta_j/2)|1\rangle$, the overlap with the target state is $|\langle 0|^{\otimes n}|\psi(\bm\theta)\rangle|^2 = \prod_{j=1}^n \cos^2(\theta_j/2)$, giving $C_G(\bm\theta) = 1 - \prod_{j=1}^n \cos^2(\theta_j/2)$.

It remains to prove faithfulness. Since
\be
1 + (e^\gamma - 1)x > 0 \qquad \text{for all } x \in [0,1],
\ee
the function $\phi_\gamma$ is well defined on $[0,1]$. Differentiating gives
\be
\phi_\gamma'(x) = \frac{e^\gamma - 1}{\gamma\big(1 + (e^\gamma - 1)x\big)}.
\ee
For every $\gamma \neq 0$, the numerator and denominator have the same sign, so $\phi_\gamma'(x) > 0$ for all $x \in [0,1]$, establishing strict monotonicity. Moreover, $\phi_\gamma(0) = 0$, so $\phi_\gamma(x) > 0$ for all $x > 0$. Therefore $\mathcal L_\gamma = 0 \iff C_G = 0$.
\end{proof}

\subsection{Proof of Theorem~\ref{thm:critical_tilt_transition}}
\label{app:train}

Theorem~\ref{thm:critical_tilt_transition} [Gradient-variance regimes in the projector QTL benchmark]
\textit{
Consider the projector benchmark model above with i.i.d.\ uniform random initialization $\theta_j \sim \mathrm{Unif}(-\pi,\pi)$.
\begin{itemize}
\item Fixed small tilt: For fixed $|\gamma| \ll 1$, the gradient variance of $\mathcal L_\gamma$ exhibits the same exponential-in-$n$ decay as the standard global cost $C_G$:
\be
\mathrm{Var}[\partial_{\theta_k}\mathcal L_\gamma]
=
\frac{1}{8}\Big(\frac{3}{8}\Big)^{n-1}\big(1+\mathcal{O}(\gamma)\big).
\ee
\item Linear negative tilt schedule: There exists an explicit tilt schedule $\gamma(n)=\Theta(-n)$ under which the gradient variance admits a polynomial lower bound in $n$, in contrast to the exponential upper bound at fixed tilt.
\end{itemize}
More concretely, for the explicit choice
\be
\gamma(n)=2(n-1)\log(3/8),
\ee
the appendix proves the lower bound
\be
\mathrm{Var}[\partial_{\theta_k}\mathcal L_{\gamma(n)}]
=
\Omega\!\left(\frac{1}{n^2}\right).
\ee
}
\begin{proof}
In this proof we consider the shallow projector benchmark model under i.i.d.\ uniform random initialization $\theta_j\sim \mathrm{Unif}(-\pi,\pi)$.

We recall the benchmark model of Ref.~\cite{Cerezo2021}. The target state is
\be
|\psi_\mathrm{target}\rangle = |0\rangle^{\otimes n},
\ee
and the variational ansätz is the tensor-product circuit
\be
V(\bm\theta)
= \bigotimes_{j=1}^n e^{i\theta_j\sigma_x^{(j)}/2},
\qquad
|\psi(\bm\theta)\rangle
:= V(\bm\theta)|0\rangle^{\otimes n}.
\ee
Since
\be
e^{i\theta_j\sigma_x/2}|0\rangle
= \cos\frac{\theta_j}{2}|0\rangle + i \sin\frac{\theta_j}{2}|1\rangle,
\ee
we obtain
\be
|\psi(\bm\theta)\rangle
= \bigotimes_{j=1}^n
  \Big(\cos\frac{\theta_j}{2}|0\rangle + i \sin\frac{\theta_j}{2}|1\rangle \Big).
\ee
Let
\be
O_G := \mathbb{I} - |0\rangle\langle 0|^{\otimes n},
\qquad
\rho(\bm\theta)
:= |\psi(\bm\theta)\rangle\langle\psi(\bm\theta)|.
\ee
Define
\be
C_G(\bm\theta):= \langle\psi(\bm\theta)|O_G|\psi(\bm\theta)\rangle
= 1-\prod_{j=1}^n f_j(\theta_j),
\qquad
f_j(\theta_j):=\cos^2\!\Big(\frac{\theta_j}{2}\Big).
\ee
By Proposition~\ref{prop:warmup_projector_qtl},
\be\label{eq:app_qtl_projector_closed}
\mathcal{L}_\gamma(O_G,\rho(\bm{\theta}))
= \frac{1}{\gamma}
  \log\Big(1 + (e^\gamma-1)\,C_G(\bm\theta)\Big).
\ee

We divide the proof into three steps.

\paragraph*{Step 1: exact gradient formula and parity.}
Define
\be
\alpha := \frac{e^\gamma-1}{\gamma},
\qquad
\beta := e^\gamma-1.
\ee
Differentiating \cref{eq:app_qtl_projector_closed} with respect to $\theta_k$ gives
\be
g_k(\bm\theta) := \partial_{\theta_k}\mathcal{L}_\gamma
= \frac{1}{\gamma}
  \frac{\beta \partial_{\theta_k}C_G(\bm\theta)}
       {1+\beta C_G(\bm\theta)}
= \alpha
  \frac{\partial_{\theta_k}C_G(\bm\theta)}
       {1+\beta C_G(\bm\theta)}.
\ee
Let
\be\label{eq:app_B_theta}
B(\bm\theta_{-k}) := \prod_{j\neq k} f_j(\theta_j),
\ee
where $\bm\theta_{-k}$ denotes all angles except $\theta_k$. Then
\be
C_G(\bm\theta)=1-f_k(\theta_k) B(\bm\theta_{-k}).
\ee
Since
\be
f_k(\theta_k)
= \cos^2\!\Big(\frac{\theta_k}{2}\Big)
= \frac{1+\cos\theta_k}{2},
\ee
we have
\be
f_k'(\theta_k)
= -\frac{1}{2}\sin\theta_k,
\qquad
\partial_{\theta_k}C_G(\bm\theta)
= -f_k'(\theta_k)\,B(\bm\theta_{-k}).
\ee
Thus
\be\label{eq:app_gk_explicit}
g_k(\bm\theta)
= -\alpha \frac{ f_k'(\theta_k)B(\bm\theta_{-k})}{1+\beta (1- f_k(\theta_k)B(\bm\theta_{-k}))}.
\ee

Assuming the angles $\theta_j$ are i.i.d.\ uniformly distributed on $(-\pi,\pi)$, a parity argument shows
\be
\mathbb{E}[g_k(\bm\theta)]=0.
\ee
Indeed, conditioning on $B(\bm\theta_{-k})$, the numerator in \cref{eq:app_gk_explicit} is odd in $\theta_k$, while the denominator is even in $\theta_k$. Hence
\be
\mathrm{Var}(g_k(\bm\theta))=\mathbb{E}[g_k(\bm\theta)^2].
\ee
Squaring \cref{eq:app_gk_explicit} gives
\be\label{eq:app_gk_squared}
g_k(\bm\theta)^2
=\alpha^2\frac{f_k'(\theta_k)^2\,B(\bm\theta_{-k})^2}{\big(1+\beta  (1- f_k(\theta_k)B(\bm\theta_{-k}))\big)^2}.
\ee
Expanding the denominator, we obtain
\be\label{eq:app_denom_expanded}
1+\beta\bigl(1-f_k(\theta_k)B(\bm\theta_{-k})\bigr)
=e^\gamma+(1-e^\gamma)f_k(\theta_k)B(\bm\theta_{-k}).
\ee
Hence
\be\label{eq:app_gk_expanded}
g_k(\bm\theta)^2
= \alpha^2
\frac{f_k'(\theta_k)^2B(\bm\theta_{-k})^2}
{\big(e^\gamma+(1-e^\gamma)f_k(\theta_k)B(\bm\theta_{-k})\big)^2}.
\ee
This makes the competition explicit: the denominator is a sum of a constant term $e^\gamma$ and a global product term $f_k(\theta_k)B(\bm\theta_{-k})\in[0,1]$.

\paragraph*{Step 2: fixed small tilt gives exponential decay.}
For $|\gamma|\ll 1$, we have
\be
\beta = \gamma + \mathcal{O}(\gamma^2),
\qquad
\alpha = 1 + \mathcal{O}(\gamma),
\ee
so the denominator in \cref{eq:app_gk_expanded} becomes $1 + \mathcal{O}(\gamma)$. Therefore,
\be\label{eq:app_var_small_tilt_start}
\mathrm{Var}(g_k(\bm\theta))
= \mathbb{E}[g_k(\bm\theta)^2]
= \mathbb{E}_{\bm\theta_{-k}}\left[\mathbb{E}_{\theta_k}[g_k(\bm\theta)^2\,\big|\, \bm\theta_{-k}] \right]
= \alpha^2 \mathbb{E}_{\bm\theta_{-k}}\left[ B(\bm\theta_{-k})^2 \, \mathbb{E}_{\theta_k}[f_k'(\theta_k)^2]\right].
\ee
Since $f_k'(\theta_k)=-\tfrac12\sin\theta_k$ and $\mathbb{E}[\sin^2\theta_k]=1/2$, we find
\be
\mathbb{E}_{\theta_k}[f_k'(\theta_k)^2] = \frac{1}{8}.
\ee
On the other hand, from \cref{eq:app_B_theta},
\be
\mathbb{E}_{\bm\theta_{-k}}[B^2(\bm\theta_{-k})]
= \big(\mathbb{E}[f(\theta)^2]\big)^{n- 1}.
\ee
Using
\be
f(\theta)^2 = \cos^4\!\Big(\frac{\theta}{2}\Big)
= \frac{3 + 4\cos\theta + \cos 2\theta}{8},
\ee
and $\mathbb{E}[\cos\theta]=\mathbb{E}[\cos 2\theta]=0$, we obtain
\be
\mathbb{E}[f(\theta)^2] = \frac{3}{8},
\ee
hence
\be\label{eq:app_EB2}
\mathbb{E}_{\bm\theta_{-k}}[B^2(\bm\theta_{-k})]
= \left(\frac{3}{8}\right)^{n-1}.
\ee
Combining \cref{eq:app_var_small_tilt_start} and \cref{eq:app_EB2}, we obtain
\be\label{eq:app_var_small_tilt_final}
\mathrm{Var}(g_k(\bm\theta))
= \frac{1}{8}\left(\frac{3}{8}\right)^{n-1}\big(1+\mathcal{O}(\gamma)\big).
\ee
Thus, for any fixed small tilt $|\gamma| \ll 1$, the tilted loss $\mathcal{L}_\gamma$ exhibits the same exponentially vanishing gradient variance as the standard global cost.

\paragraph*{Step 3: a linear negative tilt schedule yields polynomial scaling.}
We now consider the negative-tilt regime. From \cref{eq:app_gk_squared}, for fixed $\theta_k$, define
\be
F(x)
= \frac{x}
{\big(1+\beta  (1- f_k \sqrt{x})\big)^2},
\qquad
x:= B^2(\bm\theta_{-k})\in[0,1].
\ee
A direct differentiation yields
\be
F''(x)=\frac{3(1+\beta) \beta f_k}{2\sqrt{x}\big(1+\beta  (1- f_k \sqrt{x})\big)^4}.
\ee
Since the sign of $F''(x)$ is determined by $\beta$, the function $F(x)$ is convex for $\beta>0$ and concave for $\beta\in(-1,0)$, equivalently for $\gamma<0$. For $\gamma<0$, the graph of a concave function lies above the chord joining $(0,F(0))$ and $(1,F(1))$, so
\be
\mathbb{E}[F(x)] \ge (1-\mathbb{E}[x])F(0) +\mathbb{E}[x]F(1)
= \mathbb{E}[x]
\frac{1}{\left(1+\beta\left(1-f_k \right)\right)^{2}}.
\ee
Using \cref{eq:app_EB2}, and averaging first over the global angles $\bm\theta_{-k}$ for fixed $f_k(\theta_k)$, we obtain
\be\label{eq:app_var_large_tilt_start}
\mathrm{Var}(g_k(\bm\theta))
=
\left(\frac{e^\gamma-1}{2\gamma}\right)^{2}\cdot\left(\frac{3}{8}\right)^{n-1}
\mathbb{E}_{\theta_k}\left[\frac{\sin^2\theta_k}{\left(1+\left(e^\gamma-1\right)\left(1-\cos^2\left(\frac{\theta_k}{2}\right)\right)\right)^{2}}\right].
\ee
It remains to evaluate the one-dimensional integral. Let $u=\theta/2$. Then
\be
\mathbb{E}_{\theta_k}\left[\frac{\sin^2\theta_k}{\left(1+\beta\left(1-\cos^2\left(\frac{\theta_k}{2}\right)\right)\right)^{2}}\right]
=
\frac{1}{2\pi}\int_{-\pi}^{\pi}\frac{\sin^2\theta}{\big(1+\beta\sin^2(\theta/2)\big)^2}d\theta
=
\frac{8}{\pi}\int_0^{\pi/2}\frac{\sin^2u\cos^2u}{(1+\beta\sin^2u)^2}du.
\ee
Define
\be
I_0(\beta):=\int_0^{\pi/2}\frac{du}{1+\beta\sin^2u}
=\frac{\pi}{2\sqrt{1+\beta}},
\ee
so that
\be
I_0'(\beta)
=
-\int_0^{\pi/2}\frac{\sin^2u}{(1+\beta\sin^2u)^2}du,
\ee
and
\be
J_0(\beta):=\int_0^{\pi/2}\frac{du}{(1+\beta\sin^2u)^2}
=I_0(\beta)+\beta I_0'(\beta).
\ee
Using
\be
\sin^2u\cos^2u=\sin^2u-\sin^4u
\ee
and
\be
\frac{\sin^4u}{(1+\beta\sin^2u)^2}
=\frac{1}{\beta^2}\left(1-\frac{2}{1+\beta\sin^2u}+\frac{1}{(1+\beta\sin^2u)^2}\right),
\ee
one obtains after substitution and simplification
\be
\frac{8}{\pi}\int_0^{\pi/2}\frac{\sin^2u\cos^2u}{(1+\beta\sin^2u)^2}du
=\frac{2}{\sqrt{1+\beta}\big(1+\sqrt{1+\beta}\big)^2}
=\frac{2}
{\sqrt{e^\gamma}\Big(1+\sqrt{e^\gamma} \Big)^2}.
\ee
Substituting into \cref{eq:app_var_large_tilt_start} yields
\be\label{eq:app_var_large_tilt_exact}
\mathrm{Var}(g_k(\bm\theta))
\ge
\left(\frac{e^\gamma-1}{2\gamma}\right)^{2}\cdot\left(\frac{3}{8}\right)^{n-1}
\frac{2}
{\sqrt{e^\gamma}\Big(1+\sqrt{e^\gamma} \Big)^2}.
\ee

We now choose the explicit linear negative tilt schedule
\be\label{eq:app_gamma_schedule}
\gamma(n)=2(n-1)\log \frac{3}{8},
\ee
which is negative and of order $-n$. Then
\be
\sqrt{e^{\gamma(n)}}=\left(\frac{3}{8}\right)^{n-1}.
\ee
Substituting this relation into \cref{eq:app_var_large_tilt_exact}, we obtain
\be\label{eq:app_var_poly_bound}
\mathrm{Var}(g_k(\bm\theta))
\geq
\frac{1}{2}
\frac{\big(1-\sqrt{e^{\gamma(n)}}\big)^2}{\gamma(n)^2}.
\ee
For $n\ge 2$, one has
\be
\sqrt{e^{\gamma(n)}}=\left(\frac{3}{8}\right)^{n-1}\le \frac{3}{8},
\ee
hence
\be
1-\sqrt{e^{\gamma(n)}}\ge \frac{5}{8}.
\ee
Using \cref{eq:app_gamma_schedule}, we conclude that
\be
\mathrm{Var}(g_k(\bm\theta))
\ge
\frac{1}{8}\left(\frac{5}{8}\right)^2 \frac{1}{\log^2(3/8)(n-1)^2}
=
\Omega\left(\frac{1}{n^2}\right).
\ee

This proves the existence of a linear negative tilt schedule under which the gradient variance improves from exponentially small to polynomially small. Combined with \cref{eq:app_var_small_tilt_final}, this establishes the two regimes stated in the theorem.
\end{proof}

\section{Gradient Resolvability and Signal-to-Noise Ratio}\label{app:gradient_resolvability}

\subsection{From landscape geometry to stochastic optimization}

While the previous sections demonstrate that nonlinear objective design can reshape the cost landscape to mitigate vanishing gradients, a non-vanishing  variance is a necessary but insufficient condition for successful optimization. We must address the operational reality of near-term Variational Quantum Algorithms: gradients are not accessed analytically, but are estimated stochastically from a finite number of quantum measurements. Even a geometrically steep landscape is untrainable if the gradient signal is drowned out by shot noise.

Consider the standard gradient descent update rule for a parameter vector $\bm{\theta}$ at step $t$
\begin{align}
    \bm{\theta}_{t+1} = \bm{\theta}_t - \eta \hat{g}(\bm{\theta}_t),
\end{align}
where $\eta$ is the learning rate and $\hat{g}(\bm{\theta}_t)$ is a stochastic estimator of the true gradient $\nabla \mathcal{L}(\bm{\theta}_t)$ for a general loss function $\mathcal{L}$. This estimator is typically computed by averaging $N$ independent measurement outcomes. We decompose the estimator into the true signal and a zero-mean noise term as
\be
    \hat{g}(\bm{\theta}) = \nabla \mathcal{L}(\bm{\theta}) + \bm{\xi}, \quad \text{where } \mathbb{E}[\bm{\xi}] = 0.
\ee
The statistical quality of this estimator is governed by its covariance matrix $\Sigma_N(\bm{\theta}) := \mathbb{E}[\bm{\xi}\bm{\xi}^T]$. Assuming independent samples, the variance of the estimator scales inversely with the shot count $N$, such that $\Sigma_N(\bm{\theta}) = \frac{1}{N}\Sigma_1(\bm{\theta})$, where $\Sigma_1$ represents the covariance of a single-shot gradient estimate.

For the optimization trajectory to follow the descent direction rather than degenerate into a random walk, the true gradient must be resolvable against the background noise. We quantify this via the \textit{Noise-to-Signal Ratio} (NSR)~\cite{mccandlish2018empirical}, denoted $\mathcal{R}$
\begin{align}
    \mathcal{R}(\bm{\theta}) = \frac{\Tr(\Sigma_N(\bm{\theta}))}{\|\nabla \mathcal{L}(\bm{\theta})\|^2} = \frac{\sigma^2_{\text{shot}}(\bm{\theta})}{N \|\nabla \mathcal{L}(\bm{\theta})\|^2},
\end{align}
where $\sigma^2_{\text{shot}}(\bm{\theta}) := \Tr(\Sigma_1(\bm{\theta}))$ is the total variance of the single-shot gradient estimator (which depends on both the state $\bm{\theta}$ and the specific measurement scheme). For the optimizer to effectively descend the landscape, we require $\mathcal{R} \lesssim 1$. This threshold aligns with the critical batch size requirements for convergence stability established in classical stochastic optimization \cite{mccandlish2018empirical, bottou2018optimization} applied to deep learning.

\subsection{Hypothesis Testing Interpretation}

The condition $\mathcal{R} \lesssim 1$ can be formally motivated by hypothesis testing. We frame the gradient estimation problem as the minimal requirement of distinguishing between two hypotheses regarding the observed vector $\hat{g}$.
\begin{itemize}
    \item $H_0$ (Noise Only): The true gradient is zero, and the observed vector is pure statistical fluctuation, i.e., $\hat{g} \sim \mathcal{N}(0, \Sigma_N)$.
    \item $H_1$ (Signal + Noise): A true gradient is present, i.e., $\hat{g} \sim \mathcal{N}(\nabla \mathcal{L}, \Sigma_N)$.
\end{itemize}

If $H_0$ holds, the optimization reduces to a random walk. The information-theoretic limit on distinguishing these hypotheses is governed by the Kullback-Leibler (KL) divergence. Assuming isotropic Gaussian noise for analytical clarity, the divergence is determined by the Mahalanobis distance between the distributions \cite{Cover2006} as
\begin{align}
    D_{\text{KL}}(H_1 \| H_0) &= \frac{1}{2} (\nabla \mathcal{L})^T \Sigma_N^{-1} (\nabla \mathcal{L}) \\\nonumber
    &= \frac{N \|\nabla \mathcal{L}\|^2}{2\sigma^2_{\text{shot}}} = \frac{1}{2\mathcal{R}}.
\end{align}
By the Chernoff-Stein Lemma \cite{Cover2006}, the probability of a Type II error (falsely concluding there is no gradient when one exists) decays exponentially with $D_{\text{KL}}$. Consequently, a small ratio $\mathcal{R} < 1$ implies a large divergence, ensuring the gradient direction is statistically distinguishable from measurement noise. Conversely, if $\mathcal{R} \gg 1$, the divergence vanishes, and the optimizer cannot reliably extract the descent direction.

\subsection{Sample Complexity and the Barren Plateau Connection}

We now connect this local resolvability condition to the global properties of the landscape. Following Definition~1 in~\cite{Arrasmith2021effectofbarren}, a generic cost function $\mathcal{L}$ exhibits a \textit{Barren Plateau} if the variance of the partial derivative with respect to a parameter $\theta_k$ vanishes exponentially with the system size $n$. That is, for $\bm{\theta}$ drawn from the initialization distribution we have
\begin{align}
    \mathrm{Var}_{\bm{\theta}}[\partial_k \mathcal{L}] \le F(n), \quad \text{where } F(n) \in \mathcal{O}\left(\frac{1}{b^n}\right) \text{ for } b > 1.
\end{align}
The operational difficulty caused by this vanishing variance is probabilistic. As detailed by Chebyshev's inequality~\cite{Arrasmith2021effectofbarren}, the probability that the gradient magnitude at a random point exceeds a threshold $c > 0$ is bounded by
\begin{align}
    \label{eq:chebyshev_bound}
    \Pr_{\bm{\theta}}\left( |\partial_k \mathcal{L}(\bm{\theta})| \ge c \right) \le \frac{\mathrm{Var}_{\bm{\theta}}[\partial_k \mathcal{L}]}{c^2}.
\end{align}
For the gradient to be resolvable ($\mathcal{R} \lesssim 1$), the signal magnitude $|\partial_k \mathcal{L}|$ must exceed the standard deviation of the estimator's noise. For a single parameter component $k$, the noise floor is $\sqrt{\sigma_k^2/N}$, where $\sigma_k^2$ is the single-shot variance associated with measuring $\partial_k$. 

Setting the threshold to this noise floor, $c = \sqrt{\sigma_k^2/N}$, allows us to bound the probability that the signal emerges from the noise as
\begin{align}
    \Pr_{\bm{\theta}}\left( \text{Signal} \ge \text{Noise} \right) &= \Pr_{\bm{\theta}}\left( |\partial_k \mathcal{L}| \ge \sqrt{\frac{\sigma_k^2}{N}} \right) \nonumber\\&\le \frac{N \cdot \mathrm{Var}_{\bm{\theta}}[\partial_k \mathcal{L}]}{\sigma_k^2}.
\end{align}
For optimization to succeed, this probability must be of order $\Theta(1)$. This requires the right-hand side to be lower-bounded by a constant, dictating the necessary sample complexity scales as
\begin{align}
    N \gtrsim \frac{\sigma_k^2}{\mathrm{Var}_{\bm{\theta}}[\partial_k \mathcal{L}]}.
\end{align}
This inequality formalizes the fundamental resource cost of variational optimization as a competition between the available shots $N$, the estimator noise $\sigma_k^2$, and the landscape geometry $\mathrm{Var}_{\bm{\theta}}[\partial_k \mathcal{L}]$. In a standard barren plateau, the denominator is exponentially suppressed, forcing $N$ to scale exponentially to maintain resolvability. 

This is precisely where the Quantum Tilted Loss (QTL) intervenes. By replacing the standard expectation value with the tunable objective $\mathcal{L}_\gamma$, we reshape the geometry to exponentially increase the signal strength $\mathrm{Var}_{\bm{\theta}}[\partial_k \mathcal{L}_\gamma]$. This effectively boosts the denominator of the bound, attempting to reduce the required sample complexity $N$. However, this geometric intervention simultaneously impacts the numerator $\sigma_k^2$, which we analyze below.

\subsection{Estimation Regimes: The Cost of Tilted Gradients}

While a sufficiently large tilt $|\gamma|$ recovers a non-vanishing gradient signal across the optimization landscape, we must critically assess the statistical cost of extracting this signal. In order to estimate the gradient we can use near-term regimes (sampling-based) approaches and fault-tolerant regimes approaches.

\paragraph{Near-Term Regime (NISQ).} 
In current variational algorithms targeting diagonal Hamiltonians, expectations are estimated via direct basis sampling. For the QTL, the gradient estimator takes the form of a re-weighted expectation: $\nabla \mathcal{L}_\gamma = \frac{1}{\gamma Z_\gamma} \nabla Z_\gamma$. The term $\nabla Z_\gamma$ involves the expectation $\mathbb{E}_{z \sim p_{\bm{\theta}}}[e^{\gamma E(z)} \nabla \log p_{\bm{\theta}}(z)]$. 

The variance of this estimator is dominated by the second moment of the random variable $Y = e^{\gamma E(z)} \nabla \log p_{\bm{\theta}}(z)$, which scales as
\begin{align}
    \mathbb{E}[Y^2] \approx \sum_z p_{\bm{\theta}}(z) \left(e^{\gamma E(z)}\right)^2 \|\nabla \log p_{\bm{\theta}}(z)\|^2.
\end{align}
Because of the squared exponential weight, the single-shot variance of the gradient estimator scales exponentially with the spectral gap $\Delta = \lambda_{\max} - \lambda_{\min}$ of the observable, yielding $\mathrm{Var}(\hat{g}) \propto e^{2|\gamma|\Delta}$. This mirrors the worst-case sample complexity derived for the objective itself in Theorem~\ref{thm:qtl_hoeffding_appendix}. 

A natural alternative is to employ gradient-free methods (e.g., Nelder-Mead \cite{Arrasmith2021effectofbarren}), which bypass the direct estimation of partial derivatives. However, these methods generally rely on resolving finite differences between evaluated loss values (e.g., $\mathcal{L}_\gamma(\bm{\theta}^+) - \mathcal{L}_\gamma(\bm{\theta}^-)$). As established, estimating $\mathcal{L}_\gamma$ entails an incoherent sampling complexity of $\mathcal{O}(e^{2|\gamma|\Delta})$. Thus, gradient-free approaches face the identical statistical bottleneck: the true signal of the loss difference is overwhelmed by the exponential variance of the estimator.

This exposes a fundamental \textit{trainability-estimability trade-off} in the sampling regime. Increasing $|\gamma|$ mitigates the barren plateau by amplifying the spatial signal, but it simultaneously triggers an exponential explosion in the variance of the stochastic estimator $\sigma_k^2$. Consequently, for aggressive global tilts ($|\gamma| = \Omega(n)$), the required shot count $N$ to keep $\mathcal{R} \lesssim 1$ remains exponentially large. The landscape is mathematically steep, but operationally flat under finite sampling budgets.

\paragraph{Fault-Tolerant Regime.} 
In the fault-tolerant era, this strict exponential sampling penalty can be relaxed. Rather than relying on incoherent sampling (where the statistical error scales as $1/\sqrt{N}$), coherent quantum algorithms such as \textit{Quantum Amplitude Estimation} (QAE)~\cite{brassard_2000} can be employed. QAE estimates the partition function $Z_\gamma$ and its gradients to within an additive error $\varepsilon$ using a query complexity that scales as $\mathcal{O}(1/\varepsilon)$. This provides a rigorous quadratic speedup over the $\mathcal{O}(1/\varepsilon^2)$ sample complexity required by classical measurement. Furthermore, because the reweighting operator $e^{\gamma O}$ is mathematically equivalent to an unnormalized thermal density matrix with inverse temperature $\gamma \equiv -\beta$, we can leverage fault-tolerant algorithms designed for Gibbs state preparation \cite{gibbs_sampling1, temme2011quantum}. These protocols utilize Quantum Phase Estimation to coherently encode the exponential weights $e^{\gamma E_k / 2}$ directly into quantum amplitudes. This circumvents the exponential variance explosion associated with classical reweighting and bypasses the severe inefficiencies of classical rejection sampling. Consequently, the statistical cost of evaluating the QTL is no longer strictly bottlenecked by the magnitude of the tilt, allowing the geometric benefits of landscape reshaping to be exploited without prohibitive measurement overhead.

\end{document}